\def\qed{\rule{2mm}{2mm}}
\def\independent{\perp \!\!\! \perp}
\newtheorem{theorem}{Theorem}[section]
\newtheorem{lemma}{Lemma}[section]
\theoremstyle{definition}
\newtheorem{remark}{Remark}[section]
\newtheorem{assumption}{Assumption}[section]
\DeclareMathOperator*{\var}{Var}
\DeclareMathOperator*{\cov}{Cov}
\newtheorem{myexa}{Example}[section]
\newenvironment{myexacont}
{\addtocounter{myexa}{-1}\begin{myexa}{\hspace{-1mm}\textbf{{(cont.)}}}}
  {\end{myexa}}
\begin{document}

\author{
Federico Bugni \\
Department of Economics\\
Northwestern University \\
\url{federico.bugni@northwestern.edu}
\and
Ivan A.\ Canay \\
Department of Economics\\
Northwestern University\\
\url{iacanay@northwestern.edu}
\and
Azeem M.\ Shaikh\\
Department of Economics\\
University of Chicago \\
\url{amshaikh@uchicago.edu}
\and
Max Tabord-Meehan\\
Department of Economics\\
University of Chicago \\
\url{maxtm@uchicago.edu}
}

\bigskip

\title{Inference for Cluster Randomized Experiments \\ with Non-ignorable Cluster Sizes\thanks{We thank the Coeditor and three anonymous referees for comments and suggestions that have improved the manuscript. We also would like to thank Eric Auerbach, David MacKinnon, Joe Romano and conference participants at Cowles, TSE, and ASSA 2023 for helpful comments on this paper. Xun Huang and Juri Trifonov provided excellent research assistance.}}

\maketitle

\vspace{-0.3in}

\begin{spacing}{1.2}
\begin{abstract}
This paper considers the problem of inference in cluster randomized experiments when cluster sizes are non-ignorable.  Here, by a cluster randomized experiment, we mean one in which treatment is assigned at the cluster level. By non-ignorable cluster sizes, we refer to the possibility that the individual-level average treatment effects may depend non-trivially on the cluster sizes. We frame our analysis in a super-population framework in which cluster sizes are random.  In this way, our analysis departs from earlier analyses of cluster randomized experiments in which cluster sizes are treated as non-random.  We distinguish between two different parameters of interest: the equally-weighted cluster-level average treatment effect, and the size-weighted cluster-level average treatment effect.  For each parameter, we provide methods for inference in an asymptotic framework where the number of clusters tends to infinity and treatment is assigned using a covariate-adaptive stratified randomization procedure.  We additionally permit the experimenter to sample only a subset of the units within each cluster rather than the entire cluster and demonstrate the implications of such sampling for some commonly used estimators.  A small simulation study and empirical demonstration show the practical relevance of our theoretical results.
\end{abstract}
\end{spacing}

\noindent KEYWORDS: Clustered data, randomized experiments, treatment effects, weighted least squares

\noindent JEL classification codes: C12, C14

\thispagestyle{empty} 
\newpage
\setcounter{page}{1}

\section{Introduction}
Cluster randomized experiments, in which treatment is assigned at the level of the cluster rather than at the level of the unit within a cluster, are widely used throughout economics and the social sciences more generally for the purpose of evaluating treatments or programs.  \cite{duflo2007using} survey various examples from development economics, in which clusters are villages and units within a cluster are households or individuals.  Numerous other examples can be found in, for instance, research on the effectiveness of educational interventions \citep[see, e.g.,][]{raudenbush1997statistical,schochet2013estimators,raudenbush2020randomized,schochet2021design} and research on the effectiveness of public health interventions \citep[see, e.g.,][]{turner2017review,donner2000design}. In this paper, we consider the problem of inference about the effect of a binary treatment on an outcome of interest in such experiments in a super-population framework in which cluster sizes are permitted to be random and non-ignorable. By non-ignorable cluster sizes, we refer to the possibility that the treatment effects may depend on the cluster size. 

Before proceeding, we illustrate this possibility with an example inspired by our empirical application in Section \ref{sec:application}. Suppose clusters represent public primary care facilities, or clinics, in a state. The size of the clinic may be related to the quality of care through a variety of mechanisms. For instance, it is plausible in our empirical application that patients may experience longer waiting times in smaller clinics because these may be comparatively understaffed relative to their patient populations. For this reason, patients may be less sensitive to incentives to go to the clinic more frequently. Similar considerations may, of course, apply in most other examples; we therefore view cluster sizes being non-ignorable as the rule rather than the exception.


To model this phenomenon, we adopt, in the spirit of the survey sampling literature \citep[see, e.g.,][]{lohr2021sampling}, a two-stage sampling framework, in which a set of clusters is first sampled from the population of clusters and then a set of units is sampled from the population of units within each cluster.  Importantly, in the first stage of the sampling process, each cluster may differ in terms of observed characteristics, including its size, and these characteristics may be used subsequently in the second stage of the sampling process to determine the number of units to sample from the cluster, including the possibility that all units in the cluster are sampled. We further emphasize that the sampling framework imposes no restrictions on the dependence across units within clusters. Our two-stage super-population sampling framework departs from earlier analyses of cluster randomized experiments in which cluster sizes were treated as non-random, and subsequently allows us to cohesively consider a large class of stratified treatment assignment mechanisms that can incorporate information on the cluster sizes along with other baseline covariates.  

In the context of this framework, we revisit two different parameters of interest previously considered in the literature on cluster randomized experiments \citep[see, for instance,][]{athey2017econometrics, su2021model,  wang2022two, kahan2023estimands} that differ in the way they aggregate, or average, the treatment effect across units. They differ, in particular, according to whether the units of interest are the clusters themselves or the individuals within the cluster. The first of these parameters takes the clusters themselves as the units of interest and identifies an \emph{equally-weighted} cluster-level average treatment effect. The second of these parameters takes the individuals within the clusters as the units of interest and identifies a \emph{size-weighted} cluster-level average treatment effect. When individual-level average treatment effects vary with cluster size (i.e., cluster size is non-ignorable) and cluster sizes are heterogeneous, these two parameters are generally different, though, as discussed in Remark \ref{rem:comparingestimands}, they coincide in some instances. Importantly, we show that the estimand associated with the standard difference-in-means estimator is a \emph{sample-weighted} cluster-level average treatment effect, which cannot generally be interpreted as an average treatment effect for either the clusters themselves or the individuals within the clusters. We show, however, in Section \ref{sec:parameters}, that this estimand can equal the \emph{size-weighted} or the \emph{equally-weighted} cluster-level average treatment effect for some very specific sampling designs. We argue that a clear description of whether the clusters themselves or the individuals within the clusters are of interest should therefore be at the forefront of empirical practice, yet we find that such a description is often absent. Indeed, we surveyed all articles involving a cluster randomized experiment published in the American Economic Journal: Applied Economics from $2018$ to $2022$. We document our findings in Appendix \ref{sec:lit_table}. 
From this survey, we find that most papers do not explicitly discuss their parameter of interest, and that as many as a third of the experiments conduct analyses that, when paired with their corresponding sampling design, do not necessarily recover either of the parameters that we consider in this paper. 

For each of the two parameters of interest we consider, we propose an estimator and develop the requisite distributional approximations to permit its use for inference about the parameter of interest when treatment is assigned using a covariate-adaptive stratified randomization procedure. In the case of the equally-weighted cluster-level average treatment effect, the estimator we propose takes the form of a difference-in-means of cluster averages.  This estimator may equivalently be described as the ordinary least squares estimator of the coefficient on treatment in a regression of the average outcome (within clusters) on a constant and treatment.  In the case of the size-weighted cluster-level average treatment effect, the estimator we propose takes the form of a \emph{weighted} difference-in-means of cluster averages, where the weights are proportional to cluster size.  This estimator may equivalently be described as the weighted least squares estimator of the coefficient on treatment in a regression of the individual-level outcomes on a constant and treatment with weights proportional to cluster size.\footnote{In Appendix \ref{sec:adjust}, we also briefly consider versions of both estimators which allow for linear regression adjustment using additional baseline covariates.}

Although both estimators we propose have previously been studied in the context of completely randomized experiments \citep[see][]{athey2017econometrics,su2021model}, to our knowledge we are the first to establish results for these estimators when treatment assignment is performed using a stratified covariate-adaptive randomization procedure. As in \cite{bugni/canay/shaikh:2018,bugni/canay/shaikh:2019}, this refers to randomization schemes that first stratify according to baseline covariates and then assign treatment status so as to achieve ``balance'' within each stratum (see \cite{rosenberger/lachin:2016} for a textbook treatment focused on clinical trials and \cite{duflo2007using} and \cite{bruhn/mckenzie:2008} for reviews focused on development economics.) Our results show that typical hypothesis tests constructed using a cluster-robust variance estimator are generally conservative in such cases, and as a result, we provide a simple adjustment to the standard errors which delivers asymptotically exact tests. In this sense, our inference results generalize those of \cite{bugni/canay/shaikh:2018} for individual-level randomized experiments to settings with cluster-level randomization.

By virtue of its sampling framework, our paper is distinct from a closely related and complementary literature that has analyzed cluster randomized experiments from a finite-population perspective. Important contributions to this literature include \cite{middleton2015unbiased}, \cite{athey2017econometrics}, \cite{hayes2017cluster}, \cite{de2020level}, \cite{schochet2021design}, \cite{su2021model}, and \cite{abadie2023should}.  The primary source of uncertainty in this literature is ``design-based'' uncertainty stemming from the randomness in treatment assignment, though parts of the literature additionally permit up to two additional sources of uncertainty: the randomness from sampling clusters from a finite population of clusters and the randomness from sampling only a subset of the finite number of units in each cluster.  In the context of such a sampling framework, the literature has defined finite-population counterparts to both our equally-weighted and size-weighted cluster-level average treatment effects.  See, for instance, \citet[Chapter 8]{athey2017econometrics} and \citet[Section 4]{su2021model}.  In particular, \cite{su2021model} provide estimators and methods for inference about each quantity when treatment is assigned completely at random (excluding, as a result, stratified covariate-adaptive treatment assignments). Our contribution may thus be viewed as leveraging our novel super-population sampling framework to develop results for general stratified covariate-adaptive randomization procedures: we discuss further comparisons between these approaches in Remark \ref{rem:finpop_contrast}. With this in mind, our results may be especially relevant for the analysis of data that are sampled from a well-defined larger population; \cite{muralidharan2017experimentation} argue that at least 30\% of the experiments they examined in economics satisfied this criterion. 



Our paper is also related to a large literature on the analysis of clustered data (not necessarily from experiments) in econometrics and statistics.  Prominent contributions to this literature include \cite{liang1986longitudinal}, \cite{hansen2007asymptotic}, \cite{hansen/lee:2019}, and \cite{djogbenou2019asymptotic}. Additional references can be found in the surveys \cite{cameron2015practitioner} and \cite{mackinnon2023cluster}.  These papers are designed as methods for inference for parameters defined via linear models or estimating equations rather than parameters like our equally-weighted or size-weighed cluster-level average treatment effects that are defined explicitly in terms of potential outcomes.  Importantly, in almost all of these papers, the sampling framework treats cluster sizes as non-random, though we note that in some cases the results are rich enough to permit the distribution of the data to vary across clusters: further discussion is provided in Remark \ref{rem:det_cluster}. Finally, none of these papers seem to explicitly consider the additional complications stemming from sampling only a subset of the units within each cluster. 

The remainder of our paper is organized as follows.  Section \ref{sec:setup} describes our setup and notation, including a formal description of our sampling framework and two parameters of interest.  We then propose in Section \ref{sec:main} estimators for each of these two quantities and develop the requisite distributional approximations to use them for inference about each quantity. In Section \ref{sec:sims}, we demonstrate the finite-sample behavior of our proposed estimators in a small simulation study. Finally, in Section \ref{sec:application}, we conduct an empirical exercise to demonstrate the practical relevance of our findings. Proofs of all results are included in the Appendix.

\section{Setup and Notation} \label{sec:setup}
\subsection{Notation and Sampling Framework}
Let $Y_{i,g}$ denote the (observed) outcome of the $i$th unit in the $g$th cluster, $A_g$ denote an indicator for whether the $g$th cluster is treated or not, $Z_g$ denote observed baseline covariates for the $g$th cluster, and $N_g$ the size of the $g$th cluster.  Further denote by $Y_{i,g}(1)$ the potential outcome of the $i$th unit in the $g$th cluster if treated and by $Y_{i,g}(0)$ the potential outcome of the $i$th unit in the $g$th cluster if not treated. As usual, the (observed) outcome and potential outcomes are related to treatment assignment by the relationship 
\begin{equation} \label{eq:obsY}
Y_{i,g} = Y_{i,g}(1)A_g + Y_{i,g}(0)(1 - A_g)~.
\end{equation}

We model the distribution of the data described above in two parts: a super-population sampling framework for the clusters and an assignment mechanism which assigns the clusters to treatments. The sampling framework itself can be described in two stages. In the first stage, an i.i.d.\ sample of $G$ clusters is drawn from a distribution of clusters. In the second stage, a subset of the individual units within each cluster is sampled.  A key feature of this framework is that the cluster size $N_g$ is modeled as a random variable in the same way as other cluster characteristics $Z_g$. While the clusters are (ex-ante) identically distributed, we note that they may exhibit heterogeneity in terms of their (ex-post) realizations of $N_g$ and $Z_g$. The second sampling stage allows for settings in which the analyst does not observe all of the units within a cluster. Define $\mathcal{M}_g$ to be the subset of $\{1, \ldots, N_g\}$ corresponding to the observations within the $g$th cluster that are sampled by the researcher. We emphasize that a realization of $\mathcal{M}_g$ is a \emph{set} whose cardinality we denote by $|\mathcal{M}_g|$, whereas a realization of $N_g$ is a positive integer. For example, in the event that all observations in a cluster are sampled, $\mathcal{M}_g = \{1, \ldots, N_g\}$ and $|\mathcal{M}_g| = N_g$. Once the sample of clusters is realized, the experiment assigns treatments $A^{(G)} := (A_g : 1 \leq g \leq G)$ using an assignment rule that stratifies according to baseline covariates $Z_g$ and cluster sizes $N_g$. Formally, denote by $P_G$ the distribution of the observed data $$(((Y_{i,g} : i \in \mathcal{M}_g), A_g, Z_g, N_g) : 1 \leq g \leq G )$$ that arises from sampling and treatment assignment, and by $ Q_G$ the distribution of $$W^{(G)} := (((Y_{i,g}(1),Y_{i,g}(0) : 1 \le i \le N_g),\mathcal{M}_g, Z_g, N_g) : 1 \leq g \leq G )~.$$  Note that the observed distribution $P_G$ is determined jointly by \eqref{eq:obsY} together with the distribution of $A^{(G)}$ and $Q_G$, so we will state our assumptions below in terms of these two quantities. 

We begin by describing our assumptions on the distribution of $A^{(G)}$. Strata are constructed from the observed, baseline covariates $Z_g$ and cluster sizes $N_g$ using a function $S: \text{supp}((Z_g, N_g)) \rightarrow \mathcal{S}$, where $\mathcal{S}$ is a finite set. For $1 \le g \le G$, let $S_g = S(Z_g, N_g)$ and denote by $S^{(G)}$ the vector of strata $(S_1, S_2, \ldots, S_G)$. In what follows, we rule out trivial strata by assuming that $p(s) := P\{S_g = s\} > 0$ for all $s \in \mathcal{S}$.  For $s \in \mathcal{S}$, let
\begin{equation}\label{eqn:D_G(s)}
    D_G(s) := \sum_{1\leq g \leq G} (I\{A_g = 1\} - \pi) I\{S_g = s\},
\end{equation}
where $\pi \in (0,1)$ is the “target” proportion of clusters to assign to treatment in each stratum. Note that $D_G(s)$ measures the amount of imbalance in stratum $s$ relative to the target proportion $\pi$. Our requirements on the treatment assignment mechanism are then summarized as follows:

\begin{assumption}\label{ass:assignment}
The treatment assignment mechanism is such that
\begin{enumerate}
    \item[(a)] $W^{(G)} \independent A^{(G)} | S^{(G)}$
    \item[(b)] $\left\{ \left\{\frac{D_G(s)}{\sqrt{G}} \right\}_{s \in \mathcal{S}}  \Big| S^{(G)}\right\} \xrightarrow{d} N(0, \Sigma_D)$ a.s., where
    \begin{equation*}
        \Sigma_D = \text{diag}\{p(s)\tau(s): s\in\mathcal{S} \}
    \end{equation*}
    with $0\leq \tau(s) \leq \pi (1-\pi)$ for all $s\in\mathcal{S}$.
\end{enumerate}
\end{assumption}

Assumption \ref{ass:assignment} mirrors the assumption on assignment mechanisms considered in \cite{bugni/canay/shaikh:2018} for individual-level randomized experiments. Assumption \ref{ass:assignment}.(a) requires that the assignment mechanism is a function only of the strata and an exogenous randomization device. Assumption \ref{ass:assignment}.(b) requires that the randomization mechanism assigns treatments within each stratum so that the fraction of units being treated has a well-behaved limiting distribution centered around the target proportion $\pi$. For each stratum $s \in \mathcal{S}$, the parameter $\tau(s) \in [0,1]$ determines the amount of dispersion that the treatment assignment mechanism allows on the fraction of units assigned to the treatment in that stratum. A lower value of $\tau(s)$ implies that the treatment assignment mechanism imposes a higher degree of ``balance'' or ``control'' of the treatment assignment proportion relative to its desired target value. \cite{bugni/canay/shaikh:2018} provide several important examples of assignment mechanisms satisfying this assumption which are used routinely in economics. In particular, Assumption \ref{ass:assignment} is satisfied by stratified block randomization \citep[see, e.g.,][]{angelucci2015microcredit, attanasio2015impacts, duflo2015education}, which assigns exactly a fraction $\pi$ of units within each stratum to treatment, at random. When $\tau(s) = 0$ (as is the case for stratified block randomization) for all $s \in \mathcal{S}$, we say that the assignment mechanism achieves ``strong balance." Note further that the assumption also applies in settings without stratification, in which case $|\mathcal{S}| = 1$. Despite its broad applicability, Assumption \ref{ass:assignment} nevertheless precludes treatment assignment mechanisms with many ``small" strata, by the virtue of assuming that $\mathcal{S}$ is a fixed finite set. This precludes, for instance, ``matched pairs'' designs \citep[see, e.g.,][]{banerjee2015miracle,crepon2015estimating}, the analysis of which can be found in the companion paper \cite{bai2022pairs}.

We now describe our assumptions on $Q_G$.  In order to do so, it is useful to introduce some further notation.  To this end, define $R_G(\mathcal{M}^{(G)}, Z^{(G)}, N^{(G)})$ to be the distribution of $$((Y_{i,g}(1), Y_{i,g}(0) : 1 \le i \le N_g) : 1 \leq g \leq G) ~\big |~ \mathcal{M}^{(G)}, Z^{(G)}, N^{(G)}~,$$ where $\mathcal{M}^{(G)} := (\mathcal{M}_g : 1 \leq g \leq G)$, $Z^{(G)} := (Z_g : 1 \leq g \leq G)$ and $N^{(G)} := (N_g : 1 \leq g \leq G)$.  Note that $Q_G$ is completely determined by $R_G(\mathcal{M}^{(G)}, Z^{(G)}, N^{(G)})$ and the distribution of $(\mathcal{M}^{(G)}, Z^{(G)}, N^{(G)})$.  
While $N_g$ obviously determines the length of $(Y_{i,g}(1), Y_{i,g}(0) : 1 \le i \le N_g)$, we emphasize that it may affect the location and shape of its distribution as well. In this paper, we say cluster sizes are {\it ignorable} whenever the individual-level average treatment effect does not depend on the cluster size, in the sense that
\begin{equation}
    P\{E[Y_{i,g}(1) - Y_{i,g}(0)|N_g] = E[Y_{i,g}(1) - Y_{i,g}(0)]\text{ for all }1\leq i \leq N_g\}=1~~\text{ for all }1\leq g \leq G.
    \label{eq:ignorableDefn}
\end{equation}
Consequently, we say that cluster sizes are \emph{non-ignorable} whenever \eqref{eq:ignorableDefn} fails. Example \ref{ex:numerical} provides a simple illustration of non-ignorability, and Remark \ref{rem:comparingestimands} provides some related discussion of the consequences of assuming clusters are, in fact, ignorable.
Finally, for $a \in \{0,1\}$, define $$\bar Y_g(a) := \frac{1}{|\mathcal{M}_g|} \sum_{i \in \mathcal{M}_g} Y_{i,g}(a)~.$$  The following assumption states our requirements on $Q_G$ using this notation.

\begin{assumption} \label{ass:QG}
The distribution $Q_G$ is such that \vspace{-.25cm}
\begin{enumerate}[(a)]
\item $\{(\mathcal{M}_g,Z_g,N_g), 1 \leq g \leq G\}$ is an i.i.d.\ sequence of random variables.
\item The distribution of $((Y_{i,g}(1), Y_{i,g}(0) : 1 \le i \le N_g) : 1 \leq g \leq G) ~\big |~ \mathcal{M}^{(G)}, Z^{(G)}, N^{(G)}$ can be factored as $$R_G(\mathcal{M}^{(G)}, Z^{(G)}, N^{(G)}) = \prod_{1 \leq g \leq G} R(\mathcal{M}_g,Z_g,N_g)~,$$
where $R\left( \mathcal{M}_{g},Z_{g},N_{g}\right) $ denotes the distribution of $(Y_{i,g}(1),Y_{i,g}(0): 1 \le i \le N_g)$ conditional on $\{\mathcal{M} _{g},Z_{g},N_{g} \}$.
\item $\mathcal{M}_g \independent (Y_{i,g}(1),Y_{i,g}(0) : 1 \leq i \leq N_g) ~ \big | ~ Z_g, N_g$ for all $1 \leq g \leq G$, that is, $R(\mathcal{M}_g,Z_g,N_g)=R(Z_g,N_g)$.
\item For $a \in \{0,1\}$ and $1 \leq g \leq G$, $$E[\bar Y_g(a)|N_g] = E\left [\frac{1}{N_g}\sum_{1 \leq i \leq N_g} Y_{i,g}(a) \Big |N_g \right] ~\text{w.p.1} ~.$$

\item $P\{|\mathcal{M}_g| \geq 1\} = 1$ and $E[N_g^2] < \infty$.
\item For some $C < \infty$, $P\{E[Y^2_{i,g}(a)| N_g, Z_g] \leq C \text{ for all } 1 \leq i \leq N_g \} = 1$ for all $a \in \{0,1\}$ and $1 \leq g \leq G$.
\end{enumerate}
\end{assumption}

\noindent Assumptions \ref{ass:QG}.(a)--(b) formalize the idea that our data consist of an i.i.d.\ sample of clusters, where the cluster sizes are themselves random and possibly related to potential outcomes. An important implication of these two assumptions for our purposes is that 
\begin{equation} \label{eq:iidclusters}
\left \{(\bar{Y}_g(1), \bar{Y}_g(0), |\mathcal{M}_g|, Z_g, N_g \right), 1 \leq g \leq G\}
\end{equation}
is an i.i.d.\ sequence of random variables, as established by Lemma \ref{lemma:iid} in the Appendix. 

Assumptions \ref{ass:QG}.(c)--(d) impose high-level restrictions on the second stage of the sampling framework. Assumption \ref{ass:QG}.(c) allows the subset of observations sampled by the experimenter to depend on $Z_g$ and $N_g$, but rules out dependence on the potential outcomes within the cluster itself.
Assumption \ref{ass:QG}.(d) is a high-level assumption which guarantees that we can extrapolate from the observations that are sampled to the observations that are not sampled. Note that Assumptions \ref{ass:QG}.(c)--(d) are trivially satisfied whenever $\mathcal{M}_g = \{1, \ldots, N_g\}$ for all $1 \leq g \leq G$ with probability one, i.e., whenever all observations within each cluster are always sampled. Assumption \ref{ass:QG}.(d) is also satisfied whenever Assumption \ref{ass:QG}.(c) holds and there is sufficient homogeneity across the observations within each cluster in the sense that $P\{ E[Y_{i,g}(a)|N_g, Z_g] = E[Y_{j,g}(a)|N_g, Z_g] \text{ for all } 1 \leq i, j \leq N_g \} = 1$ for $a \in \{0,1\}$. Finally, we show in Lemma \ref{prop:sample} below that if $\mathcal{M}_g$ is drawn as a random sample without replacement from $\{1, 2, \ldots, N_g\}$ in an appropriate sense, then Assumptions \ref{ass:QG}.(c)--(d) are also satisfied.

Assumptions \ref{ass:QG}.(e)--(f) impose some mild regularity on the (conditional) moments of the distribution of cluster sizes and potential outcomes, in order to permit the application of relevant laws of large numbers and central limit theorems. Note that Assumption \ref{ass:QG}.(e) does not rule out the possibility of observing arbitrarily large clusters but does place restrictions on the frequency of extremely large realizations. For instance, two consequences of Assumptions \ref{ass:QG}.(a) and (e) are that\footnote{The first is an immediate consequence of the law of large numbers and the Continuous Mapping Theorem. The second follows from Lemma S.1.1 in \cite{bai2021inference}.}
\[\frac{\sum_{1\le g \le G}N_g^2}{\sum_{1\le g\le G}N_g} = O_P(1)~,\]
and
\[\frac{\max_{1\le g \le G}N_g^2}{\sum_{1 \le g \le G}N_g} \xrightarrow{P} 0~,\]
which mirror heterogeneity restrictions imposed in earlier work on the analysis of clustered data when cluster sizes are modeled as non-random \citep[see, for example, Assumption 2 in][]{hansen/lee:2019}, but are modestly stronger than the heterogeneity restrictions considered in recent work studying cluster randomized experiments from a finite-population perspective \citep[see, for instance, Theorem 1 in][]{su2021model}. We use Assumption \ref{ass:QG}(e) extensively when establishing asymptotic normality in Theorem \ref{thm:mainSECT}; recent work by \cite{sasaki2022non} and \cite{chiang2023cluster}, however, suggests that one may be able to sometimes obtain asymptotic normality even when $E[N_g^2] = \infty$, provided that certain delicate conditions about the tail behavior of $N_g$ are satisfied. 

\begin{lemma}\label{prop:sample}
Suppose that $\mathcal{M}_g \independent (Y_{i,g}(1),Y_{i,g}(0) : 1 \leq i \leq N_g) ~ \big | ~ Z_g, N_g$ for all $1 \le g \le G$, and that, conditionally on $(Z_g, N_g, |\mathcal{M}_g|)$, $\mathcal{M}_g$ is drawn uniformly at random from all possible subsets of size $|\mathcal{M}_g|$ from $\{1, 2, \ldots. N_g\}$. Then, Assumptions \ref{ass:QG}.(c)--(d) are satisfied.
\end{lemma}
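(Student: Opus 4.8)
The plan is to note first that Assumption \ref{ass:QG}.(c) is literally the first hypothesis of the lemma, so no work is required there; the entire content lies in verifying Assumption \ref{ass:QG}.(d). Fix $a \in \{0,1\}$ and abbreviate $Y_i := Y_{i,g}(a)$ and $m := |\mathcal{M}_g|$, writing $Y := (Y_{i,g}(1), Y_{i,g}(0) : 1 \le i \le N_g)$ for the full vector of potential outcomes in cluster $g$. Because $P\{|\mathcal{M}_g| \ge 1\} = 1$ by Assumption \ref{ass:QG}.(e), the averages $\bar Y_g(a)$ are well defined almost surely.

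The first step is to upgrade the conditional independence so that it holds after also conditioning on $m$ and on the potential outcomes themselves. Concretely, I would show that, conditional on $(Z_g, N_g, m, Y)$, the set $\mathcal{M}_g$ remains uniformly distributed over the $\binom{N_g}{m}$ subsets of $\{1, \ldots, N_g\}$ of size $m$. The hypothesis $\mathcal{M}_g \independent Y \mid Z_g, N_g$ means that $P\{\mathcal{M}_g = A \mid Z_g, N_g, Y\} = P\{\mathcal{M}_g = A \mid Z_g, N_g\}$ does not depend on $Y$; dividing by the conditional probability of the event $\{|\mathcal{M}_g| = m\}$ (which likewise does not depend on $Y$) shows that $P\{\mathcal{M}_g = A \mid Z_g, N_g, m, Y\} = P\{\mathcal{M}_g = A \mid Z_g, N_g, m\}$ for every $A$ with $|A| = m$. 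By the second hypothesis of the lemma the latter equals $1/\binom{N_g}{m}$, which delivers the claim.

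Given this, the second step is a direct computation via the law of iterated expectations. Conditioning inside on $(Z_g, N_g, m, Y)$, the potential outcomes and $m$ are fixed and the only remaining randomness is in which size-$m$ subset equals $\mathcal{M}_g$. By the exchangeability of simple random sampling without replacement, each index $i$ satisfies $P\{i \in \mathcal{M}_g \mid Z_g, N_g, m, Y\} = m/N_g$, so that
\[
E\Big[\tfrac{1}{m}\sum_{i \in \mathcal{M}_g} Y_i \,\Big|\, Z_g, N_g, m, Y\Big] = \tfrac{1}{m}\sum_{i=1}^{N_g} Y_i\, P\{i \in \mathcal{M}_g \mid Z_g, N_g, m, Y\} = \tfrac{1}{N_g}\sum_{i=1}^{N_g} Y_i~.
\]
Since the resulting expression no longer depends on $m$ or $Z_g$, taking the outer expectation conditional on $N_g$ yields $E[\bar Y_g(a) \mid N_g] = E[\tfrac{1}{N_g}\sum_{i=1}^{N_g} Y_{i,g}(a) \mid N_g]$ with probability one, which is exactly Assumption \ref{ass:QG}.(d).

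I expect the only genuine obstacle to be the first step: carefully justifying that the conditional independence assumption lets one move the potential outcomes into the conditioning set without disturbing the uniform-over-subsets law of $\mathcal{M}_g$. Everything afterward reduces to the standard symmetry fact that an element belongs to a uniformly drawn size-$m$ subset with probability $m/N_g$, combined with iterated expectations. Some care with measurability is warranted since $m$ is itself random, but this can be handled by arguing on each event $\{|\mathcal{M}_g| = m\}$ separately for fixed $m$ and then aggregating.
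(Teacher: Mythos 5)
Your proposal is correct and follows essentially the same route as the paper's proof: both apply the law of iterated expectations with inner conditioning on $(Z_g, N_g, |\mathcal{M}_g|)$ and the full vector of potential outcomes, and then identify the inner expectation as the mean of a simple random sample of size $|\mathcal{M}_g|$ drawn without replacement from $(Y_{i,g}(a) : 1 \le i \le N_g)$. The only differences are ones of detail, both in your favor: the paper cites \citet[Theorem 2.1]{cochran2007sampling} for the unbiasedness of that sample mean, whereas you derive it from the inclusion probabilities $P\{i \in \mathcal{M}_g \mid \cdot\} = |\mathcal{M}_g|/N_g$, and you make explicit the step---left implicit in the paper---that the conditional independence hypothesis is precisely what allows the uniform-over-subsets law of $\mathcal{M}_g$ to survive the enlargement of the conditioning set to include the potential outcomes.
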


\begin{remark}\label{rem:drifting}
We could in principle modify our framework so that the distribution of cluster sizes is allowed to depend on the number of clusters $G$.  By doing so, we would be able to weaken Assumption \ref{ass:QG}.(e) at the cost of strengthening Assumption \ref{ass:QG}.(f) to require, for example, uniformly bounded $2+\delta$ moments for some $\delta > 0$. Such a modification, however, would complicate the exposition and the resulting procedures would ultimately be the same.  We therefore see no apparent benefit and do not pursue it further in this paper.
\end{remark}

\begin{remark}\label{rem:det_cluster}
   An attractive feature of our framework is that, by virtue of modeling cluster sizes as random, it is straightforward to permit dependence between the cluster size and other features of the cluster, such as the distribution of potential outcomes within the cluster.  In this way, our setting departs from other frameworks in the literature on clustered data in which the cluster sizes are treated as deterministic: see, for example, \cite{hansen/lee:2019}.  We note, however, that the results in this literature permit the distribution of the data across clusters to be non-identically distributed, and in fact the literature has noted that the method described in \cite{liang1986longitudinal} may fail when cluster sizes are non-ignorable: see, in particular, \cite{benhin2005mean}. Although it is possible that these related papers could be applied to our framework by first conditioning on the cluster sizes, results obtained in this way would necessarily hold conditionally on the cluster sizes, whereas our results are unconditional.  
\end{remark}

\subsection{Parameters of Interest}\label{sec:parameters}
In settings with cluster data, there are multiple ways to aggregate, or average, the heterogeneous treatment effect $Y_{i,g}(1) - Y_{i,g}(0)$. In particular, an important consideration is whether the units of interest are the clusters themselves or the individuals within the cluster. This distinction is precisely what motivates the two parameters of interest we study in this paper. Before introducing these parameters, we present a simple example that will help us illustrate the differences. 

\begin{myexa}\label{ex:numerical}
We revisit the example described in the introduction, where clusters represent primary care facilities, or clinics, within a state, as in the empirical application in Section \ref{sec:application}. Let's suppose there are two types of clinics potentially exposed to a treatment: ``big'' clinics with $N_{g}=40$ regular patients, and ``small'' clinics with $N_{g}=10$ regular patients. Suppose further that cluster size is non-ignorable in that $Y_{i,g}\left(1\right) -Y_{i,g}\left( 0\right) =1$ for all patients in a ``big'' clinic, $Y_{i,g}\left( 1\right) -Y_{i,g}\left( 0\right) =-2$ for all patients
in a ``small'' clinic, and that both types of clinics are equally likely, i.e., 
\begin{equation*}
P\left\{N_{g}=40\right\} =P\left\{ N_{g}=10\right\} =1/2~.
\end{equation*}
Policy makers evaluating the adoption of the treatment may arrive at different conclusions depending on their objective. For instance, if clinics are considered as the unit of analysis, a policy maker may perceive the treatment as harmful, as half of the clinics experience a positive treatment effect of $1$, and the other half experience a negative treatment effect of $-2$, resulting in an ``average'' treatment effect of $-1/2$. Conversely, if individuals (patients) within the state are considered as the unit of analysis, the treatment may appear beneficial. In this case, $4/5$ of the patients experience a positive treatment effect of $1$, while only $1/5$ experience a negative treatment effect of $-2$, resulting in an ``average'' treatment effect of $2/5$. The distinction between these two different ways of averaging motivates the formal definitions of the parameters that we consider below.
\end{myexa}

Example \ref{ex:numerical} illustrates that characterizing an ``average'' treatment effect in settings with cluster data depends on whether clusters or individuals are the focus of the analysis. This distinction, as illustrated in the example, is relevant whenever clusters feature cluster size heterogeneity and average treatment effect heterogeneity with respect to cluster size (even if the treatment effect is assumed to be homogeneous within clusters as in Example \ref{ex:numerical}). When there is average treatment effect heterogeneity within and across clusters, the question becomes whether the aggregated information in the clusters, i.e., $(1/N_g)\sum_{1 \leq i \leq N_g} Y_{i,g}(a)$, should be weighted by the size of the cluster or not. 

Motivated by Example \ref{ex:numerical}, we consider two different parameters of interest: one that considers the clusters as the units of interest, and one that considers the individuals as the units of interest. Both of these parameters can be written in the form 
\begin{equation} \label{eq:general-weights}
E\left [\omega_g \left ( \frac{1}{N_g} \sum_{1 \leq i \leq N_g} \left(Y_{i,g}(1) - Y_{i,g}(0)\right) \right ) \right ]
\end{equation}
for different choices of (possibly random) weights $\omega_g, 1 \leq g \leq G$ satisfying $E[\omega_g] = 1$.  The first parameter of interest corresponds to the choice of $\omega_g = 1$, thus weighting the average effect of the treatment across clusters equally:
\begin{equation} \label{eq:ECA}
\theta_1(Q_G) := E\left [ \frac{1}{N_g} \sum_{1 \leq i \leq N_g} \left(Y_{i,g}(1) - Y_{i,g}(0)\right) \right ]~.
\end{equation}
We refer to this quantity as the equally-weighted cluster-level average treatment effect. $\theta_1(Q_G)$ can be thought of as the average treatment effect where the clusters themselves are the units of interest. The second parameter of interest corresponds to the choice of $\omega_g = N_g/E[N_g]$, thus weighting the average effect of the treatment across clusters in proportion to their size:
\begin{equation} \label{eq:SECT}
\theta_2(Q_G) := E\left [ \frac{1}{E[N_g]} \sum_{1 \leq i \leq N_g} \left(Y_{i,g}(1) - Y_{i,g}(0)\right) \right ]~.
\end{equation}
We refer to this quantity as the size-weighted cluster-level average treatment effect. $\theta_2(Q_G)$ can be thought of as the average treatment effect where individuals are the units of interest. Note that Assumptions \ref{ass:QG}.(a)--(b) imply that we may express both $\theta_1(Q_G)$ and $\theta_2(Q_G)$ as a function of $R$ and the common distribution of $(\mathcal{M}_g,Z_g,N_g)$.  In particular, neither quantity depends on $g$ or $G$ and so in what follows we simply denote $\theta_1 = \theta_1(Q_G)$, $\theta_2 = \theta_2(Q_G)$. 

If treatment effects are heterogeneous with respect to cluster size and cluster sizes are themselves heterogeneous, then $\theta_1$ and $\theta_2$ are indeed distinct parameters. We illustrate this in the context of Example \ref{ex:numerical}. 

\begin{myexacont} 
Recall the setting of Example \ref{ex:numerical}. The equally-weighted cluster-level average treatment effect simply equals 
\begin{align*}
\theta_1 &= P\{N_g = 10\}E\left[ \frac{1}{N_{g}}\sum_{1 \le i \le N_g}(Y_{i,g}\left( 1\right)
-Y_{i,g}\left( 0\right)) ~\Biggr |~N_{g}=10\right] \\
&\qquad + P\{N_g = 40\}E\left[ \frac{1}{N_{g}}\sum_{1 \le i \le N_g}(Y_{i,g}\left( 1\right)
-Y_{i,g}\left( 0\right)) ~\Biggr |~N_{g}=40\right]\\
&= \left( 1/2\right) \times -2+\left( 1/2\right) \times 1=-1/2~.
\end{align*}%
The parameter $\theta_1$ captures an average treatment effect where the clusters are the units of interest since both treatment effects $1$ and $-2$ receive the same weight (both types of clinics are equally likely). The size-weighted cluster-level average treatment effect, in turn, equals
\begin{align*}
\theta_2 &= \frac{P\{N_g = 10\}}{E[N_g]}E\left[\sum_{1 \le i \le N_g}(Y_{i,g}\left( 1\right)
-Y_{i,g}\left( 0\right)) ~\Biggr |~N_{g}=10\right] \\
&\qquad + \frac{P\{N_g = 40\}}{E[N_g]}E\left[\sum_{1 \le i \le N_g}(Y_{i,g}\left( 1\right)
-Y_{i,g}\left( 0\right)) ~\Biggr |~N_{g}=40\right]\\
&= \frac{(1/2)}{25}\times -20 + \frac{(1/2)}{25}\times 40 =2/5~.
\end{align*}%
The parameter $\theta_2$ captures an average treatment effect where the individuals are the units of interest since both treatment effects, $1$ and $-2$, are weighted by the proportion of the patients in the state that attend each type of clinic. 
\end{myexacont}

\begin{remark} \label{rem:comparingestimands}
While we generally expect $\theta_1$ and $\theta_2$ to be distinct, they are equivalent in some special cases. For example, if all clusters are of the same fixed size $k$, i.e., $P\{N_g = k\} = 1$, then it follows immediately that $\theta_1 = \theta_2$. Alternatively, if treatment effects are constant, so that $P\{Y_{i,g}(1) - Y_{i,g}(0) = \tau \text{ for all } 1 \leq i \leq N_g\} = 1$, then $\theta_1 = \theta_2$. Beyond these two cases, we have $\theta _{1}=\theta _{2}$ whenever cluster sizes are ignorable in the sense of \eqref{eq:ignorableDefn} and the average treatment effects are homogeneous in the sense that $P\{E[Y_{i,g}(1) - Y_{i,g}(0)] = E[Y_{j,g}(1) - Y_{j,g}(0)]$ for all $1 \le i,j \le N_g\} = 1$.  Note that this last statement is not generally true if one replaces \eqref{eq:ignorableDefn} with 
\[P\{E[Y_{i,g}(1) - Y_{i,g}(0)|N_g, X_g] = E[Y_{i,g}(1) - Y_{i,g}(0)|X_g]\text{ for all }1\leq i \leq N_g\}=1 \text{ for all } 1\leq g \leq G~,\]
where $X_g$ represents a collection of cluster-level characteristics (either observed or unobserved). In this sense, even if $N_g$ is simply a proxy for other characteristics $X_g$, it still plays an important role in our analysis through the distinction between $\theta_1$ and $\theta_2$.
\end{remark}

Not all estimands that arise from commonly used empirical strategies take the form in \eqref{eq:general-weights}. For example, as we show in Theorem \ref{theorem:limOLS} in the next section, the usual difference-in-means estimator consistently estimates the following population parameter, 
\begin{equation}\label{eq:limit-DIM}
        \vartheta := E\left [ \frac{1}{E[|\mathcal{M}_g|]} \sum_{i \in \mathcal{M}_g} \left(Y_{i,g}(1) - Y_{i,g}(0)\right) \right ]~.  
\end{equation}
This parameter corresponds to a \emph{sample}-weighted cluster-level average treatment effect and, without assumptions on the sampling process, does not generally identify either an average treatment effect when the clusters are the units of interest or an average treatment effect when the individuals are the units of interest. In other words, when treatment effects are heterogeneous and cluster sizes are non-ignorable,  $\vartheta$ need not equal either $\theta_1$ defined in \eqref{eq:ECA} or $\theta_2$ defined in \eqref{eq:SECT}. Two specific sampling designs for which $\vartheta$ equals either $\theta_1$ or $\theta_2$ are worth highlighting. First, if $P\{|\mathcal{M}_g| = k\} = 1$ for all $1 \leq g \leq G$, then $\vartheta$ is equal to $\theta_1$. This is intuitive, as in this case, $\vartheta$ gives equal weights to each cluster and thus behaves as if the units of interest are the clusters themselves. Second, if $|\mathcal{M}_g|$ is a constant fraction of $N_g$ for all $1 \leq g \leq G$,  i.e., $P\{|\mathcal{M}_g| = \gamma N_g \} = 1$ for some $0 < \gamma \le 1$,  then $\vartheta$ is equal to $\theta_2$. This is also intuitive, as in this case, the relative weights of individuals in the parameter $\vartheta$ coincide with the weights they would have obtained if the units of interest were the individuals. In general, as illustrated in the following example, $\vartheta$ may not equal either $\theta_1$ or $\theta_2$.

\begin{myexacont}
Recall the setting of Example \ref{ex:numerical}. Suppose further that the experimenter samples $|\mathcal{M}_g| = 5$ patients at random without replacement from each ``small'' clinic, and $|\mathcal{M}_g| = 10$ patients at random without replacement from each ``big'' clinic. Importantly, note that this sampling scheme does not maintain the relative proportions of these clinics that exist at the population level. It is now straightforward to show that 
\begin{align*}
\vartheta &= P\{N_g = 10\}\frac{E[|\mathcal{M}_g| ~|~N_g=10]}{E[|\mathcal{M}_g|]}E\left[\frac{1}{N_g}\sum_{1 \le i \le N_g}(Y_{i,g}\left( 1\right)-Y_{i,g}\left( 0\right)) ~\Biggr |~N_{g}=10\right] \\
&\qquad + P\{N_g = 40\}\frac{E[|\mathcal{M}_g| ~|~N_g=40]}{E[|\mathcal{M}_g|]}E\left[\frac{1}{N_g}\sum_{1 \le i \le N_g}(Y_{i,g}\left( 1\right)
-Y_{i,g}\left( 0\right)) ~\Biggr |~N_{g}=40\right]\\
&= (1/2)\times \frac{5}{15/2}\times -2 + (1/2)\times\frac{10}{15/2}\times 1 =0~,
\end{align*}%
where the first equality exploits Assumption \ref{ass:QG}(c), so that $|\mathcal{M}_g|$ and $(Y_{i,g}\left( 1\right)-Y_{i,g}\left( 0\right))$ are independent conditional on $N_g$, and Assumption \ref{ass:QG}(d), so that $\frac{1}{|\mathcal{M}_g|}\sum_{i \in \mathcal M_g}$ can be replaced by $\frac{1}{N_g}\sum_{1 \le i \le N_g}$. We conclude that $\vartheta$ is not equal to either $\theta_1$ or $\theta_2$. Indeed, it is straightforward to show that as we vary the distribution of $|\mathcal{M}_g|$ conditional on $N_g$, $\vartheta$ could take any value between $-1.72$ and $0.93$ in this example, so that it could be smaller, in-between, or larger than both $\theta_1=-1/2$ and $\theta_2=2/5$.
\end{myexacont}

\section{Main Results} \label{sec:main}
\subsection{Asymptotic Behavior of the Difference-in-Means Estimator}\label{sec:limOLS}

Given its central role in the analysis of randomized experiments, we begin this section by studying the asymptotic behavior of the difference-in-means estimator 
\begin{equation} \label{eq:theta1hat}
\hat \theta^{\rm alt}_{G} := \frac{\sum_{1 \leq g \leq G} \sum_{i \in \mathcal{M}_g} Y_{i,g} A_g}{\sum_{1 \leq g \leq G} |\mathcal{M}_g| A_g} - \frac{\sum_{1 \leq g \leq G} \sum_{i \in \mathcal{M}_g} Y_{i,g} (1- A_g)}{\sum_{1 \leq g \leq G} |\mathcal{M}_g| (1- A_g)}~.{}
\end{equation}
Note that $\hat \theta^{\rm alt}_{G}$ may be obtained as the estimator of the coefficient on $A_g$ in the following ordinary least squares regression:
$$ \texttt{regress } Y_{i,g} \texttt{ on } \text{constant} +  A_g~.$$

The following theorem derives the probability limit of this estimator:
\begin{theorem} \label{theorem:limOLS}
Under Assumptions \ref{ass:assignment} and \ref{ass:QG},
\[\hat \theta^{\rm alt}_{G} \stackrel{p}{\rightarrow} E\left [ \frac{1}{E[|\mathcal{M}_g|]} \sum_{i \in \mathcal{M}_g} \left(Y_{i,g}(1) - Y_{i,g}(0)\right) \right ] = \vartheta\] 
as $G \rightarrow \infty$.
\end{theorem}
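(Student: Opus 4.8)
The plan is to reduce $\hat\theta^{\rm alt}_{G}$ to a ratio of sample averages and to compute the probability limit of each piece. First I would invoke \eqref{eq:obsY}: on the event $A_g = 1$ one has $\sum_{i\in\mathcal{M}_g} Y_{i,g} = \sum_{i\in\mathcal{M}_g} Y_{i,g}(1) = |\mathcal{M}_g|\bar Y_g(1)$, and symmetrically on $A_g = 0$. Writing $V_g := (\bar Y_g(1), \bar Y_g(0), |\mathcal{M}_g|, Z_g, N_g)$, which is i.i.d.\ by \eqref{eq:iidclusters}, the treated sample mean becomes $\big(\tfrac1G\sum_g |\mathcal{M}_g|\bar Y_g(1) A_g\big)\big/\big(\tfrac1G\sum_g |\mathcal{M}_g| A_g\big)$, and the control mean is analogous with $A_g$ replaced by $1-A_g$. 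I would then find the limits of the four (normalized) numerators and denominators separately and combine them through the continuous mapping theorem applied to $(a,b,c,d)\mapsto a/b-c/d$; this is legitimate since the denominator $\tfrac1G\sum_g|\mathcal{M}_g|A_g$ will be shown to converge to $\pi\,E[|\mathcal{M}_g|]>0$ (using $\pi\in(0,1)$ and $E[|\mathcal{M}_g|]\ge 1$ from Assumption \ref{ass:QG}.(e)), hence is eventually bounded away from zero.

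The crux is the following claim: for any measurable $f$ with $E[f(V_g)^2]<\infty$ and any $s\in\mathcal{S}$,
\begin{equation*}
\frac1G\sum_{1\le g\le G} f(V_g)(A_g-\pi)I\{S_g=s\} \xrightarrow{p} 0~.
\end{equation*}
To establish this I would condition on $(A^{(G)}, S^{(G)})$. Since $V_g$ is a function of the cluster data, Assumption \ref{ass:assignment}.(a) gives that $\{V_g : S_g=s\}$ are, conditionally on $(A^{(G)},S^{(G)})$, still i.i.d.\ draws from the law of $V_g$ given $S_g=s$ and independent of the (now fixed) assignment. Hence the conditional mean of the displayed sum equals $E[f(V_g)\mid S_g=s]\,D_G(s)/G$, which vanishes because $D_G(s)/\sqrt G = O_P(1)$ by Assumption \ref{ass:assignment}.(b) and so $D_G(s)/G\xrightarrow{p}0$; and the conditional variance equals $G^{-2}\,\mathrm{Var}(f(V_g)\mid S_g=s)\sum_{g:S_g=s}(A_g-\pi)^2 \le G^{-1}\,\mathrm{Var}(f(V_g)\mid S_g=s)\to 0$, using $(A_g-\pi)^2\le 1$. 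A conditional Chebyshev inequality followed by bounded convergence then delivers the claim.

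With the claim in hand I would decompose $A_g=\pi+(A_g-\pi)$ stratum by stratum. Taking $f(V_g)=|\mathcal{M}_g|\bar Y_g(1)$ and then $f(V_g)=|\mathcal{M}_g|$, the $(A_g-\pi)$ parts vanish by the claim while the $\pi$ parts converge by the law of large numbers (valid since $\{V_g\}$ is i.i.d.), giving
\begin{equation*}
\frac1G\sum_g |\mathcal{M}_g|\bar Y_g(1) A_g \xrightarrow{p} \pi\, E\Big[\sum_{i\in\mathcal{M}_g} Y_{i,g}(1)\Big], \qquad \frac1G\sum_g |\mathcal{M}_g| A_g \xrightarrow{p} \pi\, E[|\mathcal{M}_g|]~.
\end{equation*}
The factor $\pi$ cancels in the ratio, so the treated mean converges to $E[\sum_{i\in\mathcal{M}_g}Y_{i,g}(1)]/E[|\mathcal{M}_g|]$, and the symmetric computation (with $1-\pi$) gives $E[\sum_{i\in\mathcal{M}_g}Y_{i,g}(0)]/E[|\mathcal{M}_g|]$ for the control mean; subtracting yields $\vartheta$. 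The finiteness of $E[(|\mathcal{M}_g|\bar Y_g(1))^2]$ needed to apply the claim follows from $|\mathcal{M}_g|\le N_g$, Cauchy--Schwarz, and Assumptions \ref{ass:QG}.(e)--(f), since $E[(\sum_{i\in\mathcal{M}_g}Y_{i,g}(a))^2]\le E[N_g\sum_{1\le i\le N_g}Y_{i,g}(a)^2]\le C\,E[N_g^2]<\infty$.

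The main obstacle is precisely the claim: because the stratified randomization induces dependence among the $A_g$ within a stratum and correlates assignment with cluster characteristics through $S_g=S(Z_g,N_g)$, one cannot treat $A_g$ as independent of $|\mathcal{M}_g|$ or $\bar Y_g(a)$. The conditioning argument is what disentangles the assignment from the (conditionally i.i.d.) cluster summaries and reduces everything to controlling the stratum-level imbalance $D_G(s)$, which Assumption \ref{ass:assignment}.(b) keeps at order $\sqrt G$.
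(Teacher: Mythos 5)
Your proof is correct, and while its skeleton matches the paper's --- reduce $\hat \theta^{\rm alt}_{G}$ to ratios of averages of the i.i.d.\ cluster-level summaries in \eqref{eq:iidclusters}, establish a law of large numbers that is immune to the dependence induced by stratified assignment, and finish with the CMT --- you fill in the key step differently. The paper's own proof verifies only the first-moment bounds $E\big[\big|\sum_{i\in\mathcal{M}_g}Y_{i,g}(a)\big|\big]<\infty$ and $E[|\mathcal{M}_g|]<\infty$ (via the LIE, Assumption \ref{ass:QG}.(f), and Jensen) and then cites Lemma C.4 of \cite{bugni/canay/shaikh:2019}, a law of large numbers under covariate-adaptive randomization that requires only first moments, together with Lemma \ref{lemma:iid} and the CMT. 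You instead prove the needed LLN from scratch: conditioning on $(A^{(G)},S^{(G)})$, observing that by Assumption \ref{ass:assignment}.(a) and Lemma \ref{lemma:iid} the $V_g$ with $S_g=s$ are conditionally i.i.d.\ draws from the law of $V_g$ given $S_g=s$, and then controlling the conditional mean through $D_G(s)/G=o_P(1)$ and the conditional variance through Chebyshev. This argument is sound, and your moment verification $E\big[\big(\sum_{i\in\mathcal{M}_g}Y_{i,g}(a)\big)^2\big]\le C\,E[N_g^2]<\infty$ is exactly what your claim needs. The trade-off is that your Chebyshev-based claim requires second moments where the cited lemma requires only first moments (its proof proceeds by a partial-sum/truncation-type argument rather than variance bounds); in this application the stronger requirement costs nothing, since Assumptions \ref{ass:QG}.(e)--(f) supply the second moments anyway. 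What your route buys is a self-contained, elementary argument that makes explicit how Assumption \ref{ass:assignment} disentangles the assignment from the cluster characteristics via the stratum-level imbalances $D_G(s)$; what the paper's route buys is brevity and moment conditions that are minimal.
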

As we discussed in the previous section, the quantity $\vartheta$ corresponds to a \emph{sample}-weighted cluster-level average treatment effect. In general, this parameter will not equal either $\theta_1$ or $\theta_2$ as illustrated in Example \ref{ex:numerical}.  As a result, unless the experimenter is interested in a distinct weighting of the cluster-level treatment effects that differs from those that arise when the clusters themselves are the units of interest or when the individuals within the clusters are the units of interest, care must be taken when interpreting $\hat \theta^{\rm alt}_{G}$. While it is true, as we discussed in Section \ref{sec:parameters}, that $\vartheta$ is in fact equal to either $\theta_1$ or $ \theta_2$ for some specific sampling designs, in what follows we consider alternative estimators which are generally consistent for $\theta_1$ and $\theta_2$ without imposing additional restrictions on the sampling procedure.

\subsection{Equally-weighted Cluster-level Average Treatment Effect}\label{sec:equal}

In this section, we consider the estimation of $\theta_1$ defined in \eqref{eq:ECA}.  To this end, consider the following difference-in-``average of averages'' estimator:
\begin{equation} \label{eq:theta2hat}
\hat \theta_{1,G} := \frac{\sum_{1 \leq g \leq G} \bar Y_g A_g}{\sum_{1 \leq g \leq G} A_g} - \frac{\sum_{1 \leq g \leq G} \bar Y_g (1- A_g)}{\sum_{1 \leq g \leq G} (1- A_g)}~,
\end{equation}
where 
\begin{equation} \label{eq:barYg}
\bar Y_g = \frac{1}{|\mathcal{M}_g|}\sum_{i \in \mathcal{M}_g} Y_{i,g}~.
\end{equation}
For what follows, it will be useful to introduce some notation to denote various types of averages. Given a sequence of random variables $(C_g: 1 \le g \le G)$, consider the following definitions:
\begin{eqnarray*}
\hat{\mu}^C_{G,a}(s) &:=& \frac{1}{\sum_{1 \le g \le G}I\{A_g = a, S_g = s\}}\sum_{1 \le g \le G}C_gI\{A_g = a, S_g = s\}~,\\
\hat{\mu}^C_G(s) &:=& \frac{1}{\sum_{1 \le g \le G}I\{S_g = s\}}\sum_{1 \le g \le G}C_gI\{S_g = s\}~,\\
\hat{\mu}^C_{G,a} &:=& \frac{1}{\sum_{1 \le g \le G}I\{A_g = a\}}\sum_{1 \le g \le G}C_gI\{A_g = a\}~.
\end{eqnarray*}
Given this notation, $\hat{\theta}_{1,G}$ could alternatively be written as
\[\hat{\theta}_{1,G} = \hat{\mu}^{\bar{Y}}_{G,1} - \hat{\mu}^{\bar{Y}}_{G,0}~.\]
Note that $\hat \theta_{1,G}$ may be obtained as the estimator of the coefficient on $A_g$ in the following ordinary least squares regression:
$$ \texttt{regress } \bar Y_{g} \texttt{ on } \text{constant} +  A_g~.$$
As such, we can view $\hat{\theta}_{1,G}$ as estimating the treatment effect for an individual-level randomized experiment where the clusters are themselves the units of interest. The following theorem derives the asymptotic behavior of this estimator.

\begin{theorem} \label{theorem:mainECA}
Under Assumptions \ref{ass:assignment} and \ref{ass:QG}, $$\sqrt G (\hat \theta_{1,G} - \theta_1) \stackrel{d}{\rightarrow} N(0,\sigma_1^2)$$ as $G \rightarrow \infty$, where 
\begin{eqnarray*}
\sigma_1^2 := &\frac{1}{\pi} {\text Var}[\bar Y^{\dagger}_g(1)] + \frac{1}{1-\pi} {\text Var}[\bar Y^{\dagger}_g(0)] + E[(\bar{m}_1(S_g) - \bar{m}_0(S_g))^2] + E\left[\tau(S_g)\left(\frac{1}{\pi}\bar{m}_1(S_g) + \frac{1}{1-\pi}\bar{m}_0(S_g)\right)^2\right]~,
\end{eqnarray*}
with 
\begin{eqnarray}
\bar{Y}^{\dagger}_g(a) &:=& \bar{Y}_g(a) - E[\bar{Y}_g(a)|S_g]~, \nonumber \\
\bar{m}_a(S_g) &:=& E[\bar{Y}_g(a)|S_g] - E[\bar{Y}_g(a)]~, \label{eq:barm}
\end{eqnarray}
and $\pi$, $\tau(\cdot)$ defined as in Assumption \ref{ass:assignment}.
\end{theorem}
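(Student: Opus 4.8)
The plan is to reduce the statement to an individual-level covariate-adaptive randomization problem in which the cluster average $\bar Y_g$ plays the role of the outcome, and then adapt the argument of \cite{bugni/canay/shaikh:2018}. First I would record three preliminary facts. By Lemma \ref{lemma:iid}, $\{(\bar Y_g(1), \bar Y_g(0), S_g) : 1 \le g \le G\}$ is i.i.d.; by Assumptions \ref{ass:QG}.(c) and (f) the $\bar Y_g(a)$ have finite second moments, since $E[\bar Y_g(a)^2 \mid \mathcal M_g, Z_g, N_g] \le |\mathcal M_g|^{-1}\sum_{i \in \mathcal M_g} E[Y_{i,g}(a)^2 \mid Z_g, N_g] \le C$ by Jensen's inequality and the conditional independence in Assumption \ref{ass:QG}.(c); and by Assumption \ref{ass:QG}.(d) together with the tower property, $\theta_1 = E[\bar Y_g(1) - \bar Y_g(0)]$. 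Using $\bar Y_g = \bar Y_g(A_g)$ from \eqref{eq:obsY}, I would write $\hat\theta_{1,G} - \theta_1 = (\hat\mu^{\bar Y}_{G,1} - E[\bar Y_g(1)]) - (\hat\mu^{\bar Y}_{G,0} - E[\bar Y_g(0)])$, substitute the decomposition $\bar Y_g(a) = E[\bar Y_g(a)] + \bar m_a(S_g) + \bar Y^{\dagger}_g(a)$ from \eqref{eq:barm}, and apply the identity $\sum_{g : S_g = s} A_g = \pi n(s) + D_G(s)$ with $n(s) := \sum_g I\{S_g = s\}$ and $D_G(s)$ as in \eqref{eqn:D_G(s)}. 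Since $\sum_g A_g/G \xrightarrow{p} \pi$ and $\sum_g (1-A_g)/G \xrightarrow{p} 1-\pi$ (both following from Assumption \ref{ass:assignment}.(b) and the law of large numbers), this yields, up to $o_P(1)$ terms, the representation $\sqrt G(\hat\theta_{1,G} - \theta_1) = T_I + T_{II} + T_{III}$, where $T_I := \frac{1}{\pi}\frac{1}{\sqrt G}\sum_g \bar Y^{\dagger}_g(1)A_g - \frac{1}{1-\pi}\frac{1}{\sqrt G}\sum_g \bar Y^{\dagger}_g(0)(1-A_g)$, $T_{II} := \frac{1}{\sqrt G}\sum_g (\bar m_1(S_g) - \bar m_0(S_g))$, and $T_{III} := \frac{1}{\sqrt G}\sum_{s \in \mathcal S}\left(\frac{\bar m_1(s)}{\pi} + \frac{\bar m_0(s)}{1-\pi}\right)D_G(s)$.

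The terms $T_{II}$ and $T_{III}$ are handled directly. Since $E[\bar m_a(S_g)] = 0$, $T_{II}$ is a normalized sum of i.i.d.\ mean-zero random variables and converges by the central limit theorem to $N(0, E[(\bar m_1(S_g) - \bar m_0(S_g))^2])$. For $T_{III}$, the coefficients $\frac{\bar m_1(s)}{\pi} + \frac{\bar m_0(s)}{1-\pi}$ are fixed population constants, so conditionally on $S^{(G)}$ Assumption \ref{ass:assignment}.(b) gives $T_{III} \xrightarrow{d} N\bigl(0, \sum_{s \in \mathcal S}(\frac{\bar m_1(s)}{\pi} + \frac{\bar m_0(s)}{1-\pi})^2 p(s)\tau(s)\bigr) = N(0, E[\tau(S_g)(\frac{\bar m_1(S_g)}{\pi} + \frac{\bar m_0(S_g)}{1-\pi})^2])$ almost surely, a limit that does not depend on $S^{(G)}$.

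The crux is $T_I$, which I would analyze by conditioning on $(S^{(G)}, A^{(G)})$. By Assumption \ref{ass:assignment}.(a), $W^{(G)} \independent A^{(G)} \mid S^{(G)}$, so conditionally on $(S^{(G)}, A^{(G)})$ the pairs $(\bar Y_g(1), \bar Y_g(0))$ remain independent across $g$ with the conditional law $\mathcal{L}(\cdot \mid S_g)$; in particular, within stratum $s$ the treated $\bar Y^{\dagger}_g(1)$ are i.i.d.\ mean-zero with variance $\var[\bar Y_g(1) \mid S_g = s]$, and analogously for the controls. Thus $T_I$ is, conditionally, a sum of independent mean-zero terms, and I would verify a Lindeberg (or Lyapunov) condition using the uniform bound in Assumption \ref{ass:QG}.(f). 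Using $\sum_{g : S_g = s} A_g/G \to \pi p(s)$ and the within-stratum law of large numbers, the conditional variance converges to $\frac{1}{\pi}E[\var[\bar Y_g(1) \mid S_g]] + \frac{1}{1-\pi}E[\var[\bar Y_g(0) \mid S_g]] = \frac{1}{\pi}\var[\bar Y^{\dagger}_g(1)] + \frac{1}{1-\pi}\var[\bar Y^{\dagger}_g(0)]$, the treated and control contributions being conditionally uncorrelated because they involve disjoint clusters. Hence, conditionally on $(S^{(G)}, A^{(G)})$, $T_I \xrightarrow{d} N(0, \frac{1}{\pi}\var[\bar Y^{\dagger}_g(1)] + \frac{1}{1-\pi}\var[\bar Y^{\dagger}_g(0)])$ almost surely, again with a limit that does not depend on the conditioning variables.

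Finally, I would assemble the joint limit. The key device, exactly as in \cite{bugni/canay/shaikh:2018}, is that if a sequence converges in distribution conditionally on some variables to a limit law that does not depend on those variables, then it converges unconditionally to that law and is asymptotically independent of any sequence measurable with respect to the conditioning variables. Applying this with $T_I$ conditioned on $(S^{(G)}, A^{(G)})$ shows $T_I$ is asymptotically independent of $(T_{II}, T_{III})$, both of which are measurable functions of $(S^{(G)}, A^{(G)})$; applying it again to $T_{III}$ conditioned on $S^{(G)}$ shows $T_{III}$ is asymptotically independent of $T_{II}$. Consequently $T_I$, $T_{II}$, $T_{III}$ are jointly asymptotically normal and mutually asymptotically independent, so that $\sqrt G(\hat\theta_{1,G} - \theta_1) \xrightarrow{d} N(0, \sigma_1^2)$ with $\sigma_1^2$ equal to the sum of the three variances, which is precisely the stated expression. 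I expect the main obstacle to be the rigorous treatment of $T_I$: establishing the conditional central limit theorem with random within-stratum treated counts and then invoking the conditional-to-unconditional convergence lemma to secure the asymptotic independence that makes the three variance contributions additive.
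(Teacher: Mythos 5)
Your proposal is correct and takes essentially the same approach as the paper: the paper's proof establishes the same two preliminary facts---$E[\bar Y_g(a)^2]<\infty$ via Jensen's inequality, the LIE, and Assumptions \ref{ass:QG}.(c) and (f), and $\theta_1 = E[\bar Y_g(1)-\bar Y_g(0)]$ via Assumption \ref{ass:QG}.(d)---and then concludes by invoking the proof of Theorem 4.1 in \cite{bugni/canay/shaikh:2018}, viewing the clusters as experimental units with potential outcomes $(\bar Y_g(0),\bar Y_g(1))$. The decomposition $T_I+T_{II}+T_{III}$ and the conditional-to-unconditional convergence device you describe are exactly the internals of that cited argument, so your write-up simply unpacks the step the paper handles by citation.
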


From this result it is immediate that $\hat{\theta}_{1,G}$ is most efficient when paired with an assignment mechanism that features $\tau(s) = 0$  for every $s \in \mathcal{S}$ (i.e., strong balance) and least efficient when $\tau(s) = \pi(1 - \pi)$ for every $s \in \mathcal{S}$. The next result shows that, as a consequence, the probability limit of the standard heteroskedasticity-robust variance estimator is generally too large relative to $\sigma^2_1$. 

\begin{theorem}\label{thm:limHC}
Let $\tilde \sigma_{1,G}^2$ denote the heteroskedasticity-robust estimator of the variance of the coefficient of $A_g$ in an ordinary least squares regression of $\bar{Y}_g$ on a constant and $A_g$. Note that this estimator can be written as 
$$\tilde \sigma_{1,G}^2 := \frac{1}{\frac{1}{G}\sum_{1 \leq g \leq G} A_g} \widehat {\text Var}[\bar Y_g(1)] + \frac{1}{\frac{1}{G}\sum_{1 \leq g \leq G} 1-A_g} \widehat {\text Var}[\bar Y_g(0)]~,$$ 
where $$\widehat {\text Var}[\bar Y_g(a)] := \hat{\mu}^{\bar{Y}^2}_a - (\hat{\mu}^{\bar{Y}}_a)^2~.$$
Then under Assumptions \ref{ass:assignment} and \ref{ass:QG}, 
\begin{equation}
    \tilde \sigma_{1,G}^2 \xrightarrow{p} \frac{1}{\pi}{\text Var}[\bar{Y}_g(1)] + \frac{1}{1 - \pi}{\text Var}[\bar{Y}_g(0)] \ge \sigma^2_1~,
    \label{eq:limHC}
\end{equation}
with equality if and only if 
\[(\pi(1 - \pi) - \tau(s))\left(\frac{1}{\pi}\bar{m}_1(s) + \frac{1}{1 - \pi}\bar{m}_0(s)\right)^2 = 0~,\] for every $s \in \mathcal{S}$.
\end{theorem}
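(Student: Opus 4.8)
The plan is to prove the statement in two stages: first identify the probability limit of $\tilde\sigma_{1,G}^2$, and then compare that limit to $\sigma_1^2$ via a within/between-stratum decomposition, reading off the equality condition at the end. Throughout I use that $\bar Y_g = \bar Y_g(1)$ whenever $A_g = 1$ and $\bar Y_g = \bar Y_g(0)$ whenever $A_g = 0$ by \eqref{eq:obsY}, so that the subgroup moments $\hat\mu^{\bar Y}_{G,a}$ and $\hat\mu^{\bar Y^2}_{G,a}$ are built only from the relevant potential outcomes.

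For the limit, it suffices to show $\frac1G\sum_{g} A_g \xrightarrow{p}\pi$ (and $1-\pi$ for the control count) together with $\widehat{\mathrm{Var}}[\bar Y_g(a)] \xrightarrow{p} \mathrm{Var}[\bar Y_g(a)]$. The proportion limit follows from Assumption \ref{ass:assignment}.(b): since $D_G(s)/\sqrt G = O_P(1)$ forces $D_G(s)/G\xrightarrow{p}0$, summing over the finite set $\mathcal S$ gives $\frac1G\sum_g A_g - \pi = \frac1G\sum_{s}D_G(s) \xrightarrow{p} 0$, and the same computation gives $n_1(s)/G \xrightarrow{p}\pi p(s)$, where $n_1(s) = \sum_g I\{A_g=1,S_g=s\}$ and $n_1 = \sum_g I\{A_g=1\}$. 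For the subgroup moments I would decompose by stratum, writing $\hat\mu^{\bar Y}_{G,1} = \sum_{s}\tfrac{n_1(s)}{n_1}\hat\mu^{\bar Y}_{G,1}(s)$, and invoke the same law-of-large-numbers arguments used to establish Theorem \ref{theorem:mainECA}, which give $\hat\mu^{\bar Y}_{G,a}(s)\xrightarrow{p} E[\bar Y_g(a)\mid S_g=s]$ (this is where Assumption \ref{ass:assignment}.(a) and the moment bounds in Assumption \ref{ass:QG} enter). Since $n_1(s)/n_1 \xrightarrow{p} p(s)$, I obtain $\hat\mu^{\bar Y}_{G,1}\xrightarrow{p}\sum_s p(s)E[\bar Y_g(1)\mid S_g=s] = E[\bar Y_g(1)]$ and, by the identical argument for $\bar Y_g^2$, $\hat\mu^{\bar Y^2}_{G,1}\xrightarrow{p} E[\bar Y_g(1)^2]$; the Continuous Mapping Theorem then yields $\widehat{\mathrm{Var}}[\bar Y_g(1)]\xrightarrow{p}\mathrm{Var}[\bar Y_g(1)]$, and symmetrically for $a=0$. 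Combining these gives the limit in \eqref{eq:limHC}.

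For the comparison with $\sigma_1^2$, I apply the variance decomposition. Writing $\bar Y_g(a) = \bar Y^{\dagger}_g(a) + \bar m_a(S_g) + E[\bar Y_g(a)]$ with $\bar Y^{\dagger}_g(a)$ conditionally mean-zero given $S_g$, the cross term vanishes and $\mathrm{Var}[\bar Y_g(a)] = \mathrm{Var}[\bar Y^{\dagger}_g(a)] + E[\bar m_a(S_g)^2]$. Substituting into the limit and subtracting $\sigma_1^2$ cancels the $\frac1\pi\mathrm{Var}[\bar Y^{\dagger}_g(1)] + \frac{1}{1-\pi}\mathrm{Var}[\bar Y^{\dagger}_g(0)]$ terms, leaving a purely between-stratum quantity. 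The key algebraic step is the completion-of-the-square identity
\[\frac1\pi E[\bar m_1(S_g)^2] + \frac{1}{1-\pi}E[\bar m_0(S_g)^2] - E[(\bar m_1(S_g)-\bar m_0(S_g))^2] = E\Big[\pi(1-\pi)\Big(\tfrac1\pi \bar m_1(S_g) + \tfrac{1}{1-\pi}\bar m_0(S_g)\Big)^2\Big]~,\]
which expanding both sides verifies term-by-term in $\bar m_1^2$, $\bar m_0^2$, and $\bar m_1\bar m_0$. Combining this with the remaining $\tau$-term in $\sigma_1^2$ gives
\[\frac1\pi\mathrm{Var}[\bar Y_g(1)] + \frac{1}{1-\pi}\mathrm{Var}[\bar Y_g(0)] - \sigma_1^2 = E\Big[(\pi(1-\pi) - \tau(S_g))\Big(\tfrac1\pi \bar m_1(S_g) + \tfrac{1}{1-\pi}\bar m_0(S_g)\Big)^2\Big]~.\]

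Nonnegativity of the right-hand side is immediate from $0\le\tau(s)\le\pi(1-\pi)$ in Assumption \ref{ass:assignment}.(b), giving the inequality in \eqref{eq:limHC}. Since the integrand is a nonnegative function of $S_g$ and $S_g$ takes each value $s\in\mathcal S$ with probability $p(s)>0$, the expectation vanishes if and only if $(\pi(1-\pi)-\tau(s))(\tfrac1\pi\bar m_1(s) + \tfrac{1}{1-\pi}\bar m_0(s))^2 = 0$ for every $s$, which is exactly the stated equality condition. I expect the main obstacle to be the first stage rather than the algebra: one must confirm that the subgroup sample variances converge to the \emph{unconditional} variances $\mathrm{Var}[\bar Y_g(a)]$ even though treatment is assigned in a stratified, balanced manner. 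This is precisely what the stratum-by-stratum decomposition together with $n_1(s)/n_1\xrightarrow{p} p(s)$ resolves, and it encodes the intuition that the heteroskedasticity-robust estimator is blind to the strata and therefore picks up the full between-stratum variation, whereas the balancing reflected in $\tau(\cdot)$ shrinks that contribution in the true asymptotic variance $\sigma_1^2$.
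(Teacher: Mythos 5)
Your proof is correct and takes essentially the same route as the paper's: a law-of-large-numbers argument under covariate-adaptive randomization for the subgroup moments (the paper invokes Lemma C.4 of Bugni, Canay, and Shaikh (2019) on the aggregate sums $\tfrac{1}{G}\sum_g \bar Y_g^r I\{A_g=a\}$ directly, where you re-derive the same limit stratum by stratum and reaggregate using $n_1(s)/n_1 \xrightarrow{p} p(s)$) followed by the CMT, and then the decomposition $\var[\bar{Y}_g(a)] = \var[\bar{Y}^{\dagger}_g(a)] + E[\bar{m}_a(S_g)^2]$ for the comparison with $\sigma_1^2$. The only substantive addition is that you spell out the completion-of-the-square identity yielding $E\bigl[(\pi(1-\pi)-\tau(S_g))\bigl(\tfrac{1}{\pi}\bar m_1(S_g)+\tfrac{1}{1-\pi}\bar m_0(S_g)\bigr)^2\bigr]$ as the gap, which the paper leaves as ``some additional algebra''; your identity and the resulting equality condition check out.
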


Note that it can be shown that in the case of Bernoulli random assignment, where $A^{(G)}$ is an i.i.d.\ sequence with $P(A_g = 1) = \pi$, Assumption \ref{ass:assignment} is satisfied with $\tau(s) = \pi(1-\pi)$ for every $s \in \mathcal{S}$. As such, we obtain from Theorem \ref{thm:limHC} that in this case $\tilde \sigma_{1,G}^2$ is a consistent estimator of $\sigma^2_1$. Note that $\tilde \sigma_{1,G}^2$ is also consistent when there is no stratification (i.e., $|\mathcal{S}|=1$), since in this case $\bar{m}_a(s) = 0$ for $a \in \{0, 1\}$.

To facilitate the use of Theorem \ref{theorem:mainECA} for inference about $\theta_1$, we now provide an estimator of $\sigma_1^2$ which is consistent.  
For any $s \in \mathcal{S}$, let
\begin{equation}
    G(s) :=~\sum_{g\in G} 1\{S_g = s\}~.
    \label{eq:Gs}
\end{equation}
Then, define the following estimators:
\begin{eqnarray}
\hat{\zeta}^2_{\bar{Y}}(\pi) &:=& \frac{1}{\pi}\left(\hat{\mu}^{\bar{Y}^2}_{G,1} - \sum_{s \in \mathcal{S}}\frac{G(s)}{G}\hat{\mu}^{\bar{Y}}_{G,1}(s)^2\right) + \frac{1}{1 - \pi}\left(\hat{\mu}^{\bar{Y}^2}_{G,0} - \sum_{s \in \mathcal{S}}\frac{G(s)}{G}\hat{\mu}^{\bar{Y}}_{G,0}(s)^2\right) \label{eq:zetabarY} \\
\hat{\zeta}^2_H &:=& \sum_{s \in \mathcal{S}}\frac{G(s)}{G}\left(\left(\hat{\mu}^{\bar{Y}}_{G,1}(s) - \hat{\mu}^{\bar{Y}}_{G,1}\right) - \left(\hat{\mu}^{\bar{Y}}_{G,0}(s) - \hat{\mu}^{\bar{Y}}_{G,0}\right)\right)^2 \label{eq:zetaH}  \\
\hat{\zeta}^2_{A}(\pi) &:=& \sum_{s \in \mathcal{S}}\tau(s)\frac{G(s)}{G}\left(\frac{1}{\pi}\left(\hat{\mu}^{\bar{Y}}_{G,1}(s) - 
\hat{\mu}^{\bar{Y}}_1\right) + \frac{1}{1-\pi}\left(\hat{\mu}^{\bar{Y}}_{G,0}(s) - \hat{\mu}^{\bar{Y}}_{G,0}\right)\right)^2~, \label{eq:zetaA} 
\end{eqnarray}
and define $\hat{\sigma}^2_{1,G} :=  \hat{\zeta}^2_{\bar{Y}}(\pi) + \hat{\zeta}^2_H + \hat{\zeta}^2_{A}(\pi)$.

The following theorem establishes the consistency of $\hat \sigma_{1,G}^2$ for $\sigma_1^2$.  In the statement of the theorem, we make use of the following additional notation: for scalars $a$ and $b$, we define $[a \pm b] := [a - b, a +b]$, and denote by $\Phi(\cdot)$ the standard normal c.d.f.
\begin{theorem} \label{theorem:sigma1}
Under Assumptions \ref{ass:assignment} and \ref{ass:QG}, $$\hat \sigma_{1,G}^2 \stackrel{p}{\rightarrow} \sigma_1^2$$ as $G \rightarrow \infty$.  Thus, for $\sigma^2_1 > 0$ and for any $\alpha \in (0,1)$, $$P\left \{\theta_1 \in \left [\hat \theta_{1,G} \pm \frac{\hat \sigma_{1,G}}{\sqrt{G}} \Phi^{-1}\left (1 - \frac{\alpha}{2} \right ) \right ] \right \} \rightarrow 1 - 
\alpha$$ as $G \rightarrow \infty$.
\end{theorem}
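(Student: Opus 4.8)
The plan is to prove $\hat\sigma_{1,G}^2 \xrightarrow{p} \sigma_1^2$ by treating each of the three summands $\hat\zeta^2_{\bar Y}(\pi)$, $\hat\zeta^2_H$, and $\hat\zeta^2_A(\pi)$ separately, and then to obtain the coverage statement as a direct consequence of Theorem \ref{theorem:mainECA}. The observation that organizes everything is that, since $\mathcal{S}$ is finite and $\pi,\tau(\cdot)$ are constants, each summand is a fixed continuous function of the empirical stratum frequencies $G(s)/G$ and of the within-cell and pooled sample moments $\hat\mu^{\bar Y}_{G,a}(s)$, $\hat\mu^{\bar Y^2}_{G,a}(s)$, $\hat\mu^{\bar Y}_{G,a}$, $\hat\mu^{\bar Y^2}_{G,a}$. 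Hence, once I establish that each of these building blocks converges in probability to its natural population counterpart, the Continuous Mapping Theorem delivers the limit of $\hat\sigma^2_{1,G}$ termwise.

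The technical core is the consistency of these building blocks under covariate-adaptive assignment, where the $A_g$ are dependent. First, $G(s)/G \xrightarrow{p} p(s)$ by the law of large numbers applied to the i.i.d.\ sequence $\{S_g\}$ (Lemma \ref{lemma:iid}). Next, I would show that the proportion of treated clusters in each stratum converges to $\pi$: writing $D_G(s)/G = G^{-1}\sum_g A_g I\{S_g = s\} - \pi\,G(s)/G$ and noting that $D_G(s)/\sqrt{G}$ is bounded in probability by Assumption \ref{ass:assignment}.(b), we obtain $D_G(s)/G \xrightarrow{p} 0$ and hence $G^{-1}\sum_g A_g I\{S_g = s\} \xrightarrow{p} \pi p(s)$, so that each $(s,a)$ cell contains a number of clusters of order $G$. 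The key step is then a conditional law of large numbers for the within-cell averages: conditioning on $S^{(G)}$, Assumption \ref{ass:assignment}.(a) makes $A^{(G)}$ independent of $W^{(G)}$, so that $\{\bar Y_g(a) : A_g = a,\ S_g = s\}$ behaves like a sample from the conditional law of $\bar Y_g(a)$ given $S_g = s$; combined with the moment bound of Assumption \ref{ass:QG}.(f) (which controls $E[\bar Y_g(a)^2 \mid S_g]$ and thus uniform integrability), this yields $\hat\mu^{\bar Y}_{G,a}(s) \xrightarrow{p} E[\bar Y_g(a)\mid S_g=s]$ and $\hat\mu^{\bar Y^2}_{G,a}(s) \xrightarrow{p} E[\bar Y_g(a)^2\mid S_g=s]$. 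The pooled moments then follow by writing $\hat\mu^{\bar Y}_{G,a}$ and $\hat\mu^{\bar Y^2}_{G,a}$ as $G_a(s)$-weighted convex combinations of the within-cell averages and passing to the limit, giving $\hat\mu^{\bar Y}_{G,a}\xrightarrow{p}E[\bar Y_g(a)]$ and $\hat\mu^{\bar Y^2}_{G,a}\xrightarrow{p}E[\bar Y_g(a)^2]$. I expect making this conditioning argument rigorous --- i.e.\ verifying that dependent assignment does not disturb the within-cell limits --- to be the main obstacle, though it parallels the corresponding step in \cite{bugni/canay/shaikh:2018}.

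With the building blocks in hand, the three limits follow by substitution. For $\hat\zeta^2_{\bar Y}(\pi)$, note that $\hat\mu^{\bar Y^2}_{G,a} - \sum_{s}(G(s)/G)\hat\mu^{\bar Y}_{G,a}(s)^2 \xrightarrow{p} E[\bar Y_g(a)^2] - E[(E[\bar Y_g(a)\mid S_g])^2]$, which equals $\var[\bar Y^\dagger_g(a)]$ by the definition of $\bar Y^\dagger_g(a)$ and the law of total variance; scaling by $1/\pi$ and $1/(1-\pi)$ recovers the first two terms of $\sigma_1^2$. For $\hat\zeta^2_H$, each bracketed difference converges to $\bar m_1(s) - \bar m_0(s)$ by \eqref{eq:barm}, so $\hat\zeta^2_H \xrightarrow{p} \sum_s p(s)(\bar m_1(s)-\bar m_0(s))^2 = E[(\bar m_1(S_g)-\bar m_0(S_g))^2]$, and analogously $\hat\zeta^2_A(\pi) \xrightarrow{p} E[\tau(S_g)(\tfrac{1}{\pi}\bar m_1(S_g) + \tfrac{1}{1-\pi}\bar m_0(S_g))^2]$. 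Summing the three limits gives exactly $\sigma_1^2$. Finally, for the coverage claim, Theorem \ref{theorem:mainECA} gives $\sqrt G(\hat\theta_{1,G}-\theta_1) \xrightarrow{d} N(0,\sigma_1^2)$; since $\hat\sigma^2_{1,G}\xrightarrow{p}\sigma_1^2 > 0$, Slutsky's theorem yields $\sqrt G(\hat\theta_{1,G}-\theta_1)/\hat\sigma_{1,G} \xrightarrow{d} N(0,1)$, and the usual rewriting of the event $\{\theta_1 \in [\hat\theta_{1,G} \pm (\hat\sigma_{1,G}/\sqrt G)\Phi^{-1}(1-\alpha/2)]\}$ together with the continuity of $\Phi$ delivers the limiting coverage $1-\alpha$.
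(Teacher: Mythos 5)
Your proposal is correct and follows essentially the same route as the paper: the paper's proof is simply a reduction to the proof of Theorem 4.2 in \cite{bugni/canay/shaikh:2018}, viewing each cluster as an experimental unit with potential outcomes $(\bar{Y}_g(0),\bar{Y}_g(1))$, and your argument---termwise convergence of $\hat{\zeta}^2_{\bar{Y}}(\pi)$, $\hat{\zeta}^2_H$, $\hat{\zeta}^2_A(\pi)$ via consistency of the stratum-cell moments under covariate-adaptive assignment, followed by the continuous mapping theorem and Slutsky for coverage---is precisely an unpacking of that cited proof. The one input you re-derive that the paper obtains elsewhere is the moment bound $E[\bar{Y}_g(a)^2]<\infty$ and the i.i.d.\ structure of $(\bar{Y}_g(1),\bar{Y}_g(0),S_g)$, which the paper supplies through Lemma \ref{lemma:iid} and the computation in the proof of Theorem \ref{theorem:mainECA}, so no gap remains.
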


\begin{remark}\label{rem:sigma_pos}
A sufficient condition under which $\sigma^2_1 > 0$ holds is that $\var[\bar{Y}_g(a) - E[\bar{Y}_g(a)|S_g]] > 0$ for some $a \in \{0, 1\}$. More generally, we expect $\sigma^2_1 > 0$ except in pathological cases such as when the distribution of outcomes is degenerate or in cases with perfect negative within-cluster correlation.
\end{remark}

\begin{remark}\label{rem:controls}
As mentioned earlier, $\hat \theta_{1,G}$ can equivalently be obtained as the estimator of the coefficient on $A_g$ in an ordinary least squares regression of $\bar Y_g$ on a constant and $A_g$. A natural next step would be to include additional baseline covariates in this linear regression.  Doing so carefully can lead to gains in efficiency;  see \cite{negi2021revisiting} and references therein for related results in the context of individual-level randomized experiments. In Appendix \ref{sec:adjust}, we describe one such adjustment strategy.  
\end{remark}

\subsection{Size-weighted Cluster-level Average Treatment Effect}\label{sec:size}
In this section, we consider the estimation of $\theta_2$ defined in \eqref{eq:SECT}.  To this end, consider the following difference-in-``weighted average of averages'' estimator: 
\begin{equation} \label{eq:theta3hat}
\hat \theta_{2,G} {}:= \frac{\sum_{1 \leq g \leq G} \bar Y_g N_g A_g}{\sum_{1 \leq g \leq G} N_g A_g} - \frac{\sum_{1 \leq g \leq G} \bar Y_g N_g (1- A_g)}{\sum_{1 \leq g \leq G} N_g (1- A_g)}~,
\end{equation}
where $\bar Y_g$ is defined as in \eqref{eq:barYg}. Note that $\hat \theta_{2,G}$ may be obtained as the estimator of the coefficient on $A_g$ in the following weighted least squares regression:
$$ \texttt{regress } Y_{i,g} \texttt{ on } \text{constant} +  A_g \texttt{ using weights } {N_g/|\mathcal{M}_g|}~.$$
In the special case where $\mathcal{M}_g = \{1, 2, \ldots, N_g\}$ for all $1 \le g \le G$ with probability one, we have $\hat \theta_{2,G} = \hat{\theta}^{\rm alt}_G$ (i.e. the weights collapse to $1$). The following theorem derives the asymptotic behavior of this estimator.

\begin{theorem} \label{thm:mainSECT}
Under Assumptions \ref{ass:assignment} and \ref{ass:QG}, $$\sqrt G (\hat \theta_{2,G} - \theta_2) \stackrel{d}{\rightarrow} N(0,\sigma_2^2)$$ as $G \rightarrow \infty$, where 
\begin{equation}
\sigma_2^2 := \frac{1}{\pi} {\text Var}[\tilde Y^{\dagger}_g(1)] + \frac{1}{1-\pi} {\text Var}[\tilde Y^{\dagger}_g(0)] + E[(\tilde{m}_1(S_g) - \tilde{m}_0(S_g))^2] + E\bigg[\tau(S_g)\left(\frac{1}{\pi}\tilde{m}_1(S_g) + \frac{1}{1-\pi}\tilde{m}_0(S_g)\right)^2\bigg]~,
\label{eq:mainSECT0}
\end{equation}
with
\begin{eqnarray*}
\tilde{Y}_g(a) &:=& \frac{N_g}{E[N_g]}\left(\bar{Y}_g(a) - \frac{E[\bar{Y}_g(a)N_g]}{E[N_g]}\right)~, \\
\tilde{Y}^{\dagger}_g(a) &:=& \tilde{Y}_g(a) - E[\tilde{Y}_g(a)|S_g]~, \\
\tilde{m}_a(S_g)&:=& E[\tilde{Y}_g(a)|S_g] - E[\tilde{Y}_g(a)]~,
\end{eqnarray*}
and $\pi$, $\tau(\cdot)$ defined as in Assumption \ref{ass:assignment}.
\end{theorem}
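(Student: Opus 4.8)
The plan is to reduce the ratio-type estimator $\hat\theta_{2,G}$ to the difference-in-means structure already analyzed in Theorem \ref{theorem:mainECA} by linearizing and passing to a transformed cluster-level variable. Writing $\mu_N := E[N_g]$ and $\nu_a := E[N_g\bar Y_g(a)]$, I would first record the identity $\hat\theta_{2,G} = \hat{\mu}^{N\bar Y}_{G,1}/\hat{\mu}^{N}_{G,1} - \hat{\mu}^{N\bar Y}_{G,0}/\hat{\mu}^{N}_{G,0}$, noting that in group $a$ the observed $\bar Y_g$ coincides with $\bar Y_g(a)$. Using Assumption \ref{ass:assignment}.(a)--(b) together with the i.i.d.\ structure from Lemma \ref{lemma:iid} and a law of large numbers (within each stratum the treated and control clusters form representative subsamples, with treated fraction tending to $\pi$), I would show $\hat{\mu}^{N}_{G,a} \xrightarrow{p} \mu_N$ and $\hat{\mu}^{N\bar Y}_{G,a} \xrightarrow{p} \nu_a$, so that each ratio is consistent for $\nu_a/\mu_N$ and hence $\hat\theta_{2,G} \xrightarrow{p} (\nu_1-\nu_0)/\mu_N = \theta_2$. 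The last equality uses Assumption \ref{ass:QG}.(d) to write $E[N_g\bar Y_g(a)] = E[N_g\,E[\bar Y_g(a)\mid N_g]] = E[\sum_{1\le i\le N_g}Y_{i,g}(a)]$, matching \eqref{eq:SECT}.

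Next I would linearize each ratio. Define the centered, size-weighted transform $\tilde Y_g(a) := \frac{N_g}{\mu_N}\big(\bar Y_g(a) - \nu_a/\mu_N\big)$, which is i.i.d.\ across $g$ and satisfies $E[\tilde Y_g(a)] = 0$. The key algebraic observation is that the first-order term in the ratio expansion for group $a$ is exactly the group-$a$ average of this transform, i.e.\ $\frac{1}{\mu_N}(\hat{\mu}^{N\bar Y}_{G,a} - \nu_a) - \frac{\nu_a}{\mu_N^2}(\hat{\mu}^{N}_{G,a} - \mu_N) = \hat{\mu}^{\tilde Y(a)}_{G,a}$. Using the exact ratio remainder, the error is a product of two $O_P(G^{-1/2})$ deviations divided by a denominator converging to $\mu_N$, hence $o_P(G^{-1/2})$. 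The required $O_P(G^{-1/2})$ bounds on $\hat{\mu}^{N\bar Y}_{G,a} - \nu_a$ and $\hat{\mu}^{N}_{G,a} - \mu_N$ follow from the same central limit machinery applied to the vector $(N_g\bar Y_g(a), N_g)$, once finiteness of $E[(N_g\bar Y_g(a))^2]$ is verified: by Jensen, Assumption \ref{ass:QG}.(c) and the bound in Assumption \ref{ass:QG}.(f) one has $E[\bar Y_g(a)^2\mid N_g,Z_g]\le C$, whence $E[(N_g\bar Y_g(a))^2] \le C\,E[N_g^2] < \infty$ by Assumption \ref{ass:QG}.(e). Combining the two groups yields the asymptotic linear representation
\begin{equation*}
\sqrt G\,(\hat\theta_{2,G} - \theta_2) = \sqrt G\left(\hat{\mu}^{\tilde Y(1)}_{G,1} - \hat{\mu}^{\tilde Y(0)}_{G,0}\right) + o_P(1)~.
\end{equation*}

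Finally, the right-hand side is precisely a difference-in-means of the i.i.d.\ cluster-level potential outcomes $\tilde Y_g(1), \tilde Y_g(0)$ under the same covariate-adaptive assignment and the same strata $S_g = S(Z_g,N_g)$. I would therefore invoke the identical argument used in the proof of Theorem \ref{theorem:mainECA}, now applied to $\tilde Y_g(a)$ in place of $\bar Y_g(a)$. Since $E[\tilde Y_g(1)] - E[\tilde Y_g(0)] = 0$, this delivers $\sqrt G(\hat{\mu}^{\tilde Y(1)}_{G,1} - \hat{\mu}^{\tilde Y(0)}_{G,0}) \xrightarrow{d} N(0,\sigma_2^2)$, where $\sigma_2^2$ arises from the $\sigma_1^2$ formula by replacing $\bar Y_g(a)$, $\bar Y^{\dagger}_g(a)$, $\bar m_a(S_g)$ with $\tilde Y_g(a)$, $\tilde Y^{\dagger}_g(a)$, $\tilde m_a(S_g)$; the centering $E[\tilde Y_g(a)] = 0$ makes $\tilde m_a(S_g) = E[\tilde Y_g(a)\mid S_g]$, exactly as stated. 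Slutsky's theorem applied to the linear representation then gives the claim. I expect the main obstacle to be the rigorous control of the linearization remainder---specifically, establishing the $O_P(G^{-1/2})$ stochastic order of the numerator and denominator deviations under covariate-adaptive (rather than i.i.d.) assignment, which is precisely where the moment restriction $E[N_g^2]<\infty$ and the joint use of Assumptions \ref{ass:QG}.(c)--(f) are essential.
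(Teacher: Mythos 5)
Your proposal is correct, and it takes a genuinely different route from the paper's own proof, even though both ultimately rest on the same covariate-adaptive CLT machinery. The paper writes $\sqrt{G}(\hat\theta_{2,G}-\theta_2)=\sqrt{G}(h(\hat\Theta_G)-h(\Theta))$ with $h(w,x,y,z)=\tfrac{w}{x}-\tfrac{y}{z}$ and the four-vector $\hat\Theta_G$ of treatment- and control-group means of $\bar Y_g N_g$ and $N_g$; it then establishes joint asymptotic normality of $\sqrt{G}(\hat\Theta_G-\Theta)$ by decomposing it as $B'\mathbf{y}_G$ and invoking the partial-sum arguments of Lemma B.2 in \cite{bugni/canay/shaikh:2018}, writes out the $4\times 4$ limiting covariance matrix $\mathbb{V}$ entry by entry, and finishes with the delta method together with a nontrivial algebraic verification that $(\nabla h_0)'\mathbb{V}(\nabla h_0)=\sigma_2^2$. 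You instead linearize the two ratios by hand and exploit the exact identity $\frac{1}{\mu_N}\bigl(\hat{\mu}^{N\bar Y}_{G,a}-\nu_a\bigr)-\frac{\nu_a}{\mu_N^2}\bigl(\hat{\mu}^{N}_{G,a}-\mu_N\bigr)=\hat{\mu}^{\tilde Y(a)}_{G,a}$, which does check out (the constant terms cancel exactly, so the only error is the ratio remainder, which is a product of two $O_P(G^{-1/2})$ deviations and hence $o_P(G^{-1/2})$); the problem then collapses to a difference-in-means of the pre-centered i.i.d.\ variables $\tilde Y_g(1),\tilde Y_g(0)$, to which the argument of Theorem \ref{theorem:mainECA} applies verbatim, and since $E[\tilde Y_g(a)]=0$ the limiting variance emerges directly in the form \eqref{eq:mainSECT0}. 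Your supporting steps are also sound: the moment bound $E[(N_g\bar Y_g(a))^2]\le C\,E[N_g^2]<\infty$ via Assumptions \ref{ass:QG}.(c), (e), (f) mirrors the bound the paper itself derives in the proof of Theorem \ref{thm:limCR}, and the $O_P(G^{-1/2})$ rates for the group means under covariate-adaptive assignment are exactly what the Bugni--Canay--Shaikh lemmas deliver. What your route buys is economy: no $4\times4$ variance matrix and no closing algebra, because recognizing $\tilde Y_g(a)$ as the influence function produces $\sigma_2^2$ in the stated form for free---indeed this is the same device the paper uses later when it builds the variance estimator from the feasible analog $\hat Y_g$, following \cite{bai2022pairs} and \cite{liu2023inference}. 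What the paper's route buys is the joint limit law of all four sample moments, which keeps the delta-method step purely mechanical and is reusable for other functionals of $\hat\Theta_G$.
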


\begin{remark}\label{rem:side_id}
Note that, unlike $\hat{\theta}^{\rm alt}_G$ and $\hat{\theta}_{1,G}$, the estimator $\hat{\theta}_{2,G}$ cannot be computed without explicit knowledge of $N^{(G)} := (N_g : 1 \leq g \leq G)$. As explained in Section \ref{sec:parameters}, however, $\theta_2$ is in some instances equal to $\vartheta$, which may be consistently estimated using $\hat{\theta}^{\rm alt}_G$.  
\end{remark}

\begin{remark}\label{rem:finpop_contrast}
As mentioned in the introduction, our analysis is distinct from the finite-population analyses undertaken in, for instance, \cite{su2021model}. Here we compare our $\sigma_2^2$ to the variance of the difference-in-means estimator from such an analysis. Specifically, in the special case where $\mathcal{M}_g = \{1, 2, \ldots, N_g\}$ and $|\mathcal{S}| = 1$, $\sigma^2_2$ could alternatively be written as
\[\sigma_2^2 := \frac{1}{E[N_g]^2}\left(\frac{E\left[\left(\sum_{1 \le i \le N_g}\epsilon_{i,g}(1)\right)^2\right]}{\pi} + \frac{E\left[\left(\sum_{1 \le i \le N_g}\epsilon_{i,g}(0)\right)^2\right]}{1 - \pi}\right)~. \]
with
\[\epsilon_{i,g}(a) = Y_{i,g}(a) - \frac{E[N_g\bar{Y}_g(a)]}{E[N_g]}~.\]
It follows from Theorem 1 of \cite{su2021model} that the finite-population design-based variance is given by:
\begin{eqnarray*}
\sigma^2_{2,G,{\rm finpop}} &:=& \left(\frac{G}{N}\right)^2\left(
\frac{1}{G}\sum_{1 \le g \le G}\left[\frac{\left(\sum_{1\le i \le N_g}\tilde{\epsilon}_{i,g}(1)\right)^2}{\pi} + \frac{\left(\sum_{1\le i \le N_g}\tilde{\epsilon}_{i,g}(0)\right)^2}{1 - \pi}\right] \right.\\
&& \hspace{2cm} \left. - \frac{1}{G}\sum_{1 \le g \le G}\left[\sum_{1\le i \le N_g}\left(\tilde{\epsilon}_{i,g}(1) - \tilde{\epsilon}_{i,g}(0)\right)\right]^2\right)~,
\end{eqnarray*}
where
\begin{eqnarray*}
N &:=& \sum_{1 \le g \le G}N_g \\
\tilde{\epsilon}_{i,g}(a) &:=& Y_{i,g}(a) - \frac{1}{N}\sum_{1 \le g \le G} \sum_{1 \le i \le N_g}Y_{i,g}(a)~.
\end{eqnarray*}
We emphasize that in the finite-population framework adopted by \cite{su2021model}, all of the above quantities are non-random, and are derived under complete randomization with $|\mathcal{S}|=1$. From this, we see that the comparison between $\sigma^2_{2,G,{\rm finpop}}$ and $\sigma_2^2$ exactly mimics the comparison between the super-population and finite-population variance expressions for the difference-in-means estimator in the non-clustered setting \citep[see, for example,][]{ding2017bridging}. In particular, $\sigma^2_{2,G,{\rm finpop}}$ is made up of two terms: the first term corresponds to a finite-population analog of $\sigma^2_2$, whereas the second term, which enters negatively, can be interpreted as the gain in precision which results from observing the entire population. 
\end{remark}

\begin{remark}
As discussed in Remark \ref{rem:comparingestimands}, $\theta_1 = \theta_2$ whenever $N_g = k$ for all $1 \leq g \leq G$.  Furthermore, in this case we have $\hat \theta_{1,G} = \hat \theta_{2,G}$ and thus $\sigma_1^2 = \sigma_2^2$ as well.
\end{remark}

In parallel with our development in the preceding section, we note that $\hat{\theta}_{2,G}$ is most efficient when paired with an assignment mechanism that features $\tau(s) = 0$ for all $s \in \mathcal{S}$, and we show that the probability limit of the cluster-robust variance estimator is generally too large relative to $\sigma_2^2$.  
\begin{theorem}\label{thm:limCR}
Let $\tilde{\sigma}_{2,G}^2$ denote the cluster-robust estimator of the variance of the coefficient of $A_g$ in a weighted-least squares regression of $Y_{ig}$ on a constant and $A_g$, with weights equal to ${N_g/|\mathcal{M}_g|}$. This estimator can be written as
\begin{equation}
    \tilde\sigma^2_{2,G} = \tilde\sigma^2_{2,G}(1) + \tilde\sigma^2_{2,G}(0)~,
    \label{eq:hatsigma2exp}
\end{equation}
where, for $a \in \{0,1\}$, we define
\begin{equation}\label{eq:hatsigma2}
\tilde{\sigma}^2_{2,G}(a) := \frac{1}{\left(\frac{1}{G}\sum_{1\le g \le G}{N_g}I\{A_g = a\}\right)^2}\frac{1}{G}\sum_{1\le g \le G}\left[\left(\frac{N_g}{|\mathcal{M}_g|}\right)^2I\{A_g = a\}\left(\sum_{i\in \mathcal{M}_g} \hat{\epsilon}_{i,g}(a)\right)^2\right]~,
\end{equation}
with
\[\hat{\epsilon}_{i,g}(a) := Y_{i,g} - \frac{1}{\sum_{1\le g \le G}N_gI\{A_g = a\}}\sum_{1 \le g \le G}N_g\bar{Y}_gI\{A_g = a\}~.\]
Then, under Assumptions \ref{ass:assignment} and \ref{ass:QG}, 
\begin{equation}
    \tilde\sigma^2_{2,G} \xrightarrow{p} \frac{1}{\pi}{\text Var}[\tilde{Y}_g(1)] + \frac{1}{1 - \pi}{\text Var}[\tilde{Y}_g(0)] \ge \sigma^2_2~
    \label{eq:thm3p6}
\end{equation}
with equality if and only if \[(\pi(1 - \pi) - \tau(s))\left(\frac{1}{\pi}\tilde{m}_1(s) + \frac{1}{1 - \pi}\tilde{m}_0(s)\right)^2 = 0~,\] for every $s \in \mathcal{S}$.
\end{theorem}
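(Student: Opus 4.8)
The plan is to start from the explicit expression for $\tilde\sigma^2_{2,G}(a)$ in \eqref{eq:hatsigma2}, reduce it to a weighted cluster-level average by an algebraic cancellation, identify its probability limit, and then establish the inequality and equality condition by a variance decomposition that parallels the proof of Theorem~\ref{thm:limHC}. \textbf{Step 1 (algebraic reduction).} Write $\hat\beta_G(a) := \frac{\sum_{1\le g\le G} N_g\bar Y_g I\{A_g=a\}}{\sum_{1\le g\le G} N_g I\{A_g=a\}}$, so that $\hat\epsilon_{i,g}(a) = Y_{i,g} - \hat\beta_G(a)$ for every $i \in \mathcal{M}_g$. Since $\sum_{i\in\mathcal{M}_g} Y_{i,g} = |\mathcal{M}_g|\bar Y_g$, we get $\sum_{i\in\mathcal{M}_g}\hat\epsilon_{i,g}(a) = |\mathcal{M}_g|(\bar Y_g - \hat\beta_G(a))$, and the factor $(N_g/|\mathcal{M}_g|)^2$ cancels $|\mathcal{M}_g|^2$, leaving
\[\tilde\sigma^2_{2,G}(a) = \frac{\frac{1}{G}\sum_{1\le g\le G} N_g^2\left(\bar Y_g - \hat\beta_G(a)\right)^2 I\{A_g=a\}}{\left(\frac{1}{G}\sum_{1\le g\le G} N_g I\{A_g=a\}\right)^2}~.\]
The crucial point is that the sampling fraction $|\mathcal{M}_g|$ drops out entirely, so the object of interest is an $N_g$-weighted average of cluster means.

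\textbf{Step 2 (probability limit).} Here I would reuse the machinery developed for Theorems~\ref{theorem:mainECA} and~\ref{thm:limHC}. Assumption~\ref{ass:assignment}.(b) forces $\frac1G\sum_g I\{A_g=a,S_g=s\}\xrightarrow{p} p_a\,p(s)$ with $p_1=\pi$, $p_0=1-\pi$; combined with the conditional independence of Assumption~\ref{ass:assignment}.(a), the i.i.d.\ structure of Lemma~\ref{lemma:iid}, and the moment bound $E[N_g^2]<\infty$ from Assumption~\ref{ass:QG}.(e), the relevant weighted averages converge to their expectations stratum by stratum. Using $\bar Y_g = \bar Y_g(a)$ on $\{A_g=a\}$, this yields $\frac1G\sum_g N_g I\{A_g=a\}\xrightarrow{p} p_a E[N_g]$ and $\hat\beta_G(a)\xrightarrow{p} \beta(a):= E[N_g\bar Y_g(a)]/E[N_g]$, which is exactly the centering inside $\tilde Y_g(a)$. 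Replacing $\hat\beta_G(a)$ by $\beta(a)$ and controlling the cross terms via $E[N_g^2]<\infty$ and Assumption~\ref{ass:QG}.(f), the numerator converges to $p_a E[N_g^2(\bar Y_g(a)-\beta(a))^2]$. The identity $N_g(\bar Y_g(a)-\beta(a)) = E[N_g]\tilde Y_g(a)$ then gives $N_g^2(\bar Y_g(a)-\beta(a))^2 = E[N_g]^2\tilde Y_g(a)^2$, so that
\[\tilde\sigma^2_{2,G}(a)\xrightarrow{p} \frac{p_a E[N_g]^2 E[\tilde Y_g(a)^2]}{\left(p_a E[N_g]\right)^2} = \frac{1}{p_a}\var[\tilde Y_g(a)]~,\]
where $E[\tilde Y_g(a)]=0$ turns the second moment into a variance. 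Summing over $a\in\{0,1\}$ delivers the claimed limit.

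\textbf{Step 3 (inequality and equality condition).} Because $E[\tilde Y_g(a)]=0$, we have $\tilde m_a(S_g)=E[\tilde Y_g(a)\mid S_g]$, and the law of total variance gives $\var[\tilde Y_g(a)] = \var[\tilde Y^{\dagger}_g(a)] + E[\tilde m_a(S_g)^2]$. Substituting into the limit and subtracting $\sigma_2^2$ from \eqref{eq:mainSECT0}, the $\var[\tilde Y^{\dagger}_g(a)]$ terms cancel and the difference reduces to
\[E\!\left[\tfrac1\pi \tilde m_1(S_g)^2 + \tfrac{1}{1-\pi}\tilde m_0(S_g)^2 - \left(\tilde m_1(S_g)-\tilde m_0(S_g)\right)^2\right] - E\!\left[\tau(S_g)\Big(\tfrac1\pi \tilde m_1(S_g)+\tfrac{1}{1-\pi}\tilde m_0(S_g)\Big)^2\right]~.\]
The elementary identity $\tfrac1\pi a^2 + \tfrac1{1-\pi}b^2 - (a-b)^2 = \pi(1-\pi)\big(\tfrac1\pi a + \tfrac1{1-\pi}b\big)^2$ collapses this to
\[E\!\left[\left(\pi(1-\pi)-\tau(S_g)\right)\Big(\tfrac1\pi \tilde m_1(S_g)+\tfrac{1}{1-\pi}\tilde m_0(S_g)\Big)^2\right]~,\]
which is nonnegative since $\tau(s)\le\pi(1-\pi)$ by Assumption~\ref{ass:assignment}.(b). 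This establishes the inequality, and because $p(s)>0$ for every $s\in\mathcal S$, equality holds if and only if each summand vanishes, i.e.\ the stated condition.

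The main obstacle is Step~2: making rigorous the joint convergence of the $N_g$-weighted, treatment- and stratum-indexed averages, and in particular the replacement of $\hat\beta_G(a)$ by $\beta(a)$ inside the \emph{squared} term without leaving a residual contribution from large clusters. Unlike the unweighted case of Theorem~\ref{thm:limHC}, the factor $N_g^2$ in the numerator means that the law of large numbers and the cross-term control both hinge on the second-moment bound $E[N_g^2]<\infty$ of Assumption~\ref{ass:QG}.(e) together with the conditional bound of Assumption~\ref{ass:QG}.(f), which ensure $E[N_g^2\bar Y_g(a)^2]<\infty$ and deliver the induced heterogeneity bounds recorded after Assumption~\ref{ass:QG}. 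Steps~1 and~3 are purely algebraic and essentially identical in structure to the equally-weighted argument.
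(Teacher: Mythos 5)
Your Steps 1--3 follow essentially the same route as the paper's own proof: reduce \eqref{eq:hatsigma2} to an $N_g$-weighted average of cluster means, obtain the probability limit via the law-of-large-numbers machinery (Lemma C.4 of Bugni, Canay, and Shaikh, 2019) applied to averages of the form $\frac{1}{G}\sum_{1\le g\le G} N_g^l \bar{Y}_g^r(a) I\{A_g=a\}$, and then derive the inequality from $\var[\tilde{Y}_g(a)] = \var[\tilde{Y}^{\dagger}_g(a)] + E[\tilde{m}_a(S_g)^2]$. In fact, your Step 3 carries out explicitly the ``additional algebra'' the paper leaves to the reader, and your identity $\tfrac{1}{\pi}a^2 + \tfrac{1}{1-\pi}b^2 - (a-b)^2 = \pi(1-\pi)\bigl(\tfrac{a}{\pi} + \tfrac{b}{1-\pi}\bigr)^2$ is correct. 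However, there is one genuine omission: the theorem's first claim---that the cluster-robust sandwich estimator from the weighted least squares regression ``can be written as'' \eqref{eq:hatsigma2exp}--\eqref{eq:hatsigma2}---is part of what must be proved, and your proposal assumes it by declaring \eqref{eq:hatsigma2} to be the starting point. The paper proves this representation by writing out the sandwich formula $G\bigl(\sum_g X_g'X_g\bigr)^{-1}\bigl(\sum_g X_g'\hat{\epsilon}_g\hat{\epsilon}_g'X_g\bigr)\bigl(\sum_g X_g'X_g\bigr)^{-1}$ for the stacked regression in which each observation in cluster $g$ carries the weight $\sqrt{N_g/|\mathcal{M}_g|}$, computing both matrices entrywise, and extracting the coefficient-on-$A_g$ entry. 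Without this step, your argument establishes the limit of the right-hand side of \eqref{eq:hatsigma2} but not that this object is the cluster-robust variance estimator a practitioner actually computes, which is what makes the conservativeness conclusion relevant.

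On what you flag as the main obstacle in Step 2 (replacing $\hat{\beta}_G(a)$ by $\beta(a)$ inside the squared term without a residual contribution from large clusters): the paper sidesteps this entirely by expanding the square, so that $\tilde{\sigma}^2_{2,G}(a)$ becomes an explicit continuous function of the finitely many sample averages $\frac{1}{G}\sum_g N_g^l\bar{Y}_g^r(a)I\{A_g=a\}$ for $l\in\{1,2\}$ and $r\in\{0,1,2\}$, each of which converges in probability by Lemma C.4 using the moment bounds $E[N_g^l\bar{Y}_g^r(a)]<\infty$ (established as in the proof of Theorem \ref{theorem:mainECA} from Assumptions \ref{ass:QG}.(e)--(f)); the CMT then delivers the limit with no separate cross-term control. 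Your replacement-plus-cross-term argument can be made rigorous in the same way---the cross terms are products of $o_P(1)$ factors with $O_P(1)$ averages---but the full expansion is the cleaner formalization and is exactly how the paper resolves the issue you identify.
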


To facilitate the use of Theorem \ref{thm:mainSECT} for inference about $\theta_2$, we now provide an estimator for $\sigma^2_2$ which is consistent. Similarly to \cite{bai2022pairs} and \cite{liu2023inference}, we construct our estimator using a feasible analog of $\tilde{Y}_g(a)$ given by
\[\hat{Y}_g :=  \frac{N_g}{\frac{1}{G}\sum_{1 \le j \le G}N_j}\left(\bar{Y}_g - \frac{\frac{1}{G}\sum_{1 \le j \le G}\bar{Y}_jI\{A_j = A_g\}N_j}{\frac{1}{G}\sum_{1 \le j \le G}I\{A_j = A_g\}N_j}\right)~.\] 
We then define the following estimators:
\begin{eqnarray}
\hat{\xi}^2_{\tilde{Y}}(\pi) &:=& \frac{1}{\pi}\left(\hat{\mu}^{\hat{Y}^2}_{G,1} - \sum_{s \in \mathcal{S}}\frac{G(s)}{G}\hat{\mu}^{\hat{Y}}_{G,1}(s)^2\right) + \frac{1}{1 - \pi}\left(\hat{\mu}^{\hat{Y}^2}_{G,0} - \sum_{s \in \mathcal{S}}\frac{G(s)}{G}\hat{\mu}^{\hat{Y}}_{G,0}(s)^2\right) \label{eq:xiY} \\
\hat{\xi}^2_H &:=&  \sum_{s \in \mathcal{S}}\frac{G(s)}{G}\left(\left(\hat{\mu}^{\hat{Y}}_{G,1}(s) - \hat{\mu}^{\hat{Y}}_{G,1}\right) - \left(\hat{\mu}^{\hat{Y}}_{G,0}(s) - \hat{\mu}^{\hat{Y}}_{G,0}\right)\right)^2 \label{eq:xiH} \\
\hat{\xi}^2_{A}(\pi) &:=& \sum_{s \in \mathcal{S}}\tau(s)\frac{G(s)}{G}\left(\frac{1}{\pi}\left(\hat{\mu}^{\hat{Y}}_{G,1}(s) - 
\hat{\mu}^{\hat{Y}}_{G,1}\right) + \frac{1}{1-\pi}\left(\hat{\mu}^{\hat{Y}}_{G,0}(s) - \hat{\mu}^{\hat{Y}}_{G,0}\right)\right)^2~, \label{eq:xiA} 
\end{eqnarray}
and set $\hat{\sigma}^2_{2,G} :=  \hat{\xi}^2_{\tilde{Y}}(\pi) + \hat{\xi}^2_H + \hat{\xi}^2_{A}(\pi)$.
The following theorem establishes the consistency of $\hat \sigma_{2,G}^2$ for $\sigma_2^2$.  In the statement of the theorem, we again make use of the notation introduced preceding Theorem \ref{theorem:sigma1}.

\begin{theorem} \label{theorem:sigma2}
Under Assumptions \ref{ass:assignment} and \ref{ass:QG}, $$\hat \sigma_{2,G}^2 \stackrel{p}{\rightarrow} \sigma_2^2$$ as $G \rightarrow \infty$.  Thus, for $\sigma_2^2 > 0$ and for any $\alpha \in (0,1)$, $$P\left \{\theta_2 \in \left [\hat \theta_{2,G} \pm \frac{\hat \sigma_{2,G}}{\sqrt{G}} \Phi^{-1}\left (1 - \frac{\alpha}{2} \right ) \right ] \right \} \rightarrow 1 - 
\alpha$$ as $G \rightarrow \infty$.
\end{theorem}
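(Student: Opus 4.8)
The plan is to reduce the coverage statement to the consistency claim $\hat\sigma^2_{2,G}\xrightarrow{p}\sigma_2^2$: once this is in hand, Theorem \ref{thm:mainSECT} gives $\sqrt G(\hat\theta_{2,G}-\theta_2)\xrightarrow{d}N(0,\sigma_2^2)$, and a standard Slutsky/continuous-mapping argument (studentizing by $\hat\sigma_{2,G}$, which converges to $\sigma_2>0$) yields an asymptotically pivotal statistic and hence the stated limit $1-\alpha$ for the Wald interval. So the entire content is the consistency of $\hat\sigma^2_{2,G}$, which I would prove componentwise, showing $\hat\xi^2_{\tilde Y}(\pi)\xrightarrow{p}\frac1\pi\var[\tilde Y^\dagger_g(1)]+\frac1{1-\pi}\var[\tilde Y^\dagger_g(0)]$, $\hat\xi^2_H\xrightarrow{p}E[(\tilde m_1(S_g)-\tilde m_0(S_g))^2]$, and $\hat\xi^2_A(\pi)\xrightarrow{p}E[\tau(S_g)(\frac1\pi\tilde m_1(S_g)+\frac1{1-\pi}\tilde m_0(S_g))^2]$, so that their sum reproduces the terms of $\sigma_2^2$ in \eqref{eq:mainSECT0}. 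This mirrors the argument already used for Theorem \ref{theorem:sigma1}; the only genuinely new ingredient is that the estimators \eqref{eq:xiY}--\eqref{eq:xiA} are built from the feasible surrogate $\hat Y_g$ rather than the infeasible $\tilde Y_g(a)$.

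The crux, and the main obstacle, is therefore the feasible-to-infeasible reduction: showing that on the subsample $\{A_g=a\}$ the surrogate $\hat Y_g$ may be replaced by $\tilde Y_g(a)$ with asymptotically negligible error in all the averages appearing in \eqref{eq:xiY}--\eqref{eq:xiA}. Writing $\hat Y_g$ on $\{A_g=a\}$ as $\frac{N_g}{\bar N_G}(\bar Y_g(a)-\hat c_a)$ with $\bar N_G=\frac1G\sum_j N_j$ and $\hat c_a$ the size-weighted within-group mean of $\bar Y_j$, I would first establish $\bar N_G\xrightarrow{p}E[N_g]$ by the law of large numbers and $\hat c_a\xrightarrow{p}E[\bar Y_g(a)N_g]/E[N_g]$, which is exactly the centering in $\tilde Y_g(a)$. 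The latter relies on the covariate-adaptive law-of-large-numbers machinery already invoked for Theorem \ref{thm:mainSECT}: conditioning on $S^{(G)}$, Assumption \ref{ass:assignment}.(a) makes $A^{(G)}$ independent of $W^{(G)}$, and with $D_G(s)/G\xrightarrow{p}0$ one obtains $\frac1G\sum_g N_g\bar Y_g(a)I\{A_g=a\}\xrightarrow{p}\pi_a E[N_g\bar Y_g(a)]$ and $\frac1G\sum_g N_gI\{A_g=a\}\xrightarrow{p}\pi_a E[N_g]$ (with $\pi_1=\pi$, $\pi_0=1-\pi$), so $\hat c_a$ has the claimed limit. On $\{A_g=a\}$ the difference $\hat Y_g-\tilde Y_g(a)$ is then a sum of products of these $o_P(1)$ scalar errors with the cluster-level quantities $N_g\bar Y_g(a)$ and $N_g$. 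Since Assumptions \ref{ass:QG}.(e)--(f) give $\frac1G\sum_g N_g^2=O_P(1)$ and $\frac1G\sum_g N_g^2\bar Y_g(a)^2=O_P(1)$, averages such as $\frac1G\sum_g(\hat Y_g-\tilde Y_g(a))^2$ and the relevant cross terms with $\tilde Y_g(a)$ are $o_P(1)$, which is precisely what is needed to pass from the $\hat Y_g$-averages to the $\tilde Y_g(a)$-averages. This second-moment control, leaning on the $E[N_g^2]<\infty$ hypothesis, is where the argument is most delicate.

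With the reduction in place, I would finish by evaluating the three limits using the same conditional-LLN facts together with $G(s)/G\xrightarrow{p}p(s)$. Because Assumption \ref{ass:assignment}.(a) renders the treated (resp.\ control) clusters representative within each stratum, the within-group, within-stratum, and overall averages of $\tilde Y_g(a)$ and $\tilde Y_g(a)^2$ satisfy $\hat\mu^{\tilde Y(a)}_{G,a}(s)\xrightarrow{p}E[\tilde Y_g(a)\mid S_g=s]$, $\hat\mu^{\tilde Y(a)}_{G,a}\xrightarrow{p}E[\tilde Y_g(a)]$, and $\hat\mu^{\tilde Y(a)^2}_{G,a}\xrightarrow{p}E[\tilde Y_g(a)^2]$. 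Plugging these into \eqref{eq:xiY} and applying the law of total variance, $E[\tilde Y_g(a)^2]-E[E[\tilde Y_g(a)\mid S_g]^2]=E[\var(\tilde Y_g(a)\mid S_g)]=\var[\tilde Y^\dagger_g(a)]$, delivers the first component; the differences $\hat\mu^{\tilde Y(a)}_{G,a}(s)-\hat\mu^{\tilde Y(a)}_{G,a}\xrightarrow{p}\tilde m_a(s)$ feed directly into \eqref{eq:xiH} and \eqref{eq:xiA} to yield $E[(\tilde m_1(S_g)-\tilde m_0(S_g))^2]$ and $E[\tau(S_g)(\frac1\pi\tilde m_1(S_g)+\frac1{1-\pi}\tilde m_0(S_g))^2]$, respectively. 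Summing the three limits reproduces $\sigma_2^2$ as in \eqref{eq:mainSECT0}, completing the consistency proof and hence, via the opening reduction, the theorem.
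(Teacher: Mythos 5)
Your proposal is correct and follows essentially the same route as the paper's proof: componentwise consistency of $\hat{\xi}^2_{\tilde{Y}}(\pi)$, $\hat{\xi}^2_H$, $\hat{\xi}^2_A(\pi)$, a feasible-to-infeasible replacement of $\hat{Y}_g$ by $\tilde{Y}_g(a)$ within each treatment arm, the covariate-adaptive law of large numbers (Lemma C.4 of Bugni, Canay, and Shaikh, 2019) applied to the infeasible averages, and Slutsky for the coverage claim. The only difference is cosmetic: you control the replacement error via an $L^2$/Cauchy--Schwarz bound on $\frac{1}{G}\sum_{1\le g\le G}(\hat{Y}_g-\tilde{Y}_g(a))^2$, whereas the paper expands the averaged difference $\frac{1}{G}\sum_{1\le g\le G}(\hat{Y}_g-\tilde{Y}_g(a))I\{A_g=a,S_g=s\}$ directly into products of convergent sample averages, with both arguments resting on the same moment bounds from Assumptions \ref{ass:QG}.(e)--(f).
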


\begin{remark}
It can be shown that a sufficient condition under which $\sigma^2_2 > 0$ holds is that $\var[\tilde{Y}_g(a) - E[\tilde{Y}_g(a)|S_g]] > 0$ for some $a \in \{0,1\}$. Similarly to the discussion in Remark \ref{rem:sigma_pos}, we expect this to hold outside of pathological cases.
\end{remark}

\begin{remark} \label{rem:theta4hat} 
A natural next step to consider would be the inclusion of additional baseline covariates in the regression specification. To that end, in Appendix \ref{sec:adjust}, we consider an adjustment strategy based on an alternative estimator of $\theta_2$ given by
\begin{equation} \label{eq:theta4hat}
\hat \theta^{\rm sd}_{2,G} := \frac{\frac{1}{G} \sum_{1 \leq g \leq G} \bar Y_g N_g A_g}{\bar N_G \bar A_G} - \frac{\frac{1}{G} \sum_{1 \leq g \leq G}  \bar Y_g N_g (1- A_g)}{\bar N_G (1 - \bar A_G)}~,
\end{equation}
where $\bar N_G := \frac{1}{G} \sum_{1 \leq g \leq G} N_g$ and $\bar A_G := \frac{1}{G} \sum_{1 \leq g \leq G} A_g$.  Note that $\hat{\theta}^{\rm sd}_{2,G}$ may be obtained as the estimator of the coefficient of $A_g$ in an ordinary least squares regression of $\Gamma_{g,G} := \bar{Y}_g(N_g/\bar{N}_G)$ on a constant and $A_g$. \cite{su2021model} argue in the context of completely randomized experiments that $\hat{\theta}^{\rm sd}_{2,G}$ is well suited for the inclusion of additional baseline covariates (particularly when $N_g$ is included as a regressor). Moreover, we conjecture in the appendix that a fully non-parametric covariate adjustment strategy based on this estimator is, in fact, efficient.
\end{remark}

\section{Simulations}\label{sec:sims}
In this section, we illustrate the results in Section \ref{sec:main} with a simulation study. In all cases, data are generated as
\begin{equation}\label{eq:dgp}
  Y_{i,g}(a) = \eta_{g}(a)Z_{g,1} + \tilde m_a(Z_{g,2}) +  U_{i,g}(a)~,
\end{equation}
for $a\in\{0,1\}$, where
\begin{itemize}
  \item $\eta_{g}(a)$ are i.i.d.\ with $\eta_{g}(0)\sim U[0,1]$, and $\eta_{g}(1)\sim U[0,5]$.
  \item $U_{i,g}(a)$ are i.i.d.\ with $U_{i,g}(a)\sim N(0, \sigma(a))$ and $\sigma(1)=\sqrt{2}>\sigma(0)=1$.
  \item $\tilde m_a (Z_{g,2})= m_a(Z_{g,2}) - E[m_a(Z_{g,2})]$ where 
  $$ m_1(Z_{g,2}) = Z_{g,2}\quad \text{ and }\quad m_0(Z_{g,2}) = -\log(Z_{g,2}+3)I\{Z_{g,2}\le \frac{1}{2}\}~.$$
  \item The distribution of $Z_g:=(Z_{g,1},Z_{g,2})$ varies by design as described below. 
\end{itemize}

We consider three alternative distributions of cluster sizes. To describe them, let $BB(a,b,n_{\rm supp})$ be the Beta-Binomial distribution with parameters $a$ and $b$ and support on $\{0,\dots,n_{\rm supp}\}$. We then define $$ N_g = 10(B+1) \quad \text{ where } \quad B\sim BB(a,b,n_{\rm supp})~,$$
for the following values of $(a,b)$ and $n_{\rm supp}$, 
\begin{itemize}
  \item $(a,b)= (1,1)$: uniform pmf on $10$ to $N_{\rm max} = 10(n_{\rm supp}+1)$.
  \item $(a,b)= (0.4,0.4)$: U-shaped pmf on $10$ to $N_{\rm max} = 10(n_{\rm supp}+1)$.
  \item $(a,b)= (10,50)$: bell-shaped pmf on $10$ to $N_{\rm max} = 10(n_{\rm supp}+1)$ with a long right tail.
\end{itemize}

\begin{figure}
\begin{center}
\includegraphics[width=0.9\textwidth]{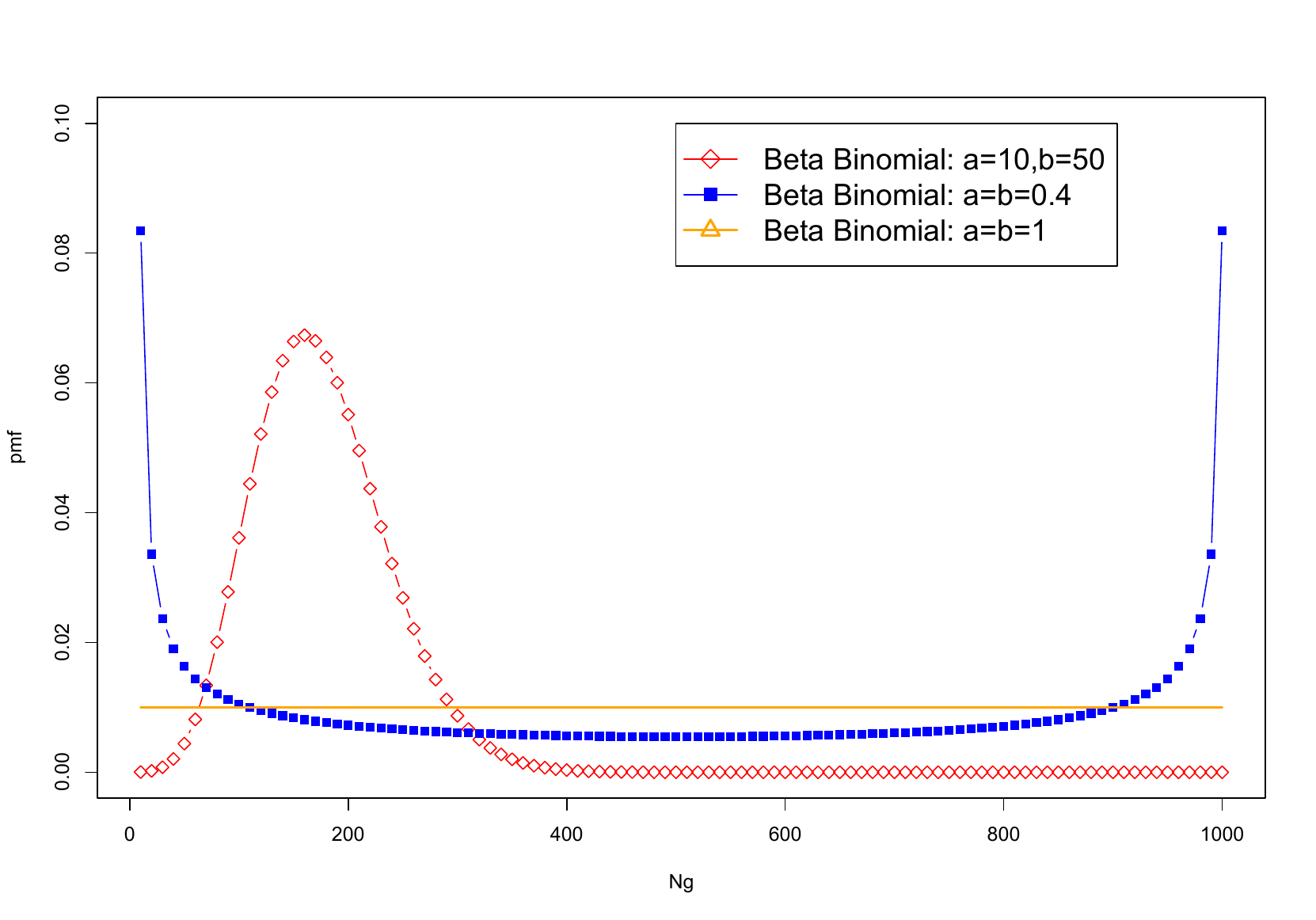}
\end{center}
\caption{Three probability mass functions (pmf) of $N_g$ when $N_{\rm max}=1000$}\label{fig:pmf}
\end{figure}

For each of the three distributions of cluster sizes, we consider three alternative ways to draw the observations within the $g$th cluster that are sampled by the researcher, $\mathcal M_g$: (a) $|\mathcal M_g|=N_g$, (b) $|\mathcal M_g|=10$, and (e) $|\mathcal M_g|=\max\{10,\min\{\gamma N_g,200\}\}$ with $\gamma=0.4$.

The combination of the three distributions of $N_g$ and the three distributions of $\mathcal M_g$ leads to 9 alternative specifications. For each of these specifications, we consider in addition two designs for the distribution of $Z_g$ as follows, 
\begin{itemize}
  \item {\bf Design 1}: $Z_{g,1}\independent N_g$ with $Z_{g,1}\in\{-1,1\}$ i.i.d.\ with $p_z\equiv P\{Z_{g,1}=1\}=1/2$. 
  \item {\bf Design 2}: $Z_{g,1} = Z_{g,\rm big}I\{N_g\ge E[N_g]\}+Z_{g,\rm small}I\{N_g <E[N_g]\}$ where $Z_{g,\rm big}\in\{-1,1\}$ with $p_z=3/4$ and $Z_{g,\rm small}\in\{-1,1\}$ i.i.d.\ with $p_z=1/4$.  
  \item In both designs, $Z_{g,2}\independent N_g$ with $Z_{g,2} \sim \text{Beta}(2,2)$ (re-centered and re-scaled by the population mean and variance to have mean zero and variance one).
\end{itemize}

Finally, treatment assignment $A_g$ follows a covariate-adaptive randomization (CAR) mechanism based on stratified block randomization with $\pi=\frac{1}{2}$ within each stratum. Concretely, we stratify the observations as follows:
\begin{itemize} 
\item {\bf CAR-1}: $S(\cdot) \independent N_g$ where strata are determined by dividing the support of $Z_{g,2}$ into $|\mathcal{S}|=10$ intervals of equal length and letting $S(Z_{g,2})$ be the function that returns the interval in which $Z_{g,2}$ lies.
\item {\bf CAR-2}: $S(\cdot)\not \independent N_g$ where strata are determined by the cartesian-product of dividing the support of $Z_{g,2}$ into $|\mathcal{S}|/2$ intervals of equal length and letting $S(Z_{g,2})$ be the function that returns the interval in which $Z_{g,2}$ lies, and dividing the support of $N_g$ by whether $N_g$ is above or below the median of $N_g$. The total number of strata is $|\mathcal{S}| = |\mathcal{S}|/2\times 2$ and we set $|\mathcal{S}|=10$ as in CAR-1. 
\end{itemize}
In principle, we could also consider other assignment mechanisms such as simple random sampling. We decided to focus on stratified block randomization because it is prevalent in practice and our results show that it dominates other mechanisms for which $\tau(s) \ne 0$ in terms of asymptotic efficiency.

The model in \eqref{eq:dgp}, as well as the two CAR designs, follow closely the original designs for covariate-adaptive randomization with individual-level data considered in \cite{bugni/canay/shaikh:2018,bugni/canay/shaikh:2019}. Note that for these designs, we obtain that
\begin{equation*}
  E[Y_{i,g}(1)-Y_{i,g}(0)|N_g] = 2E[Z_{g,1}|N_g]~.
\end{equation*}
In Design 1, it follows that $\theta_1=\theta_2=0$. In Design 2, on the other hand, it follows that 
\begin{equation*}
  E[Z_{g,1}|N_g] = \left \lbrace \begin{array}{cc}
  E[Z_{g,\rm big}]= 1/2 & \text{ if } N_g\ge E[N_g]\\
  E[Z_{g,\rm small}]= -1/2 & \text{ if } N_g< E[N_g]
  \end{array}\right. ~,
\end{equation*}
and so 
\begin{align*}
  \theta_1 &= 2P\{N_g\ge E[N_g]\} - 1\\
  \theta_2 &= E\left[\frac{N_g}{E[N_g]}\mid N_g\ge E[N_g]\right]P\{N_g\ge E[N_g]\}-E\left[\frac{N_g}{E[N_g]}\mid N_g< E[N_g]\right]P\{N_g< E[N_g]\}~.
\end{align*}

For each of the above 9 specifications and for each design of $Z_{g,1}$ and treatment assignment mechanism, we generate samples using $G \in \{100, 5000\}$ and report: the true values of $(\theta_1,\theta_2)$ defined in \eqref{eq:ECA} and \eqref{eq:SECT}, the average across simulations of the estimated values $(\hat \theta_{1,G},\hat\theta_{2,G})$ defined in \eqref{eq:theta2hat} and \eqref{eq:theta3hat}, the average across simulations of the estimated standard deviations $(\hat \sigma_{1,G}, \hat \sigma_{2,G})$ defined in Sections \ref{sec:equal} and \ref{sec:size}, and the empirical coverage of the $95\%$ confidence intervals defined in Theorems \ref{theorem:sigma1} and \ref{theorem:sigma2}. The results of our simulations are presented in Tables \ref{table:1} to \ref{table:6}, where in all cases we conducted $5,000$ replications. In each table, we find that our empirical coverage probabilities are close to $95\%$ in all cases.

\begin{table}[h!]
\centering
\small
\begin{tabular}{ll|rrrrrrrr}
\hline
 \multicolumn{2}{l}{ CAR-1 $\mid$ Design 1} & \multicolumn{2}{c}{True values} & \multicolumn{2}{c}{Estimated} & \multicolumn{2}{c}{Estimated s.d.} & \multicolumn{2}{c}{Cov. Prob.}   \\ 
 \cmidrule(lr){3-4} \cmidrule(lr){5-6} \cmidrule(lr){7-8} \cmidrule(lr){9-10}
 \multicolumn{1}{l}{$\mathcal M_g$}  & \multicolumn{1}{l}{$N_g$} & \multicolumn{1}{c}{$\theta_1$} & \multicolumn{1}{c}{$\theta_2$} & \multicolumn{1}{c}{$\hat\theta_{1,G}$} & \multicolumn{1}{c}{$\hat\theta_{2,G}$} & \multicolumn{1}{c}{$\hat\sigma_{1,G}$} & \multicolumn{1}{c}{$\hat\sigma_{2,G}$} & \multicolumn{1}{c}{$CS_{1,G}$} & \multicolumn{1}{c}{$CS_{2,G}$} \\ \hline\hline
$N_g$& $Bb(1,1)$ & 0.0000 & 0.0000 & -0.0016 & 0.0016 & 4.2885 & 4.9375 & 0.9440 & 0.9426 \\ 
     & $Bb(0.4,0.4)$ & 0.0000 & 0.0000 & -0.0080 & -0.0070 & 4.2864 & 5.2952 & 0.9454 & 0.9310 \\ 
     & $Bb(10,50)$ & 0.0000 & 0.0000 & -0.0001 & -0.0010 & 4.2808 & 4.5852 & 0.9444 & 0.9486 \\ 
\hline
$10$ & $Bb(1,1)$ & 0.0000 & 0.0000 & 0.0007 & -0.0000 & 4.3297 & 4.9780 & 0.9426 & 0.9330 \\ 
     & $Bb(0.4,0.4)$ & 0.0000 & 0.0000 & -0.0008 & 0.0015 & 4.3385 & 5.3582 & 0.9460 & 0.9438 \\ 
     & $Bb(10,50)$ & 0.0000 & 0.0000 & 0.0023 & 0.0063 & 4.3414 & 4.6545 & 0.9436 & 0.9440 \\
\hline
$\gamma N_g$ & $Bb(1,1)$ & 0.0000 & 0.0000 & 0.0090 & 0.0087 & 4.2827 & 4.9091 & 0.9346 & 0.9338 \\ 
    & $Bb(0.4,0.4)$ & 0.0000 & 0.0000 & -0.0089 & -0.0044 & 4.2871 & 5.3089 & 0.9426 & 0.9376 \\ 
    & $Bb(10,50)$ & 0.0000 & 0.0000 & 0.0049 & 0.0064 & 4.2994 & 4.5949 & 0.9436 & 0.9470 \\ 
   \hline
\multicolumn{2}{l}{CAR-1 $\mid$ Design 2} & \multicolumn{1}{c}{$\theta_1$} & \multicolumn{1}{c}{$\theta_2$} & \multicolumn{1}{c}{$\hat\theta_{1,G}$} & \multicolumn{1}{c}{$\hat\theta_{2,G}$} & \multicolumn{1}{c}{$\hat\sigma_{1,G}$} & \multicolumn{1}{c}{$\hat\sigma_{2,G}$} & \multicolumn{1}{c}{$CS_{1,G}$} & \multicolumn{1}{c}{$CS_{2,G}$} \\
   \hline
$N_g$ & $Bb(1,1)$ & 0.0000 & 0.4900 & -0.0036 & 0.4819 & 4.2792 & 4.7416 & 0.9474 & 0.9454 \\ 
   & $Bb(0.4,0.4)$ & 0.0000 & 0.6581 & -0.0142 & 0.6387 & 4.2870 & 5.0338 & 0.9458 & 0.9424 \\ 
    & $Bb(10,50)$ & -0.1407 & 0.1625 & -0.0822 & 0.2172 & 4.2825 & 4.5250 & 0.9342 & 0.9440 \\ 
\hline
  $10$ & $Bb(1,1)$ & 0.0000 & 0.4900 & -0.0044 & 0.4844 & 4.3439 & 4.8198 & 0.9448 & 0.9454 \\ 
       & $Bb(0.4,0.4)$ & 0.0000 & 0.6581 & -0.0168 & 0.6366 & 4.3399 & 5.1226 & 0.9402 & 0.9426 \\
        & $Bb(10,50)$ & -0.1407 & 0.1625 & -0.0807 & 0.2173 & 4.3346 & 4.5764 & 0.9480 & 0.9526 \\ 
\hline
  $\gamma N_g$ & $Bb(1,1)$ & 0.0000 & 0.4900 & -0.0129 & 0.4673 & 4.2884 & 4.7446 & 0.9548 & 0.9428 \\ 
          & $Bb(0.4,0.4)$ & 0.0000 & 0.6581 & -0.0147 & 0.6366 & 4.2852 & 5.0404 & 0.9404 & 0.9462 \\ 
           & $Bb(10,50)$ & -0.1407 & 0.1625 & -0.0800 & 0.2201 & 4.2922 & 4.5273 & 0.9416 & 0.9398 \\ 
\hline\hline
\end{tabular}
\caption {\label{tab:table1.g100.n500} \small Results for $G=100$, $N_{\rm max}=500$, $Z_g \independent N_g$ (Design 1), $Z_g \not\independent N_g$ (Design 2), and CAR-1.} \label{table:1}
\end{table}

\vspace{-3mm}

\begin{table}[h!]
\centering
\small
\begin{tabular}{ll|rrrrrrrr}
\hline
 \multicolumn{2}{l}{CAR-1 $\mid$ Design 1} & \multicolumn{2}{c}{True values} & \multicolumn{2}{c}{Estimated} & \multicolumn{2}{c}{Estimated s.d.} & \multicolumn{2}{c}{Cov. Prob.}   \\ 
 \cmidrule(lr){3-4} \cmidrule(lr){5-6} \cmidrule(lr){7-8} \cmidrule(lr){9-10}
 \multicolumn{1}{l}{$\mathcal M_g$}  & \multicolumn{1}{l}{$N_g$} & \multicolumn{1}{c}{$\theta_1$} & \multicolumn{1}{c}{$\theta_2$} & \multicolumn{1}{c}{$\hat\theta_{1,G}$} & \multicolumn{1}{c}{$\hat\theta_{2,G}$} & \multicolumn{1}{c}{$\hat\sigma_{1,G}$} & \multicolumn{1}{c}{$\hat\sigma_{2,G}$} & \multicolumn{1}{c}{$CS_{1,G}$} & \multicolumn{1}{c}{$CS_{2,G}$} \\ \hline\hline
$N_g$ & $Bb(1,1)$ & 0.0000 & 0.0000 & -0.0004 & -0.0057 & 4.2824 & 4.9264 & 0.9408 & 0.9382 \\ 
   & $Bb(0.4,0.4)$ & 0.0000 & 0.0000 & 0.0002 & 0.0008 & 4.2835 & 5.3107 & 0.9388 & 0.9388 \\ 
    & $Bb(10,50)$ & 0.0000 & 0.0000 & -0.0061 & -0.0080 & 4.2803 & 4.5365 & 0.9438 & 0.9436 \\ 
\hline 
  $10$ & $Bb(1,1)$ & 0.0000 & 0.0000 & -0.0075 & -0.0102 & 4.3407 & 4.9950 & 0.9374 & 0.9412 \\ 
       & $Bb(0.4,0.4)$ & 0.0000 & 0.0000 & 0.0009 & 0.0052 & 4.3405 & 5.3903 & 0.9386 & 0.9426 \\ 
        & $Bb(10,50)$ & 0.0000 & 0.0000 & -0.0129 & -0.0147 & 4.3358 & 4.5938 & 0.9442 & 0.9490 \\ 
 \hline
  $\gamma N_g$ & $Bb(1,1)$ & 0.0000 & 0.0000 & -0.0115 & -0.0146 & 4.2837 & 4.9416 & 0.9438 & 0.9442 \\ 
          & $Bb(0.4,0.4)$ & 0.0000 & 0.0000 & -0.0095 & -0.0077 & 4.2907 & 5.3277 & 0.9386 & 0.9392 \\ 
           & $Bb(10,50)$ & 0.0000 & 0.0000 & -0.0022 & -0.0051 & 4.2841 & 4.5295 & 0.9434 & 0.9446 \\ 
   \hline
\multicolumn{2}{l}{CAR-1 $\mid$ Design 2} & \multicolumn{1}{c}{$\theta_1$} & \multicolumn{1}{c}{$\theta_2$} & \multicolumn{1}{c}{$\hat\theta_{1,G}$} & \multicolumn{1}{c}{$\hat\theta_{2,G}$} & \multicolumn{1}{c}{$\hat\sigma_{1,G}$} & \multicolumn{1}{c}{$\hat\sigma_{2,G}$} & \multicolumn{1}{c}{$CS_{1,G}$} & \multicolumn{1}{c}{$CS_{2,G}$} \\
   \hline
$N_g$ & $Bb(1,1)$ & 0.0000 & 0.4950 & 0.0005 & 0.4895 & 4.2746 & 4.7550 & 0.9396 & 0.9410 \\ 
   & $Bb(0.4,0.4)$ & 0.0000 & 0.6690 & 0.0039 & 0.6711 & 4.2855 & 5.0689 & 0.9480 & 0.9510 \\ 
    & $Bb(10,50)$ & -0.0635 & 0.2100 & -0.0691 & 0.2024 & 4.2806 & 4.4677 & 0.9414 & 0.9486 \\ 
\hline
  $10$ & $Bb(1,1)$ & 0.0000 & 0.4950 & -0.0071 & 0.4834 & 4.3384 & 4.8310 & 0.9456 & 0.9442 \\ 
       & $Bb(0.4,0.4)$ & 0.0000 & 0.6690 & -0.0054 & 0.6533 & 4.3341 & 5.1559 & 0.9456 & 0.9446 \\ 
        & $Bb(10,50)$ & -0.0635 & 0.2100 & -0.0687 & 0.1973 & 4.3408 & 4.5444 & 0.9426 & 0.9436 \\ 
\hline
  $\gamma N_g$ & $Bb(1,1)$ & 0.0000 & 0.4950 & -0.0192 & 0.4723 & 4.2826 & 4.7515 & 0.9456 & 0.9440 \\ 
          & $Bb(0.4,0.4)$ & 0.0000 & 0.6690 & -0.0060 & 0.6638 & 4.2876 & 5.0702 & 0.9432 & 0.9460 \\ 
           & $Bb(10,50)$ & -0.0635 & 0.2100 & -0.0708 & 0.1987 & 4.2862 & 4.4822 & 0.9446 & 0.9472 \\ 
\hline\hline
\end{tabular}
\caption {\label{tab:table2.g100.n1000} \small Results for $G=100$, $N_{\rm max}=1,000$, $Z_g \independent N_g$ (Design 1), $Z_g \not\independent N_g$ (Design 2), and CAR-1.}  
\end{table}

\pagebreak
\begin{table}[h!]
\centering
\small
\begin{tabular}{ll|rrrrrrrr}
\hline
 \multicolumn{2}{l}{CAR-1 $\mid$ Design 1} & \multicolumn{2}{c}{True values} & \multicolumn{2}{c}{Estimated} & \multicolumn{2}{c}{Estimated s.d.} & \multicolumn{2}{c}{Cov. Prob.}   \\ 
 \cmidrule(lr){3-4} \cmidrule(lr){5-6} \cmidrule(lr){7-8} \cmidrule(lr){9-10}
 \multicolumn{1}{l}{$\mathcal M_g$}  & \multicolumn{1}{l}{$N_g$} & \multicolumn{1}{c}{$\theta_1$} & \multicolumn{1}{c}{$\theta_2$} & \multicolumn{1}{c}{$\hat\theta_{1,G}$} & \multicolumn{1}{c}{$\hat\theta_{2,G}$} & \multicolumn{1}{c}{$\hat\sigma_{1,G}$} & \multicolumn{1}{c}{$\hat\sigma_{2,G}$} & \multicolumn{1}{c}{$CS_{1,G}$} & \multicolumn{1}{c}{$CS_{2,G}$} \\ \hline\hline
$N_g$ & $Bb(1,1)$ & 0.0000 & 0.0000 & 0.0003 & -0.0001 & 4.3688 & 5.0513 & 0.9480 & 0.9546 \\ 
   & $Bb(0.4,0.4)$ & 0.0000 & 0.0000 & -0.0009 & -0.0010 & 4.3740 & 5.4474 & 0.9494 & 0.9530 \\ 
    & $Bb(10,50)$ & 0.0000 & 0.0000 & 0.0008 & 0.0011 & 4.3690 & 4.6261 & 0.9532 & 0.9492 \\ 
\hline
  $10$ & $Bb(1,1)$ & 0.0000 & 0.0000 & 0.0001 & -0.0003 & 4.4315 & 5.1258 & 0.9542 & 0.9560 \\ 
       & $Bb(0.4,0.4)$ & 0.0000 & 0.0000 & 0.0006 & 0.0013 & 4.4333 & 5.5313 & 0.9542 & 0.9510 \\ 
        & $Bb(10,50)$ & 0.0000 & 0.0000 & 0.0015 & 0.0014 & 4.4330 & 4.6940 & 0.9538 & 0.9546 \\ 
\hline
  $\gamma N_g$ & $Bb(1,1)$ & 0.0000 & 0.0000 & 0.0018 & 0.0024 & 4.3725 & 5.0541 & 0.9500 & 0.9468 \\ 
          & $Bb(0.4,0.4)$ & 0.0000 & 0.0000 & -0.0008 & -0.0006 & 4.3785 & 5.4505 & 0.9548 & 0.9522 \\ 
           & $Bb(10,50)$ & 0.0000 & 0.0000 & 0.0025 & 0.0022 & 4.3766 & 4.6319 & 0.9580 & 0.9598 \\ 
   \hline
\multicolumn{2}{l}{CAR-1 $\mid$ Design 2} & \multicolumn{1}{c}{$\theta_1$} & \multicolumn{1}{c}{$\theta_2$} & \multicolumn{1}{c}{$\hat\theta_{1,G}$} & \multicolumn{1}{c}{$\hat\theta_{2,G}$} & \multicolumn{1}{c}{$\hat\sigma_{1,G}$} & \multicolumn{1}{c}{$\hat\sigma_{2,G}$} & \multicolumn{1}{c}{$CS_{1,G}$} & \multicolumn{1}{c}{$CS_{2,G}$} \\
   \hline
$N_g$ & $Bb(1,1)$ & 0.0000 & 0.4950 & 0.0019 & 0.4964 & 4.3680 & 4.8506 & 0.9598 & 0.9606 \\ 
   & $Bb(0.4,0.4)$ & 0.0000 & 0.6690 & -0.0002 & 0.6675 & 4.3750 & 5.1796 & 0.9548 & 0.9552 \\ 
    & $Bb(10,50)$ & -0.0635 & 0.2100 & -0.0642 & 0.2088 & 4.3677 & 4.5609 & 0.9492 & 0.9508 \\ 
\hline
  $10$ & $Bb(1,1)$ & 0.0000 & 0.4950 & 0.0009 & 0.4961 & 4.4332 & 4.9315 & 0.9532 & 0.9562 \\ 
       & $Bb(0.4,0.4)$ & 0.0000 & 0.6690 & -0.0006 & 0.6689 & 4.4332 & 5.2654 & 0.9586 & 0.9588 \\ 
        & $Bb(10,50)$ & -0.0635 & 0.2100 & -0.0627 & 0.2104 & 4.4299 & 4.6277 & 0.9582 & 0.9592 \\ 
\hline
  $\gamma N_g$ & $Bb(1,1)$ & 0.0000 & 0.4950 & 0.0002 & 0.4956 & 4.3732 & 4.8567 & 0.9586 & 0.9578 \\ 
          & $Bb(0.4,0.4)$ & 0.0000 & 0.6690 & -0.0000 & 0.6695 & 4.3784 & 5.1788 & 0.9586 & 0.9614 \\ 
           & $Bb(10,50)$ & -0.0635 & 0.2100 & -0.0635 & 0.2094 & 4.3744 & 4.5676 & 0.9538 & 0.9576 \\ 
\hline\hline
\end{tabular}
\caption {\label{tab:table3.g5000.n1000} \small Results for $G=5,000$, $N_{\rm max}=1,000$, $Z_g \independent N_g$ (Design 1), $Z_g \not\independent N_g$ (Design 2), and CAR-1.}  
\end{table}

\begin{table}[h!]
\centering
\small
\begin{tabular}{ll|rrrrrrrr}
\hline
 \multicolumn{2}{l}{CAR-2 $\mid$ Design 1} & \multicolumn{2}{c}{True values} & \multicolumn{2}{c}{Estimated} & \multicolumn{2}{c}{Estimated s.d.} & \multicolumn{2}{c}{Cov. Prob.}   \\ 
 \cmidrule(lr){3-4} \cmidrule(lr){5-6} \cmidrule(lr){7-8} \cmidrule(lr){9-10}
 \multicolumn{1}{l}{$\mathcal M_g$}  & \multicolumn{1}{l}{$N_g$} & \multicolumn{1}{c}{$\theta_1$} & \multicolumn{1}{c}{$\theta_2$} & \multicolumn{1}{c}{$\hat\theta_{1,G}$} & \multicolumn{1}{c}{$\hat\theta_{2,G}$} & \multicolumn{1}{c}{$\hat\sigma_{1,G}$} & \multicolumn{1}{c}{$\hat\sigma_{2,G}$} & \multicolumn{1}{c}{$CS_{1,G}$} & \multicolumn{1}{c}{$CS_{2,G}$} \\ \hline\hline
$N_g$ & $Bb(1,1)$ & 0.0000 & 0.0000 & 0.0043 & -0.0040 & 4.2769 & 4.9142 & 0.9422 & 0.9450 \\ 
   & $Bb(0.4,0.4)$ & 0.0000 & 0.0000 & -0.0001 & -0.0029 & 4.2760 & 5.2566 & 0.9366 & 0.9372 \\ 
    & $Bb(10,50)$ & 0.0000 & 0.0000 & -0.0013 & -0.0021 & 4.2917 & 4.5993 & 0.9474 & 0.9486 \\ 
\hline
  $10$ & $Bb(1,1)$ & 0.0000 & 0.0000 & -0.0073 & -0.0031 & 4.3318 & 4.9799 & 0.9456 & 0.9428 \\ 
       & $Bb(0.4,0.4)$ & 0.0000 & 0.0000 & -0.0071 & -0.0118 & 4.3325 & 5.3390 & 0.9384 & 0.9408 \\ 
        & $Bb(10,50)$ & 0.0000 & 0.0000 & 0.0031 & 0.0063 & 4.3465 & 4.6611 & 0.9334 & 0.9432 \\ 
\hline
  $\gamma N_g$ & $Bb(1,1)$ & 0.0000 & 0.0000 & -0.0061 & -0.0097 & 4.2783 & 4.9082 & 0.9460 & 0.9430 \\ 
          & $Bb(0.4,0.4)$ & 0.0000 & 0.0000 & -0.0006 & -0.0058 & 4.2903 & 5.2822 & 0.9380 & 0.9468 \\ 
           & $Bb(10,50)$ & 0.0000 & 0.0000 & -0.0078 & -0.0102 & 4.3018 & 4.6021 & 0.9426 & 0.9458 \\ 
   \hline
\multicolumn{2}{l}{CAR-2 $\mid$ Design 2} & \multicolumn{1}{c}{$\theta_1$} & \multicolumn{1}{c}{$\theta_2$} & \multicolumn{1}{c}{$\hat\theta_{1,G}$} & \multicolumn{1}{c}{$\hat\theta_{2,G}$} & \multicolumn{1}{c}{$\hat\sigma_{1,G}$} & \multicolumn{1}{c}{$\hat\sigma_{2,G}$} & \multicolumn{1}{c}{$CS_{1,G}$} & \multicolumn{1}{c}{$CS_{2,G}$} \\
   \hline
$N_g$ & $Bb(1,1)$ & 0.0000 & 0.4900 & -0.0027 & 0.4915 & 4.1664 & 4.6658 & 0.9540 & 0.9476 \\ 
   & $Bb(0.4,0.4)$ & 0.0000 & 0.6581 & -0.0029 & 0.6479 & 4.1693 & 4.9832 & 0.9518 & 0.9498 \\ 
    & $Bb(10,50)$ & -0.1407 & 0.1625 & -0.0897 & 0.2121 & 4.1669 & 4.4264 & 0.9462 & 0.9484 \\ 
\hline
  $10$ & $Bb(1,1)$ & 0.0000 & 0.4900 & -0.0015 & 0.4917 & 4.2266 & 4.7462 & 0.9516 & 0.9488 \\ 
       & $Bb(0.4,0.4)$ & 0.0000 & 0.6581 & -0.0006 & 0.6563 & 4.2239 & 5.0699 & 0.9544 & 0.9518 \\ 
        & $Bb(10,50)$ & -0.1407 & 0.1625 & -0.0795 & 0.2174 & 4.2293 & 4.4920 & 0.9464 & 0.9490 \\ 
\hline
  $\gamma N_g$ & $Bb(1,1)$ & 0.0000 & 0.4900 & -0.0035 & 0.4870 & 4.1682 & 4.6798 & 0.9496 & 0.9484 \\ 
          & $Bb(0.4,0.4)$ & 0.0000 & 0.6581 & -0.0036 & 0.6532 & 4.1737 & 4.9868 & 0.9578 & 0.9524 \\ 
           & $Bb(10,50)$ & -0.1407 & 0.1625 & -0.0877 & 0.2136 & 4.1861 & 4.4456 & 0.9516 & 0.9504 \\ 
\hline\hline
\end{tabular}
\caption {\label{tab:table4.g100.n500} \small Results for $G=100$, $N_{\rm max}=500$, $Z_g \independent N_g$ (Design 1), $Z_g \not\independent N_g$ (Design 2), and CAR-2.}  
\end{table}

\pagebreak
\begin{table}[h!]
\centering
\small
\begin{tabular}{ll|rrrrrrrr}
\hline
 \multicolumn{2}{l}{CAR-2 $\mid$ Design 1} & \multicolumn{2}{c}{True values} & \multicolumn{2}{c}{Estimated} & \multicolumn{2}{c}{Estimated s.d.} & \multicolumn{2}{c}{Cov. Prob.}   \\ 
 \cmidrule(lr){3-4} \cmidrule(lr){5-6} \cmidrule(lr){7-8} \cmidrule(lr){9-10}
 \multicolumn{1}{l}{$\mathcal M_g$}  & \multicolumn{1}{l}{$N_g$} & \multicolumn{1}{c}{$\theta_1$} & \multicolumn{1}{c}{$\theta_2$} & \multicolumn{1}{c}{$\hat\theta_{1,G}$} & \multicolumn{1}{c}{$\hat\theta_{2,G}$} & \multicolumn{1}{c}{$\hat\sigma_{1,G}$} & \multicolumn{1}{c}{$\hat\sigma_{2,G}$} & \multicolumn{1}{c}{$CS_{1,G}$} & \multicolumn{1}{c}{$CS_{2,G}$} \\ \hline\hline
$N_g$ & $Bb(1,1)$ & 0.0000 & 0.0000 & -0.0083 & -0.0120 & 4.2775 & 4.9251 & 0.9394 & 0.9402 \\ 
   & $Bb(0.4,0.4)$ & 0.0000 & 0.0000 & -0.0143 & -0.0153 & 4.2801 & 5.2966 & 0.9434 & 0.9384 \\ 
    & $Bb(10,50)$ & 0.0000 & 0.0000 & -0.0086 & -0.0115 & 4.2897 & 4.5346 & 0.9422 & 0.9450 \\ 
\hline
  $10$ & $Bb(1,1)$ & 0.0000 & 0.0000 & -0.0152 & -0.0175 & 4.3419 & 4.9979 & 0.9424 & 0.9426 \\ 
       & $Bb(0.4,0.4)$ & 0.0000 & 0.0000 & -0.0058 & -0.0066 & 4.3391 & 5.3822 & 0.9428 & 0.9410 \\ 
        & $Bb(10,50)$ & 0.0000 & 0.0000 & -0.0008 & -0.0005 & 4.3482 & 4.6036 & 0.9438 & 0.9480 \\ 
\hline
  $\gamma N_g$ & $Bb(1,1)$ & 0.0000 & 0.0000 & -0.0023 & -0.0033 & 4.2840 & 4.9329 & 0.9440 & 0.9396 \\ 
          & $Bb(0.4,0.4)$ & 0.0000 & 0.0000 & 0.0044 & 0.0003 & 4.2906 & 5.3131 & 0.9408 & 0.9424 \\ 
           & $Bb(10,50)$ & 0.0000 & 0.0000 & -0.0142 & -0.0118 & 4.2884 & 4.5387 & 0.9418 & 0.9404 \\ 
   \hline
\multicolumn{2}{l}{CAR-2 $\mid$ Design 2} & \multicolumn{1}{c}{$\theta_1$} & \multicolumn{1}{c}{$\theta_2$} & \multicolumn{1}{c}{$\hat\theta_{1,G}$} & \multicolumn{1}{c}{$\hat\theta_{2,G}$} & \multicolumn{1}{c}{$\hat\sigma_{1,G}$} & \multicolumn{1}{c}{$\hat\sigma_{2,G}$} & \multicolumn{1}{c}{$CS_{1,G}$} & \multicolumn{1}{c}{$CS_{2,G}$} \\
   \hline
$N_g$ & $Bb(1,1)$ & 0.0000 & 0.4950 & -0.0046 & 0.4938 & 4.1610 & 4.6755 & 0.9546 & 0.9514 \\ 
   & $Bb(0.4,0.4)$ & 0.0000 & 0.6690 & -0.0091 & 0.6566 & 4.1644 & 4.9964 & 0.9506 & 0.9452 \\ 
    & $Bb(10,50)$ & -0.0635 & 0.2100 & -0.0715 & 0.1989 & 4.1593 & 4.3726 & 0.9474 & 0.9498 \\ 
\hline
  $10$ & $Bb(1,1)$ & 0.0000 & 0.4950 & -0.0069 & 0.4893 & 4.2275 & 4.7582 & 0.9514 & 0.9482 \\ 
       & $Bb(0.4,0.4)$ & 0.0000 & 0.6690 & -0.0063 & 0.6612 & 4.2194 & 5.0787 & 0.9584 & 0.9520 \\ 
        & $Bb(10,50)$ & -0.0635 & 0.2100 & -0.0746 & 0.1989 & 4.2219 & 4.4408 & 0.9524 & 0.9580 \\ 
\hline
  $\gamma N_g$ & $Bb(1,1)$ & 0.0000 & 0.4950 & -0.0105 & 0.4836 & 4.1551 & 4.6726 & 0.9540 & 0.9492 \\ 
          & $Bb(0.4,0.4)$ & 0.0000 & 0.6690 & -0.0062 & 0.6631 & 4.1645 & 4.9924 & 0.9530 & 0.9512 \\ 
           & $Bb(10,50)$ & -0.0635 & 0.2100 & -0.0640 & 0.2074 & 4.1660 & 4.3863 & 0.9446 & 0.9506 \\ 
\hline\hline
\end{tabular}
\caption {\label{tab:table5.g100.n1000} \small Results for $G=100$, $N_{\rm max}=1,000$, $Z_g \independent N_g$ (Design 1), $Z_g \not\independent N_g$ (Design 2), and CAR-2.}  
\end{table}

\begin{table}[h!]
\centering
\small
\begin{tabular}{ll|rrrrrrrr}
\hline
 \multicolumn{2}{l}{CAR-2 $\mid$ Design 1} & \multicolumn{2}{c}{True values} & \multicolumn{2}{c}{Estimated} & \multicolumn{2}{c}{Estimated s.d.} & \multicolumn{2}{c}{Cov. Prob.}   \\ 
 \cmidrule(lr){3-4} \cmidrule(lr){5-6} \cmidrule(lr){7-8} \cmidrule(lr){9-10}
 \multicolumn{1}{l}{$\mathcal M_g$}  & \multicolumn{1}{l}{$N_g$} & \multicolumn{1}{c}{$\theta_1$} & \multicolumn{1}{c}{$\theta_2$} & \multicolumn{1}{c}{$\hat\theta_{1,G}$} & \multicolumn{1}{c}{$\hat\theta_{2,G}$} & \multicolumn{1}{c}{$\hat\sigma_{1,G}$} & \multicolumn{1}{c}{$\hat\sigma_{2,G}$} & \multicolumn{1}{c}{$CS_{1,G}$} & \multicolumn{1}{c}{$CS_{2,G}$} \\ \hline\hline
$N_g$ & $Bb(1,1)$ & 0.0000 & 0.0000 & 0.0007 & -0.0006 & 4.3737 & 5.0442 & 0.9570 & 0.9582 \\ 
   & $Bb(0.4,0.4)$ & 0.0000 & 0.0000 & 0.0029 & 0.0033 & 4.3788 & 5.4297 & 0.9548 & 0.9568 \\ 
    & $Bb(10,50)$ & 0.0000 & 0.0000 & 0.0001 & -0.0002 & 4.3801 & 4.6374 & 0.9500 & 0.9502 \\ 
\hline
  $10$ & $Bb(1,1)$ & 0.0000 & 0.0000 & 0.0009 & 0.0004 & 4.4386 & 5.1204 & 0.9548 & 0.9520 \\ 
       & $Bb(0.4,0.4)$ & 0.0000 & 0.0000 & 0.0015 & 0.0026 & 4.4376 & 5.5128 & 0.9562 & 0.9550 \\ 
        & $Bb(10,50)$ & 0.0000 & 0.0000 & 0.0007 & 0.0005 & 4.4435 & 4.7054 & 0.9550 & 0.9536 \\ 
\hline
  $\gamma N_g$ & $Bb(1,1)$ & 0.0000 & 0.0000 & 0.0019 & 0.0020 & 4.3780 & 5.0463 & 0.9564 & 0.9540 \\ 
          & $Bb(0.4,0.4)$ & 0.0000 & 0.0000 & 0.0006 & -0.0002 & 4.3836 & 5.4340 & 0.9610 & 0.9566 \\ 
           & $Bb(10,50)$ & 0.0000 & 0.0000 & 0.0005 & 0.0004 & 4.3865 & 4.6416 & 0.9496 & 0.9518 \\ 
   \hline
\multicolumn{2}{l}{CAR-2 $\mid$ Design 2} & \multicolumn{1}{c}{$\theta_1$} & \multicolumn{1}{c}{$\theta_2$} & \multicolumn{1}{c}{$\hat\theta_{1,G}$} & \multicolumn{1}{c}{$\hat\theta_{2,G}$} & \multicolumn{1}{c}{$\hat\sigma_{1,G}$} & \multicolumn{1}{c}{$\hat\sigma_{2,G}$} & \multicolumn{1}{c}{$CS_{1,G}$} & \multicolumn{1}{c}{$CS_{2,G}$} \\
   \hline
$N_g$ & $Bb(1,1)$ & 0.0000 & 0.4950 & 0.0011 & 0.4960 & 4.2082 & 4.7586 & 0.9644 & 0.9580 \\ 
   & $Bb(0.4,0.4)$ & 0.0000 & 0.6690 & 0.0001 & 0.6686 & 4.2122 & 5.0922 & 0.9680 & 0.9598 \\ 
    & $Bb(10,50)$ & -0.0635 & 0.2100 & -0.0638 & 0.2100 & 4.2169 & 4.4390 & 0.9624 & 0.9588 \\ 
\hline
  $10$ & $Bb(1,1)$ & 0.0000 & 0.4950 & 0.0003 & 0.4959 & 4.2745 & 4.8394 & 0.9592 & 0.9582 \\ 
       & $Bb(0.4,0.4)$ & 0.0000 & 0.6690 & 0.0006 & 0.6692 & 4.2742 & 5.1821 & 0.9648 & 0.9588 \\ 
        & $Bb(10,50)$ & -0.0635 & 0.2100 & -0.0633 & 0.2097 & 4.2829 & 4.5094 & 0.9576 & 0.9614 \\ 
\hline
  $\gamma N_g$ & $Bb(1,1)$ & 0.0000 & 0.4950 & 0.0010 & 0.4965 & 4.2125 & 4.7615 & 0.9622 & 0.9572 \\ 
          & $Bb(0.4,0.4)$ & 0.0000 & 0.6690 & 0.0005 & 0.6687 & 4.2173 & 5.0962 & 0.9622 & 0.9618 \\ 
           & $Bb(10,50)$ & -0.0635 & 0.2100 & -0.0622 & 0.2111 & 4.2239 & 4.4454 & 0.9604 & 0.9586 \\ 
\hline\hline
\end{tabular}
\caption {\label{tab:table6.g5000.n1000} \small Results for $G=5,000$, $N_{\rm max}=1,000$, $Z_g \independent N_g$ (Design 1), $Z_g \not\independent N_g$ (Design 2), and CAR-2.}  \label{table:6}
\end{table}
\clearpage
\pagebreak

\section{Empirical illustration}\label{sec:application}

In this section, we illustrate our findings by revisiting the empirical application in \cite{celhay/gertler/giovagnoli/vermeersch:2019}. These authors use a field experiment in Argentina to study the effects of temporary incentives for medical care providers to adopt early initiation of prenatal care. 
The medical literature has long recognized the benefits of early initiation of prenatal care. In particular, it allows doctors to detect and treat critical medical conditions, as well as advise mothers on proper nutrition and risk prevention activities in the period in which the fetus is most at risk. 

The field experiment in \cite{celhay/gertler/giovagnoli/vermeersch:2019} took place in Misiones, Argentina, one of the poorest provinces in the country, and with relatively high rate of maternal and child mortality. As part of the national {\it Plan Nacer} program, the Argentinean government transfers funds to medical care providers in exchange for their patient services. The study selected 37 public primary care facilities (accounting for 70\% of the prenatal care visits in the beneficiary population), and randomly assigned 18 to treatment and 19 to control. To the best of our understanding, the treatment assignment was balanced across the 37 clinics and was not stratified. The intervention was implemented only for eight months (May 2010 - December 2010), and the clinics were clearly informed of the temporary nature of this intervention. During the intervention period, control group clinics saw no change in their fees for prenatal visits. In contrast, clinics in the treatment group received a three-fold increase in payments for any first prenatal visit that occurred before week 13 of pregnancy. Prenatal visits after week 13 or subsequent prenatal visits experienced no change in fees.

\begin{remark}\label{rem:app_sizes}
This application features a setting where all patients from the sampled clinics were included in their study. In terms of our notation, we have $|\mathcal{M}_{g}| = N_g$ for all $1 \leq g \leq G$. As a consequence, the weights in the weighted average of averages estimator $\hat{\theta}_{2,G}$ equal one, and the estimator coincides with the difference-in-means estimator, i.e., $\hat{\theta}_{2,G} = \hat{\theta}^{\rm alt}_{G}$.
\end{remark}

\cite{celhay/gertler/giovagnoli/vermeersch:2019} collected data before, during, and after the intervention period. The pre-intervention ran between January 2009 and April 2010. During this period, the sample average of the week of the first prenatal visit is 16.97, and only 34.46\% of these visits occur before week 13. Figure \ref{fig:histogram_week} shows a histogram of this distribution. The aforementioned treatment occurred exclusively during the intervention period, which ran between May 2010 and December 2010. While the treatment and control group affected approximately the same number of facilities, the number of treated and control patients are very different due to the unequal number of patients across facilities. Figure \ref{fig:histogram_clinic} provides a histogram of the number of patients attending each clinic for their first prenatal visit during the intervention period. This distribution has a mean and a standard deviation of 33.6 and 16.3 patients per clinic, respectively.  Finally, the post-intervention period goes between January 2011 and March 2012.

\begin{figure}
\begin{center}
\includegraphics[width=0.8\textwidth]{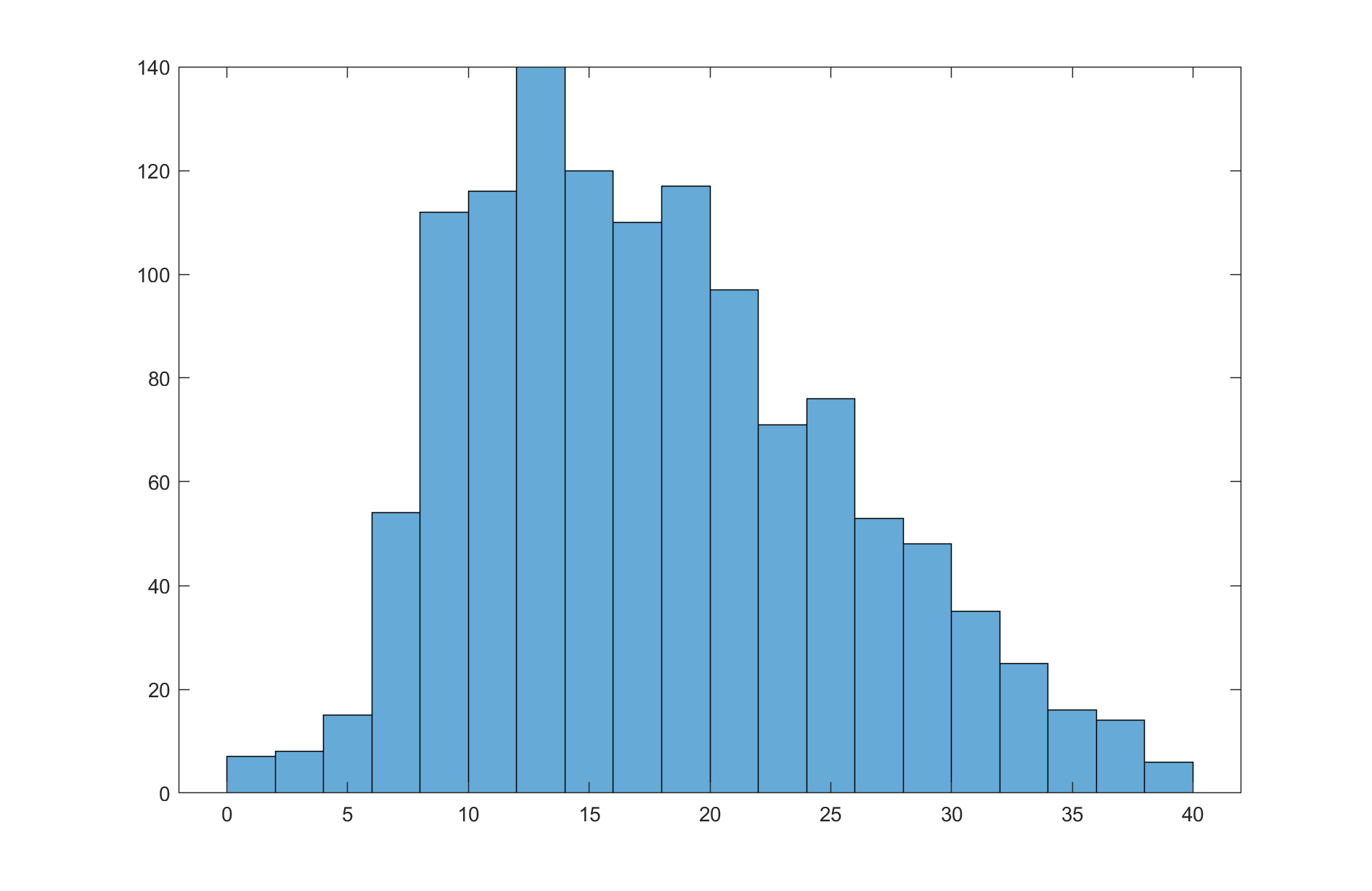}
\caption{Histogram of the week of the first prenatal visit during the pre-intervention period.}\label{fig:histogram_week}
\end{center}
\end{figure}

\begin{figure}
\begin{center}
\includegraphics[width=0.8\textwidth]{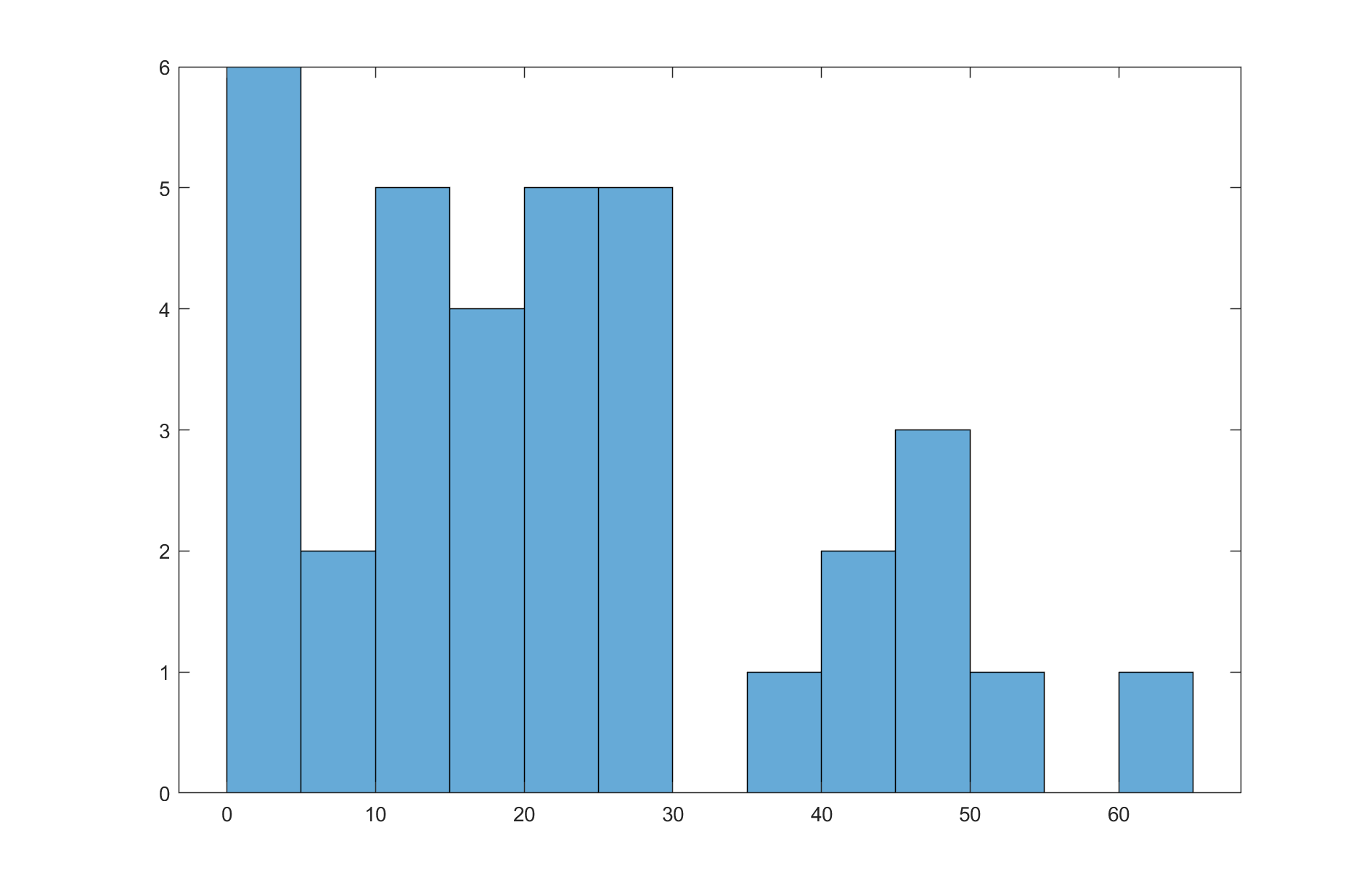}
\caption{Histogram of the patients per clinic having a first prenatal visit during the intervention period.}\label{fig:histogram_clinic}
\end{center}

\end{figure}

In Table \ref{tab:application}, we report our estimates of the equally-weighted and size-weighted average treatment effect (ATE) and their standard errors.\footnote{In the context of the experiment in \cite{celhay/gertler/giovagnoli/vermeersch:2019}, these would be more appropriately labeled as intention-to-treat (ITT) estimates, given the presence of imperfect compliance in the study.} We ran separate estimations using the data in the pre-intervention, intervention, and post-intervention periods. In each case, we considered two possible outcomes: $Y_1 =\text{weeks}$, which denotes the pregnancy week of the first prenatal visit, and $Y_2 =1\{\text{Weeks}<13\}$, which indicates whether the first prenatal visit occurs in pregnancy week 13 or lower. In all periods and for both outcomes, the equally-weighted cluster-level ATE, $\hat{\theta}_{1,G}$, appears to be small and statistically insignificant at the usual levels. In words, there seems to be no ATE at the clinic level. These results contrast sharply with those obtained using the size-weighted cluster-level ATE, $\hat{\theta}_{2,G}$, which measures ATE at the patient level. First, this ATE is not statistically significant during the pre-intervention period, which we consider a reasonable ``placebo-type'' finding. Second, during the intervention period, treated clinics had, on average, first prenatal visits 1.4 weeks earlier than control clinics. Moreover, the proportion of prenatal visits before week 13 was 10 percentage points higher in treated clinics than in control clinics during the same time period. These effects are statistically significant with $\alpha=5\%$ and seem economically important relative to the baseline levels. Interestingly, these effects seem to extend quantitatively to the post-intervention period, when the treatment incentives were completely removed. These findings demonstrate a statistically significant and economically important ATE at the patient level, both in the long and the short run. From the standpoint of our contribution, these differences between $\hat{\theta}_{1,G}$ and $\hat{\theta}_{2,G}$ point to the fact that cluster sizes are likely non-ignorable, as they may reflect essential factors such as the clinic's quality or the nearby population's size. To conclude, it is also worth noting that our results using the size-weighted cluster-level ATE align with the (intention-to-treat) estimates obtained by \cite{celhay/gertler/giovagnoli/vermeersch:2019}; this is expected given that, as explained in Remark \ref{rem:app_sizes}, $|\mathcal{M}_g| = N_g$ for all $1 \le g \le G$ in this application and so standard OLS regression recovers $\hat{\theta}_{2,G}$.

{\begin{table}[htbp]
  \centering
   \scalebox{0.97}{ \begin{tabular}{l|llll|llll|llll}
    \hline
\hline
     Period     & \multicolumn{4}{c|}{Pre-intervention} & \multicolumn{4}{c|}{Intervention} & \multicolumn{4}{c}{Post-intervention} \\
     Outcome     & \multicolumn{2}{c}{Weeks} & \multicolumn{2}{c|}{Weeks$<$13} & \multicolumn{2}{c}{Weeks} & \multicolumn{2}{c|}{Weeks$<$13} & \multicolumn{2}{c}{Weeks} & \multicolumn{2}{c}{Weeks$<$13} \\
     Estimator     & \multicolumn{1}{l}{$\hat{\theta}_{1,G}$} & \multicolumn{1}{l}{$\hat{\theta}_{2,G}$} & \multicolumn{1}{l}{$\hat{\theta}_{1,G}$} & \multicolumn{1}{l|}{$\hat{\theta}_{2,G}$} & \multicolumn{1}{l}{$\hat{\theta}_{1,G}$} & \multicolumn{1}{l}{$\hat{\theta}_{2,G}$} & \multicolumn{1}{l}{$\hat{\theta}_{1,G}$} & \multicolumn{1}{l|}{$\hat{\theta}_{2,G}$} & \multicolumn{1}{l}{$\hat{\theta}_{1,G}$} & \multicolumn{1}{l}{$\hat{\theta}_{2,G}$} & \multicolumn{1}{l}{$\hat{\theta}_{1,G}$} & \multicolumn{1}{l}{$\hat{\theta}_{2,G}$} \\\hline
    Estimate & 0.09  & -0.08 & 0.02  & 0.01  & -0.01 & -1.39** & 0.03  & 0.10***  & 0.11  & -1.59** & -0.02 & 0.09** \\
    s.e.  & 0.77  & 0.56  & 0.06  & 0.03  & 0.95  & 0.66  & 0.05  & 0.04  & 0.85  & 0.72  & 0.05  & 0.04 \\
\hline
\hline
\end{tabular}}
\caption{\small Estimation results based on data from \cite{celhay/gertler/giovagnoli/vermeersch:2019}. We ran our estimation separately for data in the pre-intervention period (i.e., Jan 2009 - Apr 2010), the intervention period (i.e., May 2010 - Dec 2010), and the post-intervention period (i.e., Jan 2011 - Mar 2012). The outcome variables are: ``Weeks'', which denotes the pregnancy week of the first prenatal visit, and ``Weeks$<$13'', which indicates whether the first prenatal visit occurred before pregnancy week 13. The significance level of the estimators is indicated with stars in the usual manner: ``***'' means significant at $\alpha=1\%$, ``**'' means significant at $\alpha=5\%$, and``*'' means significant at $\alpha=10\%$.}
\label{tab:application}
\end{table}}

\newpage
\appendix
\section{Appendix}
Throughout this appendix, we use LIE to denote the law of iterated expectations, LTP to denote the law of total probability, LLN to denote the Kolmogorov's strong law of large numbers, CLT to denote the Lindeberg-Levy central limit theorem, and CMT to denote the continuous mapping theorem.

\subsection{Auxiliary Results}

\begin{lemma} \label{lemma:iid}
Under Assumptions \ref{ass:QG}.(a)-(b), 
\[( (\bar{Y}_g(1), \bar{Y}_g(0), |\mathcal{M}_g|, Z_g, N_g ), ~1 \leq g \leq G)~,\]
is an i.i.d sequence of random variables.
\end{lemma}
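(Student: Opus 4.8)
The plan is to reduce the claim to an i.i.d.\ statement about the underlying cluster-level primitives and then push it through a common measurable map. Define, for $1 \le g \le G$, the primitive
$$V_g := \big((Y_{i,g}(1), Y_{i,g}(0): 1 \le i \le N_g),\, \mathcal{M}_g,\, Z_g,\, N_g\big)~,$$
and observe that the vector in the statement equals $\phi(V_g)$ for a single measurable map $\phi$ not depending on $g$: indeed $|\mathcal{M}_g|$ and $\bar{Y}_g(a) = |\mathcal{M}_g|^{-1}\sum_{i \in \mathcal{M}_g} Y_{i,g}(a)$ are measurable functions of the potential outcomes and $\mathcal{M}_g$, while $(Z_g, N_g)$ pass through unchanged. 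Since applying a common measurable map coordinatewise preserves the i.i.d.\ property, it suffices to show that $(V_g : 1 \le g \le G)$ is i.i.d.

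To establish this, I would combine the two parts of the assumption via a conditioning argument. For bounded measurable test functions $f_1, \ldots, f_G$ on the state space of $V_g$, I would first condition on $(\mathcal{M}^{(G)}, Z^{(G)}, N^{(G)})$ and invoke Assumption \ref{ass:QG}.(b), under which the conditional law $R_G(\mathcal{M}^{(G)}, Z^{(G)}, N^{(G)})$ factors as $\prod_g R(\mathcal{M}_g, Z_g, N_g)$, to write
$$E\Big[\prod_{g} f_g(V_g)\Big] = E\Big[\prod_g h_g(\mathcal{M}_g, Z_g, N_g)\Big], \qquad h_g(m,z,n) := \int f_g(y, m, z, n)\, R(m,z,n)(dy)~,$$
where $y$ denotes the potential-outcome block. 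The conditional independence across $g$ supplied by (b) is exactly what turns the conditional expectation of the product into a product of the $h_g$. I would then apply Assumption \ref{ass:QG}.(a), under which $(\mathcal{M}_g, Z_g, N_g)$ is i.i.d., to obtain $E[\prod_g h_g(\mathcal{M}_g,Z_g,N_g)] = \prod_g E[h_g(\mathcal{M}_g,Z_g,N_g)] = \prod_g E[f_g(V_g)]$, which gives independence of the $V_g$. For identical distribution, I would hold a single test function $f$ fixed and set all but one factor to $1$: the resulting expression $E[h(\mathcal{M}_g,Z_g,N_g)]$ depends on $g$ only through the common kernel $R$ and the common marginal of $(\mathcal{M}_g,Z_g,N_g)$, hence is the same for every $g$. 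A standard $\pi$-system/monotone-class argument upgrades these identities for indicator-type $f_g$ to the full joint law.

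The main obstacle is purely measure-theoretic bookkeeping rather than probabilistic content: the potential-outcome block of $V_g$ has random length $N_g$, so $V_g$ does not live in a fixed Euclidean space, and one must fix a single measurable state space accommodating vectors of arbitrary finite length (for instance $\bigsqcup_{n \ge 1} (\mathbb{R}^2)^n$ with its natural Borel structure) on which $R(m,z,n)$ and the $f_g$ are defined, so that the display above is meaningful and $R$ is genuinely the same measurable map of $(m,z,n)$ for every $g$. Once this space is fixed, measurability of $\phi$ and the factorization of expectations are routine, so I would devote most of the write-up to setting up the state space carefully and then present the short conditioning computation.
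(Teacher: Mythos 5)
Your proposal is correct and follows essentially the same route as the paper's proof: condition on $(\mathcal{M}^{(G)}, Z^{(G)}, N^{(G)})$, use Assumption \ref{ass:QG}.(b) to factor the conditional law into the common kernels $R(\mathcal{M}_g,Z_g,N_g)$, and then use Assumption \ref{ass:QG}.(a) to factor and equalize the outer expectations. The only cosmetic differences are that the paper works with indicators of events $\{\bar{Y}_g(a) \le a_g\}$ via the preimage sets $\Gamma(a_g,N_g,\mathcal{M}_g)$ rather than general bounded test functions, and it argues directly about the averaged vector instead of first establishing that the full primitives $V_g$ are i.i.d.\ and pushing through a measurable map.
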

\begin{proof}
Let $A_g := (\bar{Y}_g(1), \bar{Y}_g(0))$ and $B_g := (|\mathcal{M}_g|, Z_g, N_g)$. We first show that data are independent, i.e., for arbitrary vectors $a^{(G)}$ and $b^{(G)}$,
\begin{equation*}\label{eq:iid_state1}
P\{A^{(G)} \le a^{(G)}, B^{(G)} \le b^{(G)}\} ~=~ \prod_{1 \le g \le G}P\{A_g \le a_g, B_g \le b_g\}~,
\end{equation*}
where the inequalities are to be interpreted component-wise. To that end, let $C_g = (Y_{i,g}(1), Y_{i,g}(0): 1 \le i \le N_g)$ and denote by $\Gamma(a_g,N_g,\mathcal{M}_g)$ the (random) set such that $C_g \in \Gamma(a_g,N_g,\mathcal{M}_g)$ if and only if $A_g \le a_g$. Let $\Gamma^{(G)}$ denote the Cartesian product of $\Gamma(a_g,N_g,\mathcal{M}_g)$ for all $1 \le g \le G$. Then,
\begin{align*}
P\{A^{(G)} \le a^{(G)}, B^{(G)} \le b^{(G)}\} &~\overset{(1)}{=}~ P\{C^{(G)} \in \Gamma^{(G)}, B^{(G)} \le b^{(G)}\}\\
&~\overset{(2)}{=}~ E\left[E\left[I\{C^{(G)} \in \Gamma^{(G)}\}|\mathcal{M}^{(G)}, Z^{(G)}, N^{(G)}\right]I\{B^{(G)} \le  b^{(G)}\}\right]\\
&~\overset{(3)}{=}~ E\bigg[\prod_{1\le g \le G}E\left[C_g \in \Gamma(a_g,N_g,\mathcal{M}_g)|\mathcal{M}_g, Z_g, N_g\right]I\{B_g \le b_g\}\bigg] \\
&~\overset{(4)}{=}~ E\bigg[\prod_{1 \le g \le G}E\left[I\{C_g \in \Gamma(a_g,N_g,\mathcal{M}_g)\}I\{B_g \le b_g\}|\mathcal{M}_g, Z_g, N_g\right]\bigg]\\
&~\overset{(5)}{=}~ \prod_{1 \le g \le G}E\left[I\{C_g \in \Gamma(a_g,N_g,\mathcal{M}_g)\}I\{B_g \le b_g\}\right] \\ 
&~\overset{(6)}{=}~ \prod_{1 \le g \le G}P\{A_g \le a_g, B_g \le b_g\}~,
\end{align*}
where (1) and (6) follow from the definition of $\Gamma$, (2) from the LIE and the definition of $B_{g}$, (3) from Assumption \ref{ass:QG}.(b), (4) from the definition of $B_g$, and (5) from Assumption \ref{ass:QG}.(a) and the LIE.

Next, we show that data are identically distributed. To that end, consider the following derivation for arbitrary vectors $a$ and $b$, and $1 \le g, g' \le G$,
\begin{align*}
P\{A_g \le a, B_g \le b\} &~\overset{(1)}{=}~ P\{C_g \in \Gamma(a,N_g,\mathcal{M}_g), B_g \le b\}\\
&~\overset{(2)}{=}~ E\left[E\left[I\{C_g \in \Gamma(a,N_g,\mathcal{M}_g)\}|\mathcal{M}_g, Z_g, N_g\right]I\{B_g \le b\}\right]\\
&~\overset{(3)}{=}~ E\left[E\left[I\{C_{g'} \in \Gamma(a,N_{g'},S_{g'})\}|S_{g'}, Z_{g'}, N_{g'}\right]I\{B_{g'} \le b\}\right] \\
&~\overset{(4)}{=}~ P\{A_{g'} \le a, B_{g'} \le b\}~,
\end{align*}
where (1) follows from the definition of $\Gamma$, (2) and (4) from the LIE and the definition of $B_g$, and (3) from Assumptions \ref{ass:QG}.(a)--(b).
\end{proof}

\begin{proof}[Proof of Lemma \ref{prop:sample}]
First note that Assumption \ref{ass:QG}.(c) is satisfied by assumption.

We next show Assumption \ref{ass:QG}.(d). Fix $a\in\{0,1\}$ and $1\leq g\leq G$ arbitrarily. By the LIE,
\begin{equation}
    E[\bar{Y}_g(a)|N_g] ~=~ E\bigg[E\bigg[\frac{1}{|\mathcal{M}_g|}\sum_{i \in \mathcal{M}_g}Y_{i,g}(a)\bigg|Z_g, N_g, |\mathcal{M}_g|, (Y_{i,g}(a): 1 \le i \le N_g)\bigg]\bigg|N_g\bigg]~.
    \label{eq:sample1}
\end{equation}
The inner expectation can be viewed as the expectation of a sample mean of size $|\mathcal{M}_g|$ drawn from the set $(Y_{i,g}(a): 1 \le i \le N_g)$ without replacement. Hence, by \citet[Theorem 2.1]{cochran2007sampling},
\begin{equation}
    E\bigg[\frac{1}{|\mathcal{M}_g|}\sum_{i \in \mathcal{M}_g}Y_{i,g}(a)\bigg|Z_g, N_g, |\mathcal{M}_g|, (Y_{i,g}(a): 1 \le i \le N_g)\bigg] ~=~ \frac{1}{N_g}\sum_{1 \le i \le N_g}Y_{i,g}(a)~.
    \label{eq:sample2}
\end{equation}
The desired result follows from combining \eqref{eq:sample1} and \eqref{eq:sample2}.
\end{proof}

\subsection{Proof of Theorems}

\begin{proof}[Proof of Theorem \ref{theorem:limOLS}]
For arbitrary $a\in\{0,1\}$ and $1\leq g\leq G$, consider the following derivation.
\begin{align}
E\bigg[\bigg|\sum_{i\in \mathcal{M}_g}Y_{i,g}(a)\bigg|\bigg] 
&~\leq~ E\bigg[\sum_{i\in \mathcal{M}_g}|Y_{i,g}(a)|\bigg]\notag\\
&~\overset{(1)}{=}~ E\bigg[\sum_{i \in \mathcal{M}_g}E[|Y_{i,g}(a)||N_g, Z_g, \mathcal{M}_g]\bigg]\notag\\
&~\overset{(2)}{\leq}~ C^{1/2} E[|\mathcal{M}_g|] \le C^{1/2} E[N_g] ~\overset{(3)}{<}~  \infty~,\label{eq:limOLS}
\end{align}
where (1) follows from the LIE, (2) from Assumption \ref{ass:QG}.(f) and Jensen's inequality, and (3) from Assumption \ref{ass:QG}.(e) and Jensen's inequality. We further have by Assumption \ref{ass:QG}.(e) that $E[|\mathcal{M}_g|] < \infty$. 

Under Assumptions \ref{ass:assignment}--\ref{ass:QG}, Lemma \ref{lemma:iid}, \eqref{eq:limOLS}, and $E[|\mathcal{M}_g|] < \infty$, the desired result follows from Lemma C.4 in \cite{bugni/canay/shaikh:2019} and the CMT.
\end{proof}

\begin{proof}[Proof of Theorem \ref{theorem:mainECA}]
For any $a\in\{0,1\}$ and $1\leq g\leq G$, consider the following preliminary derivation.
\begin{align}
E[\bar{Y}_g(a)^2] 
~\overset{(1)}{\le}~ E\bigg[\frac{1}{|\mathcal{M}_g|}\sum_{i \in \mathcal{M}_g}Y_{i,g}(a)^2\bigg] 
~\overset{(2)}{=}~ E\bigg[\frac{1}{|\mathcal{M}_g|}\sum_{i \in \mathcal{M}_g}E[Y_{i,g}(a)^2|N_g, Z_g, \mathcal{M}_g]\bigg]~\overset{(3)}{\le} C ~<~ \infty~,\label{eq:3p2_0}
\end{align}
where (1) follows from Jensen's inequality, (2) from the LIE, and (3) from Assumptions \ref{ass:QG}.(f)-(c). Next, note that by Assumption \ref{ass:QG}.(d) and the LIE,
\begin{equation}
\theta_1 ~=~ E\bigg[\frac{1}{N_g}\sum_{1 \le i \le N_g}Y_{i,g}(1) - Y_{i,g}(0)\bigg] ~=~ E\left[\bar{Y}_g(1) - \bar{Y}_g(0)\right]~.\label{eq:3p2_1}
\end{equation}

Under \eqref{eq:3p2_0}, \eqref{eq:3p2_1}, and Assumptions \ref{ass:assignment}--\ref{ass:QG}, the desired result follows immediately from the proof of Theorem 4.1 in \cite{bugni/canay/shaikh:2018} where the clusters are viewed as the experimental units with potential outcomes given by $(\bar{Y}_g(0), \bar{Y}_g(1))$.
\end{proof}

\begin{proof}[Proof of Theorem \ref{thm:limHC}]
Fix $a \in\{0,1\}$ and  $r \in \{0, 1, 2\}$ arbitrarily. For any $1\leq g\leq G$, we can repeat arguments in the proof of Theorem \ref{theorem:mainECA} to show that
\begin{equation}
E[ \bar{Y}_{g}(a)^{r}] ~<~\infty ~.
\label{eq:thm_3p3_1}
\end{equation}
Under Assumptions \ref{ass:assignment}--\ref{ass:QG}, Lemma \ref{lemma:iid}, and \eqref{eq:thm_3p3_1}, Lemma C.4 in \cite{bugni/canay/shaikh:2019} implies that
\begin{align}
\frac{1}{G}\sum_{1\le g\le G}\bar{Y}_{g}^{r}I\{A_{g}=a\}&~\xrightarrow{p}~P\{A_{g} = a\} E[\bar{Y}_{g}(a)^{r}]
\label{eq:thm_3p3_2} ~.
\end{align}
From this and the CMT, we conclude that
\begin{align}
\widehat{\var}_{G}[\bar{Y}_{g}(a)]& ~{\xrightarrow{p}}~
\var[\bar{Y}_g(a)]~ .
\label{eq:thm_3p3_3}
\end{align}
To conclude the proof, we note that the convergence in \eqref{eq:limHC} follows from \eqref{eq:thm_3p3_2} (for $r=0$ and $a \in \{0,1\}$), \eqref{eq:thm_3p3_3} (for $a \in \{0,1\}$), $\pi \in (0,1)$, and the CMT. Also, the inequality in \eqref{eq:limHC} follows from the fact that $\var[\bar{Y}^{\dagger}_g(a)] = \var[\bar{Y}_g(a)] - \var[E[\bar{Y}_g(a)|S_g]]$ and $\var[E[\bar{Y}_g(a)|S_g]] = E[\bar{m}_a(S_g)^2]$ for $a \in \{0,1\}$. Some additional algebra will confirm the necessary and sufficient conditions for the inequality to become an equality.
\end{proof}

\begin{proof}[Proof of Theorem \ref{theorem:sigma1}]
The result follows immediately from the proof of Theorem 4.2 in \cite{bugni/canay/shaikh:2018} where the clusters are viewed as the experimental units with potential outcomes given by $(\bar{Y}_g(0), \bar{Y}_g(1))$.
\end{proof}

\begin{proof}[Proof of Theorem \ref{thm:mainSECT}]
We follow the general strategy in the proof of Theorem 4.1 in \cite{bugni/canay/shaikh:2018}, as extended in \cite{liu2023inference}.
By the LIE and Assumption \ref{ass:QG}.(d), $E[ \sum_{1 \leq i \leq N_g} Y_{i,g}(a)] = E[\bar{Y}_g(a) N_g]$ for $a \in \{0, 1\}$,
and thus
\begin{equation*}
    \theta_2 = \frac{E\left[\bar{Y}_g(1)N_g\right]}{E[N_g]} - \frac{E\left[\bar{Y}_g(0) N_g\right]}{E[N_g]} ~.
    \end{equation*}
As a consequence,
\begin{equation}
    \sqrt{G}(\hat{\theta}_{2,G} - \theta_2) ~=~ \sqrt{G}(h(\hat{\Theta}_G) - h(\Theta))~,
    \label{eq:mainSECT1}
\end{equation}
where the function $h: \mathbb{R}^4 \rightarrow \mathbb{R}$ defined as $h(w, x, y, z) := \frac{w}{x} - \frac{y}{z}$, $G_{1} :=\frac{1}{G}\sum_{1\le g\le G}I\{A_{g}=1\}$, and
\begin{align*}
\hat{\Theta}_G &~:=~ \begin{pmatrix}
\frac{1}{G_1}\sum_{1 \le g \le G}\bar{Y}_g(1)N_gI\{A_g = 1\}  \\
\frac{1}{G_1}\sum_{1 \le g \le G}N_gI\{A_g = 1\}  \\
\frac{1}{G - G_1}\sum_{1 \le g \le G}\bar{Y}_g(0)N_gI\{A_g = 0\}  \\
\frac{1}{G - G_1}\sum_{1 \le g \le G}N_gI\{A_g = 0\}
\end{pmatrix}~,\\
\Theta &~:=~ (E[\bar{Y}_g(1)N_g], E[N_g], E[\bar{Y}_g(0)N_g], E[N_g])'~.
\end{align*}

By \eqref{eq:mainSECT1}, we derive our result by characterizing the asymptotic distribution of $\sqrt{G}(\hat{\Theta}_G - \Theta)$ and applying the Delta method. To this end, note that
\begin{equation}
    \sqrt{G}\left( \hat{\Theta}_{G}-\Theta \right) ~\overset{(1)}{=}~\left( 
\begin{array}{cccc}
\frac{\pi }{G_{1}/G} & 0 & 0 & 0 \\ 
0 & \frac{\pi }{G_{1}/G} & 0 & 0 \\ 
0 & 0 & \frac{1-\pi }{1-G_{1}/G} & 0 \\ 
0 & 0 & 0 & \frac{1-\pi }{1-G_{1}/G}%
\end{array}%
\right) \sqrt{G}\mathbb L_G ~\overset{(2)}{=}~ (I + o_p(1)) \sqrt{G}\mathbb L_G ~,\label{eq:mainSECT2}
\end{equation}
where (1) uses that $\mathbb L_G := \left( L_G^{\rm YN1}, L_G^{\rm N1}, L_G^{\rm YN0}, L_G^{\rm N0}\right)'$ with
\begin{align*}
    L_G^{\rm YN1} ~&:=~ \frac{1}{\pi}\bigg(\frac{1}{G}\sum_{1 \leq g \leq G} \left( \bar{Y}_g(1) N_g - E\left[\bar{Y}_g(1) N_g\right]\right) I\{A_g= 1\}\bigg)~, \\
     L_G^{\rm N1}  ~&:=~ \frac{1}{\pi}\bigg(\frac{1}{G}\sum_{1 \leq g \leq G} \left( N_g -E\left[ N_g\right]\right) I\{A_g=1\}\bigg)~, \\
    L_G^{\rm YN0}  ~&:=~ \frac{1}{1 - \pi}\bigg(\frac{1}{G}\sum_{1 \leq g \leq G} \left( \bar{Y}_g(0) N_g -E\left[\bar{Y}_g(0) N_g\right]\right)I\{A_g=0\}\bigg)~, \\
     L_G^{\rm N0}  ~&:=~ \frac{1}{1 - \pi}\bigg(\frac{1}{G}\sum_{1 \leq g \leq G} \left( N_g - E\left[ N_g\right]\right) I\{A_g=0\}\bigg) ~,
\end{align*}
and (2) follows from Assumption \ref{ass:assignment}, as it implies that  ${G_1}/{G} 
    ~=~ \sum_{s \in S}{D_G(s)}/{G} + \pi~ {\xrightarrow{p}}~ \pi$. 

By \eqref{eq:mainSECT2}, the next step to characterize the asymptotic distribution of $\sqrt{G}(\hat{\Theta}_G - \Theta)$ is to find the asymptotic distribution of $\sqrt{G}\mathbb{L}_G$. To this end, for any $a \in \{0,1\}$, we define
\begin{align*}
    \mathbf{d} ~&:=~ \left({D_G(s)}/{\sqrt{G}}~:~ s\in \mathcal{S} \right)^\prime\\
    \mathbf{p} ~&:=~\left(\sqrt{G} \left({G(s)}/{G} - p(s) \right)~:~ s\in \mathcal{S} \right)^\prime\\
    \mathbf{m}^{\rm Y}_{a} ~&:=~ \left(E\left[\bar{Y}_g(a)N_g\mid S_g = s\right]-E\left[\bar{Y}_g(a)N_g\right] ~:~ s\in \mathcal{S}\right)^\prime\\
    \mathbf{m}^{\rm N}~&:=~ \left(E\left[N_g\mid S_g = s\right]-E\left[N_g\right] ~:~ s\in \mathcal{S}\right)^\prime~,
\end{align*}
where $G(s)$ is as in \eqref{eq:Gs}. By some algebra, we find that
\begin{align}
\scriptsize
\sqrt{G}\mathbb L_G ~=~   \underbrace{\begin{pmatrix}
1 & 0 & 0 & 0 &\frac{1}{\pi} \left(\mathbf{m}_{1}^{\rm Y}\right)^\prime & \left(\mathbf{m}_{1}^{\rm Y}\right)^\prime\\
0 & 1 & 0 & 0 &\frac{1}{\pi} \left(\mathbf{m}^{\rm N}\right)^\prime & \left(\mathbf{m}^{\rm N}\right)^\prime \\
0 & 0 & 1 & 0 &-\frac{1}{1-\pi} \left(\mathbf{m}_{0}^{\rm Y}\right)^\prime & \left(\mathbf{m}_{0}^{\rm Y}\right)^\prime \\
0 & 0 & 0 & 1 &-\frac{1}{1-\pi} \left(\mathbf{m}^{\rm N}\right)^\prime & \left(\mathbf{m}^{\rm N}\right)^\prime
\end{pmatrix}}_{=:~ B^\prime} \underbrace{\begin{pmatrix}
\frac{1}{\sqrt{G}} \sum_{g=1}^{G} \frac{1}{\pi}(\bar{Y}_g(1)N_g - E[\bar{Y}_g(1)N_g|S_g])  I\{A_g=1\} \\
\frac{1}{\sqrt{G}} \sum_{g=1}^{G} \frac{1}{\pi}(N_g - E[N_g|S_g])I\{A_g=1\} \\
\frac{1}{\sqrt{G}} \sum_{g=1}^{G} \frac{1}{1-\pi} (\bar{Y}_g(0)N_g - E[\bar{Y}_g(0)N_g|S_g]) I\{A_g=0\} \\
\frac{1}{\sqrt{G}} \sum_{g=1}^{G} \frac{1}{1-\pi}(N_g - E[N_g|S_g])I\{A_g=0\}\\
\mathbf{d} \\
\mathbf{p}
\end{pmatrix}}_{=: ~\mathbf{y}_G}~.
\label{eq:mainSECT3a}
\end{align}
Under Assumptions \ref{ass:assignment}--\ref{ass:QG}, one can follow the partial sum and decomposition arguments developed in Lemma B.2 of  \cite{bugni/canay/shaikh:2018} (or, equivalently, Lemma C.1 in \cite{bugni/canay/shaikh:2019}), to obtain
\begin{equation}
\mathbf{y}_G ~\stackrel{d}{\rightarrow}~ \mathcal{N}(0, \Sigma)~,
\label{eq:mainSECT3b}
\end{equation}
where
\begin{equation*}
    \Sigma ~=~ \begin{pmatrix}
    \Sigma_1 & 0 & 0 & 0\\
    0 & \Sigma_0 & 0 & 0 \\
    0 & 0 & \Sigma_D & 0 \\
    0 & 0 & 0 & \Sigma_{G} 
    \end{pmatrix}~,
\end{equation*}
and, for $a \in \{0,1\}$,
\begin{align*}
    &\Sigma_a := \frac{1}{P\{A_g = a\}}\begin{pmatrix}
    {E\left[\text{Var}\left[\bar{Y}_g(a)N_g | S_g\right]\right]} & {E\left[\text{Cov}\left[\bar{Y}_g(a)N_g, N_g| S_g\right]\right]} \\
    {E\left[\text{Cov}\left[\bar{Y}_g(a)N_g, N_g| S_g\right]\right]} & {E\left[\text{Var}\left[N_g\right|S_g]\right]}
    \end{pmatrix}~, \\
    &\Sigma_D := \text{diag}\left(p(s)\tau(s): s\in\mathcal{S} \right)~, \\ 
    &\Sigma_G := \text{diag}\left(p(s): s\in\mathcal{S} \right)- \left(p(s): s\in\mathcal{S} \right) \left(p(s): s\in\mathcal{S} \right)^\prime ~.
\end{align*}

By combining \eqref{eq:mainSECT2}, \eqref{eq:mainSECT3a}, and \eqref{eq:mainSECT3b}, we conclude that
\begin{equation}
    \sqrt{G}(\hat{\Theta}_G - \Theta) ~=~  \sqrt{G}\mathbb{L}_G + o_P(1) ~\stackrel{d}{\rightarrow}~ \mathcal{N}(0, \mathbb V)~,
    \label{eq:mainSECT4}
\end{equation}
where $\mathbb V := B^\prime \Sigma B$. By additional calculations involving the law of total variance/covariance, $\mathbb V$ is a 4x4 symmetric matrix whose components are given by
\begin{align*}
\mathbb V_{11} & ~=~  \frac{1}{\pi} \var \left[\bar{Y}_g(1) N_g\right] -  \frac{1-\pi}{\pi} \var \left[E\left[\bar{Y}_g(1) N_g \mid S_g \right]\right]+ E\left[\frac{\tau(S_g)}{\pi^2} \left( E[\bar{Y}_g(1)N_g\mid S_g] - E[\bar{Y}_g(1)N_g] \right)^2 \right] \\
\mathbb V_{12} & ~=~ \frac{1}{\pi}\cov[\bar{Y}_g(1) N_g, N_g] - \frac{1-\pi}{\pi}\cov[E[\bar{Y}_g(1) N_g | S_g], E[N_g | S_g]]\\
& \quad + E\left[\frac{\tau(S_g)}{\pi^2} \left( E[\bar{Y}_g(1)N_g\mid S_g] - E[\bar{Y}_g(1)N_g] \right) \left( E[N_g\mid S_g] - E[N_g] \right)  \right] \\
\mathbb V_{13} & ~=~  \cov[E[\bar{Y}_g(1) N_g | S_g], E[\bar{Y}_g(0) N_g | S_g]] \\
& \quad - E\left[\frac{\tau(S_g)}{\pi(1-\pi)} \left( E[\bar{Y}_g(1)N_g\mid S_g] - E[\bar{Y}_g(1)N_g] \right) \left( E[\bar{Y}_g(0)N_g\mid S_g] - E[\bar{Y}_g(0)N_g] \right)  \right]\\
\mathbb V_{14} & ~=~  \cov[E[\bar{Y}_g(1) N_g | S_g], E[N_g | S_g]]- E\left[\frac{\tau(S_g)}{\pi(1-\pi)} \left( E[\bar{Y}_g(1)N_g\mid S_g] - E[\bar{Y}_g(1)N_g] \right) \left( E[N_g\mid S_g] - E[N_g] \right)  \right]\\
\mathbb V_{22} & ~=~ \frac{1}{\pi}\var[N_g] - \frac{1-\pi}{\pi} \var[E[N_g | S_g]] + E\left[\frac{\tau(S_g)}{\pi^2} \left( E[N_g\mid S_g] - E[N_g] \right)^2  \right]\\
\mathbb V_{23} &~=~  \cov[E[N_g | S_g], E[\bar{Y}_g(0) N_g | S_g]]  - E\left[\frac{\tau(S_g)}{\pi(1-\pi)} \left( E[N_g\mid S_g] - E[N_g] \right) \left( E[\bar{Y}_g(0)N_g\mid S_g] - E[\bar{Y}_g(0)N_g] \right)  \right]\\
\mathbb V_{24} & =  \var[E[N_g | S_g]] 
- E\left[\frac{\tau(S_g)}{\pi(1-\pi)} \left( E[N_g\mid S_g] - E[N_g] \right)^2  \right]\\
\mathbb V_{33} & ~=~ \frac{1}{1-\pi} \var[\bar{Y}_g(0) N_g] - \frac{\pi}{1-\pi} \var[E[\bar{Y}_g(0) N_g | S_g]]+ E\left[\frac{\tau(S_g)}{(1-\pi)^2}\left( E[\bar{Y}_g(0)N_g\mid S_g] - E[\bar{Y}_g(0)N_g] \right)^2  \right]\\
\mathbb V_{34} & ~=~ \frac{1}{1-\pi} \cov[\bar{Y}_g(0) N_g, N_g] - \frac{\pi}{1-\pi} \cov[E[\bar{Y}_g(0) N_g | S_g], E[N_g | S_g]] \\
& \quad + E\left[\frac{\tau(S_g)}{(1-\pi)^2}  \left( E[\bar{Y}_g(0)N_g\mid S_g] - E[\bar{Y}_g(0)N_g] \right) \left( E[N_g\mid S_g] - E[N_g] \right)  \right]\\
\mathbb V_{44} & ~=~ \frac{1}{1-\pi} \var[N_g] - \frac{\pi}{1-\pi}\var[E[N_g | S_g]]
+ E\left[\frac{\tau(S_g)}{(1-\pi)^2} \left( E[N_g\mid S_g] - E[N_g] \right)^2  \right] ~.
\end{align*}

By \eqref{eq:mainSECT1}, \eqref{eq:mainSECT4}, and the Delta method, we get
\begin{equation*}
    \sqrt{G}(\hat{\theta}_2 - \theta_2) ~=~ \sqrt{G}(h(\hat{\Theta}) - h(\Theta)) ~\stackrel{d}{\rightarrow} ~\mathcal{N}\big(~0, ~(\nabla h_0)^\prime \mathbb V(\nabla h_0)~\big) ~,
\end{equation*}
where
\begin{equation*}
    \nabla h_0 ~:=~ \left ( \frac{1}{E[N_g]}, - \frac{E[\bar{Y}_g(1)N_g]}{E[N_g]^2}, - \frac{1}{E[N_g]}, \frac{E[\bar{Y}_g(0)N_g]}{E[N_g]^2} \right )^\prime ~.
\end{equation*}

To conclude the proof, we need to show that $\sigma^2_2$ in \eqref{eq:mainSECT0} is equal to $(\nabla h_0)^\prime \mathbb V(\nabla h_0)$. This follows from additional algebraic calculations similar to those in the proof of Theorem 3.1 in \cite{bai2022pairs}.
\end{proof}

\begin{proof}[Proof of Theorem \ref{thm:limCR}]
First, we verify that $\tilde{\sigma}^2_2$ can be written as in \eqref{eq:hatsigma2exp} with \eqref{eq:hatsigma2}. To that end, let ${\bf 1}_{K}$ denote a column vector of ones of length $K$. The cluster-robust variance estimator can then be written as
\begin{align*}
G \Bigg(\sum_{1\le g \le G}X_g'X_g\Bigg)^{-1} \Bigg( \sum_{1 \le g \le G} X_g' \hat{\epsilon}_g\hat{\epsilon}_g' X_g \Bigg) \Bigg(\sum_{1 \le g \le G}X_g'X_g\Bigg)^{-1},
\end{align*}
where 
\begin{align*}
X_g~&:=~\left(\begin{array}{cc}
{\bf 1}_{|\mathcal{M}_g|} \cdot \sqrt{\frac{N_g}{|\mathcal{M}_g|}} & ~~~~{\bf 1}_{|\mathcal{M}_g|} \cdot \sqrt{\frac{N_g}{|\mathcal{M}_g|}} A_g
\end{array}%
\right)\\
\hat{\epsilon}_g ~&:=~ \left(\hat{\epsilon}_{i,g}(1)\sqrt{\tfrac{N_g}{|\mathcal{M}_g|}}A_g + \hat{\epsilon}_{i,g}(0)\sqrt{\tfrac{N_g}{|\mathcal{M}_g|}}(1 - A_g)~:~ i \in \mathcal{M}_g\right)'~.
\end{align*}
By doing some algebra, it follows that
\[\sum_{1 \le g \le G} X_g'X_g = \begin{pmatrix}
\sum_{1 \le g \le G} N_g & \sum_{1 \le g \le G} N_g A_g \\
\sum_{1 \le g \le G} N_g A_g & \sum_{1 \le g \le G} N_g A_g
\end{pmatrix},\]
and
\begin{align*}
    &\sum_{1 \le g \le G}X_g' \hat{\epsilon}_g\hat{\epsilon}_g' X_g =\\
    & \sum_{1 \le g \le G} A_g \left(\frac{N_g}{|\mathcal{M}_g|}\right)^2
 \Big(\sum_{i \in \mathcal{M}_g}\hat{\epsilon}_{i,g}(1) \Big)^2
\begin{pmatrix}
1 & 1\\
1 & 1
\end{pmatrix}
+
 \sum_{1 \le g \le G} (1-A_g)
 \left(\frac{N_g}{|\mathcal{M}_g|}\right)^2\Big(\sum_{i \in \mathcal{M}_g}\hat{\epsilon}_{i,g}(0)\Big)^2
\begin{pmatrix}
1 & 0 \\
0 & 0
\end{pmatrix}.
\end{align*}
The desired result then follows from further algebraic manipulations based on these expressions. 

Next, we prove \eqref{eq:thm3p6}. To this end, fix $a \in\{0,1\}$,  $r \in \{0, 1, 2\}$, and $l \in \{1,2\}$ arbitrarily. 
For any $1\leq g\leq G$, we can repeat arguments in the proof of Theorem \ref{theorem:mainECA} to show that
\begin{equation}
E[ N_{g}^{l}\bar{Y}_{g}( a) ^{r}] ~<~\infty ~. \label{eq:thm_3p5_1}
\end{equation}
Under Assumptions \ref{ass:assignment}--\ref{ass:QG}, Lemma \ref{lemma:iid}, and \eqref{eq:thm_3p5_1}, Lemma C.4. in \cite{bugni/canay/shaikh:2019} implies that
\begin{align}
\frac{1}{G}\sum_{1\leq g\leq G}N_{g}^{l}\bar{Y}_{g}^{r}( a) I\{A_{g}=a\}& ~\xrightarrow{p}~E[N_{g}^{l}\bar{Y}_{g}(a)^{r}]P\{A_{g}=a\} ~.\label{eq:thm_3p5_2}
\end{align}
From these results, consider the following derivation.
\begin{align}
\tilde{\sigma}_{2,G}^{2}( a) &~=~\frac{\frac{1}{G}\sum_{1\leq g\leq G}N_{g}^{2}I\{A_{g}=a\}( \frac{1}{|S_{g}|}\sum_{i\in S_{g}}\hat{\epsilon }_{i,g}(a)) ^{2}}{( \frac{1}{G}\sum_{1\leq g\leq G}{N_{g}} I\{A_{g}=a\}) ^{2}} \notag \\
&~\overset{(1)}{=}~
~\frac{\frac{1}{G}\sum_{1\leq g\leq G}N_{g}^{2}\bar{Y}_{g}( a) ^{2}I\{A_{g}=a\}}{( \frac{1}{G}\sum_{1\leq  g\leq G}{N_{g}} I\{A_{g}=a\}) ^{2}}~+~\frac{( \frac{1}{G}\sum_{1\leq g\leq G}N_{g}^{2}I\{A_{g}=a\}) ( \frac{1}{G}\sum_{1\leq g\leq G}N_{g} \bar{Y}_{g}( a) I\{A_{g}=a\}) ^{2}}{( \frac{1}{G} \sum_{1\leq g\leq G}{N_{g}}I\{A_{g}=a\}) ^{4}} \notag\\
&\quad~~~-~2\frac{( \frac{1}{G}\sum_{1\leq g\leq G}N_{g}^{2}I\{A_{g}=a\}\bar{Y} _{g}( a) ) \frac{1}{G}\sum_{1\leq g\leq G}N_{g}\bar{Y} _{g}( a) I\{A_{g}=a\}}{( \frac{1}{G}\sum_{1\leq g\leq G}{ N_{g}}I\{A_{g}=a\}) ^{3}}
\notag \\
&~\overset{(2)}{\xrightarrow{p}}~\frac{E[ N_{g}^{2}\bar{Y}_{g}( a) ^{2}]+\frac{E[ N_{g}^{2}] ( E[ N_{g}\bar{Y} _{g}( a) ] ) ^{2}}{( E[ N_{g}] ) ^{2}}  -2\frac{E[ N_{g}^{2}\bar{Y}_{g}( a) ] E[ N_{g}\bar{Y}_{g}( a) ] }{E[ N_{g} ] }}{E[ {N_{g}}] ^{2}P\{A_{g}=a\}} ~\overset{(3)}{=}~\frac{1}{P\{A_{g}=a\}}\var[\tilde{Y}_g(a)] ~,\label{eq:thm_3p5_3}
\end{align}
where (1) follows from the definition of $\hat{\epsilon}_{i,g}(a) $, (2) from \eqref{eq:thm_3p5_2} (with $r \in \{0, 1, 2\}$, and $l \in \{1,2\}$), $E[ N_{g}]>0$, $ P\{A_{g}=a\}>0$, and the CMT, and (3) from the definition of $\tilde{Y}_g(a)$. 

To conclude the proof, we note that the convergence in \eqref{eq:thm3p6} follows from \eqref{eq:thm_3p5_3} (for $a \in \{0,1\}$) and the CMT. Also, the inequality in \eqref{eq:thm3p6} follows from the fact that $\var[\tilde{Y}^{\dagger}_g(a)] = \var[\tilde{Y}_g(a)] - \var[E[\tilde{Y}_g(a)|S_g]]$ and $\var[E[\tilde{Y}_g(a)|S_g]] = E[\tilde{m}_a(S_g)^2]$ for $a \in \{0,1\}$. Some additional algebra will confirm the necessary and sufficient conditions for the inequality to become an equality.
\end{proof}

\begin{proof}[Proof of Theorem \ref{theorem:sigma2}]
The result follows by studying the probability limit of each component and then applying the CMT. Specifically, it will follow once we establish that, for $a \in \{0, 1\}$,
\begin{align}
\hat{\mu}^{\hat{Y}}_{G,a}(s) &~\stackrel{p}{\rightarrow}~ E[\tilde{Y}_g(a)|S_g = s]~,\label{eq:hat_consist_1}\\
\hat{\mu}^{\hat{Y}^2}_{G,a} &~\stackrel{p}{\rightarrow}~ E[\tilde{Y}^2_g(a)]~.\label{eq:hat_consist_2}
\end{align} 

We only show \eqref{eq:hat_consist_1}, as \eqref{eq:hat_consist_2} can be shown by similar arguments. To this end, fix $a \in \{0,1\}$ and $r \in \{0,1\}$ arbitrarily. Under Assumptions \ref{ass:assignment}--\ref{ass:QG}, Lemma C.4 in \cite{bugni/canay/shaikh:2019} implies that
\begin{equation}
    \frac{1}{G}\sum_{1 \le g \le G}\tilde{Y}_g(a)^{r} I\{A_g = a, S_g = s\} ~\stackrel{p}{\rightarrow}~ p(s) P\{A_g = a\} E[\tilde{Y}_g(a)^{r} |S_g = s]~.
    \label{eq:hat_consist_3}
\end{equation}
In turn, \eqref{eq:hat_consist_3} (for $r \in \{0,1\}$), $p(s) P\{A_g = a\}>0$, and the CMT imply that
\begin{equation}
    \hat{\mu}^{\widetilde{Y}}_{G,a}(s)~:=~ \frac{\sum_{1 \le g \le G}\tilde{Y}_g(a)I\{A_g = a, S_g = s\}}{\sum_{1 \le g \le G}I\{A_g = a, S_g = s\}} ~\stackrel{p}{\rightarrow}~ E[\tilde{Y}_g(a)|S_g = s]~.
    \label{eq:hat_consist_4}
\end{equation}

Next, consider the following derivation.
\begin{align}
&\frac{1}{G}\sum_{1 \le g \le G}(\hat{Y}_g - \tilde{Y}_g(a))I\{A_g = a, S_g = s\} \notag\\
&\overset{(1)}{=} 
 ~\frac{1}{\frac{1}{G}\sum_{1 \le j \le G}N_j}\frac{1}{G}\sum_{1 \le g \le G}N_g\bar{Y}_g(a)I\{A_g = a, S_g = s\} - \frac{1}{E[N_g]}\frac{1}{G}\sum_{1 \le g \le G}N_g\bar{Y}_g(a)I\{A_g = a, S_g = s\} \notag\\
&\quad~-~  \frac{1}{\frac{1}{G}\sum_{1 \le j \le G}N_j}\left(\frac{\frac{1}{G}\sum_{1 \le j \le G}\bar{Y}_j(a)I\{A_j = a\}N_j}{\frac{1}{G}\sum_{1 \le j \le G}I\{A_j = a\}N_j}\right)\frac{1}{G}\sum_{1 \le g \le G}N_gI\{A_g = a, S_g = s\} \notag\\
&\quad~+~ \frac{1}{E[N_g]}\frac{E[N_g\bar{Y}_g(a)]}{E[N_g]}\frac{1}{G}\sum_{1 \le g \le G}N_gI\{A_g = a, S_g = s\}~\overset{(2)}{\stackrel{p}{\rightarrow}} 0~,\label{eq:hat_consist_5}
\end{align}
where (1) follows from the definitions of $\hat{Y}_g$ and $\tilde{Y}_g(a)$, and (2) from repeated applications of the LLN, Lemma C.4 in \cite{bugni/canay/shaikh:2019}, and the CMT.

The desired result holds by the following derivation.
\begin{equation*}
    \hat{\mu}^{\hat{Y}}_{G,a}(s) ~\overset{(1)}{=}~ \hat{\mu}^{\widetilde{Y}}_{G,a}(s) + o_P(1) ~\overset{(2)}{\stackrel{p}\rightarrow}~ E[\tilde{Y}_g(a)|S_g = s]~,
\end{equation*}
where (1) follows from \eqref{eq:hat_consist_3} (for $r=0$), $p(s) P\{A_g = a\}>0$, \eqref{eq:hat_consist_5}, and the CMT, and (2) from \eqref{eq:hat_consist_4}.
\end{proof}

\subsection{Survey of Articles Published in AEJ:Applied 2018-2022}\label{sec:lit_table}
In this section, we document some relevant features of the sampling design and subsequent analyses from every article involving a cluster randomized experiment published in the AEJ: Applied from 2018 to 2022. Table \ref{table:lit_summary} summarizes our findings.\footnote{Although not formally documented, we also find that 10/15 of these papers explicitly mention the use of stratification in their experimental designs.} For each article, we record:
\begin{enumerate}
\item The sampling design used within clusters. We find that $6/15$ papers employ a sampling design such that $|\mathcal{M}_g| = \lfloor \gamma N_g \rfloor$ (all of these use $\gamma = 1$), one paper employs a sampling design such that $|\mathcal{M}_g| = k$ for $k$ fixed, and one paper features both of these designs. $7/15$ papers employ a sampling design that does not correspond to either of these designs (denoted ``other").
\item Whether the analyses in the paper discuss the use of sampling weights in their regressions and/or the effect that this may have on the parameter of interest. We find that $12/15$ papers do not feature such a discussion (however, in some cases it could be argued that an explicit discussion is not required).
\end{enumerate}

\begin{table}
\centering
\begin{tabular}{ccc}
  \hline

  Article
  & Sampling Design
  & \begin{tabular}{@{}c@{}}Discussion of Weighting?\end{tabular} \\
  \hline \hline

  \citet{gine2018together}& other & no \\

  \citet{lafortune2018role} & $|\mathcal{M}_g| = N_g$ & no \\

  \citet{mcintosh2018neighborhood} & other & yes \\
  \citet{bjorkman2019reducing} & other & no \\
  \citet{bolhaar2019job} & $|\mathcal{M}_g| = N_g$ & no \\
    \citet{busso2019causal} & other & no \\

  \citet{celhay/gertler/giovagnoli/vermeersch:2019} & $|\mathcal{M}_g| = N_g$ & no \\

  \citet{deserranno2019leader}
  & \begin{tabular}{@{}c@{}} other \\  \end{tabular}
  & yes \\

  \citet{loyalka2019does}& other & no \\

  \citet{bandiera2020women} & $|\mathcal{M}_g| = k$ & no \\

  \citet{banerjee2020governance}
  & \begin{tabular}{@{}c@{}} $|\mathcal{M}_g| = N_g$ or $|\mathcal{M}_g| = k$ \\ (depending on dep. variable) \end{tabular}
  & no \\
  \citet{mckenzie2021growing} & $|\mathcal{M}_g| = N_g$ & no \\
  \citet{mohanan2021different} & other & no \\

  \citet{muralidharan2021improving} & $|\mathcal{M}_g| = N_g$ & yes \\

  \citet{wheeler2022linkedin}
  & \begin{tabular}{@{}c@{}} $|\mathcal{M}_g| = N_g$ \\  \end{tabular}
  & no\\
   \hline \hline
\end{tabular}
\caption{\label{table:lit_summary} \small Summary of sampling designs and subsequent analyses for papers in AEJ:applied from 2018 to 2022. The second column describes the sampling scheme within cluster, where ``other'' indicates a sampling scheme other than $|\mathcal{M}_g| = \lfloor \gamma N_g \rfloor $ for some constant $\gamma$ or $|\mathcal{M}_g|=k$ for some constant $k$. The third column indicates whether the paper includes a discussion about the use of sampling weights and/or the effect that this may have on the parameter of interest.}
\end{table}

\subsection{Covariate Adjustment}\label{sec:adjust}

In this section, we present methods for linear covariate adjustment which exploit additional information in $Z_g$ and $N_g$ beyond that contained by $S_g$. In what follows, for stratum $s \in \mathcal{S}$, let $\Psi_{g,s} = \Psi_s(X_g)$ be a user-specified function of $X_g = (Z_g',N_g)'$. We note that $X_g$ could in principle include cluster-level aggregates of individual-level covariates, including intracluster means and quantiles, but we do not consider specifications that allow for individual-level covariates directly. In order to describe the adjustment procedures for $\theta_1$ and $\theta_2$ simultaneously, define the variable $V_g$ to be $\bar{Y}_g$ when estimating $\theta_1$ and to be $N_g\bar{Y}_g$ when estimating $\theta_2$, and define the variable $\nu_g$ to be $1$ when estimating $\theta_1$ and to be $N_g$ when estimating $\theta_2$. Let $I_a(s) := \{g: A_g = a, S_g = s\}$, $I(s) := \{g : S_g = s\}$. The estimators we consider are then given by
\[\hat{\theta}^{\rm adj}_{1,G} = \frac{1}{G}\sum_{1 \le g \le G}\hat{\Xi}_g~,\]
\[\hat{\theta}^{\rm adj}_{2,G} = \frac{1}{\sum_{1 \le g \le G}N_g}\sum_{1 \le g \le G}\hat{\Xi}_g~,\]
where
\[\hat{\Xi}_g = \frac{A_g(V_g - \hat{\eta}_1(S_g,X_g))}{\hat{\pi}(S_g)} - \frac{(1 - A_g)(V_g - \hat{\eta}_0(S_g, X_g))}{1 - \hat{\pi}(S_g)} + \hat{\eta}_1(S_g,X_g) - \hat{\eta}_0(S_g, X_g)\]
with $\hat{\pi}(s) = \frac{|I_a(s)|}{|I(s)|}$, and
\[\hat{\eta}_a(s, X_g) = \Psi_{g,s}'\hat{\beta}_{a,s}~,\]
where $\hat{\beta}_{a,s}$ are the regression coefficients obtained from running a regression on the observations in $I_a(s)$ of $V_g$ on a constant and $\Psi_{g,s}$.
Next, we define the corresponding variance estimators. Define
\[\hat{\sigma}^2_{1,\rm adj, G} = \frac{1}{G}\sum_{1 \le g \le G}\left[A_g\hat{\Omega}_1^2(S_g, V_g, X_g) + (1 - A_g)\hat{\Omega}_0^2(S_g, V_g, X_g) + \hat{\Omega}^2_2(S_g, V_g, X_g)\right]~,\]
\[\hat{\sigma}^2_{2,\rm adj, G} = \frac{\frac{1}{G}\sum_{1 \le g \le G}\left[A_g\hat{\Omega}_1^2(S_g, V_g, X_g) + (1 - A_g)\hat{\Omega}_0^2(S_g, V_g, X_g) + \hat{\Omega}^2_2(S_g, V_g, X_g)\right]}{\left(\frac{1}{G}\sum_{1 \le g \le G}N_g\right)^2}~,\]
where for $a \in \{0, 1\}$,
\[\hat{\Omega}_a(s,V_g,X_g) = \tilde{\Omega}_a(s, V_g, X_g) - \frac{1}{|I_a(s)|}\sum_{g \in I_a(s)}\tilde{\Omega}_a(s, V_g, X_g) - \hat{\theta}^{\rm adj}_{G}\left(\nu_g - \frac{1}{|I(s)|}\sum_{g \in I(s)}\nu_g\right)~,\]
with
\[\tilde{\Omega}_1(s,V_g,X_g) = \left(1 - \frac{1}{\hat{\pi}(s)}\right)\hat{\eta}_1(s,X_g) - \hat{\eta}_0(s,X_g) + \frac{V_g}{\hat{\pi}(s)} ~,\]
\[\tilde{\Omega}_0(s,V_g,X_g) = \left(\frac{1}{1 - \hat{\pi}(s)} - 1\right)\hat{\eta}_0(s,X_g) - \hat{\eta}_1(s,X_g) + \frac{V_g}{1 - \hat{\pi}(s)} ~,\]
and 
\[\hat{\Omega}_2(s, V_g, X_g) = \left(\frac{1}{|I_1(s)|}\sum_{g \in I_1(s)}V_g\right) -\left(\frac{1}{|I_0(s)|}\sum_{g \in I_0(s)}V_g\right) - \hat{\theta}^{\rm adj}_{G}\left(\frac{1}{|I(s)|}\sum_{g \in I(s)}\nu_g\right)~,\]
(where it is implicit that $\hat{\theta}^{\rm adj}_G$ denotes  the appropriate corresponding estimator in each case). Following the proof techniques introduced in this paper, as well as \cite{jiang2022improving} and \cite{wang2024model}, we should obtain under appropriate assumptions that 
\[\frac{\sqrt{G}(\hat{\theta}^{\rm adj}_{j,G} - \theta_j)}{\hat{\sigma}_{j,\rm adj, G}} \xrightarrow{d} N(0 ,1)~,\]
as $G \rightarrow \infty$, for $j \in \{1, 2\}$. Moreover, we conjecture that these estimators are the optimal \emph{linear}
adjustments for estimators of this form (that is, estimators which maintain linear working models for $\hat{\eta}_a(s,x)$). Finally, we conjecture that replacing $\hat{\eta}_a(s,x)$ by appropriate \emph{non-parametric} estimators would attain the efficiency bound derived in \cite{bai2023efficiency}.

\newpage
\bibliography{bibliography}

\end{document}